\renewcommand{\section}{\@startsection%
{section}%
{1}%
{0em}%
{1.7em}%
{1.2em}%
{\normalfont\large\centering\bfseries}}
\renewcommand{\@seccntformat}[1]%
{\csname the#1\endcsname.\hspace{0.5em}}
\numberwithin{equation}{section}
\newtheorem{theorem}{Theorem}[section]
\newtheorem{proposition}[theorem]{Proposition}
\theoremstyle{definition}
\newtheorem{definition}[theorem]{Definition}
\newtheorem{properties}[theorem]{Properties}
\newcommand{\mathleft}{\@fleqntrue\@mathmargin0pt}
\newcommand{\mathcenter}{\@fleqnfalse}
\begin{document}
\begin{titlepage}
\title{Uniqueness of star central configurations in the $5$-body problem
\footnotetext{%
Mathematics Subject Classification(2010):
70F10, 70F15
}
\footnotetext{%
Keywords:
Central Configuration, Regular polygon configuration, $5$-body problem, Uniqueness
}
\hspace{-8mm}
}
\author{
\textbf{Leasly A. Campa-Raymundo}
\\
Departamento de Matem\'aticas\\
Universidad Aut\'onoma Metropolitana, Iztapalapa Campus\\
San Rafael Atlixco 186, Iztapalapa\\
C.P. 09340, Ciudad de M\'exico\\
\texttt{mathcampa@gmail.com}
\\[4mm]
\textbf{Luis Franco-P\'erez}
\\
Departamento de Matem\'aticas Aplicadas y Sistemas\\
Universidad Aut\'onoma Metropolitana - Cuajimalpa\\
Av. Vasco de Quiroga 4871
C.P. 05348, Ciudad de M\'exico\\
\texttt{lfranco@cua.uam.mx}
}
\date{}
\maketitle
\vspace{-4mm}
\begin{center}
\begin{minipage}{5in}
  \centerline{{\bf Abstract}} \bigskip
\large
In this study, we present a rigorous analytical proof of the uniqueness of central configurations for the five-body problem, assuming that all five masses are equal and positioned at the vertices of a planar polygon. We consider configurations in which the bodies are equally spaced in angular position relative to the center of mass, and aim to determine whether a central configuration arises under these constraints. 
We prove that the only central configuration that satisfies these conditions occurs when the five bodies form a regular pentagon. Our approach is entirely analytical, relying on algebraic techniques rather than numerical approximations. By transforming the governing equations into a reduced system involving only two variables, we analyze the solution space over a significant and carefully bounded domain. This domain is divided into sixteen disjoint regions, within which we rule out additional solutions through explicit algebraic arguments.
Our results confirm that the regular pentagonal configuration is the only central configuration in this symmetric five-body scenario.
\normalsize
\end{minipage}
\end{center}
\thispagestyle{empty}
\end{titlepage}

\section{Introduction}\label{sec1}

The $n$-body problem is an important part of Celestial Mechanics. It involves studying the movement of $n$ point masses that interact only through universal gravitational forces. The case where $n=2$ has been solved, with the two bodies following conic trajectories. However, for $n>3$, solutions are limited to specific scenarios called homographic solutions. These solutions, also called self-similar solutions, remain constant under rotations and scaling, and are the only explicit solutions known for the $n$-body problem. A central configuration is a specific arrangement of the positions of the masses that gives rise to self-similar solutions.  Therefore, central configurations are counted modulo these types of motions.

The question of the existence and classification of central configurations is a historical problem that dates back to the eighteenth century. In 1767, Euler \cite{MR3297886} discovered three central configurations for $n=3$, where the masses lie on the same line for all time, called the collinear configurations\,. Five years later, Lagrange \cite{La} proved that an equilateral triangle, with each mass located at each vertex, is also a central configuration. Much later, in 1910, Moulton \cite{MR1503509} found that the exact count of collinear central configurations of $n$ bodies corresponds to $n!/2$. These results hold for any choice of the masses.

The task of determining the number of planar central configurations in the $n$-body problem for arbitrary positive masses is quite challenging. When $n=3$, there are five central configurations - two in an equilateral configuration and the rest in a collinear configuration. Long and Sun \cite{MR1892230} provided some partial and interesting results when considering 4 bodies with equal opposite masses $\beta>\alpha >0$\,. Additionally, Perez-Chavela and Santoprete \cite{MR2322818} demonstrated a unique convex non-collinear central configuration when the opposite masses are equal. In 2008, Albouy, Fu, and Sun \cite{MR2386651} proved that a convex central configuration is symmetric for one diagonal if and only if the masses of the two particles on the other diagonal are equal. The finiteness of central configurations is crucial for their counting, and Smale \cite{MR1754783} listed it as the 6th problem in 2002. It is important to note that the finiteness problem was settled by Albouy \cite{MR1320359} in 1995. When considering four equal masses, Hampton and Moeckel \cite{MR2207019} used symbolic calculations by computer to show that the number of central configurations is always between 32 and 8472, up to symmetry.

The polygonal central configurations, where each body is positioned at each vertex of a regular polygon, have been extensively analyzed. When all the masses are equal, then any regular $n$-gon can form a central configuration \cite{MR3469182}. Nested polygonal central configurations have also been studied\,. In 2003, Zhang and Zhou \cite{MR1963764} established some necessary and sufficient conditions for the nested polygonal solutions of planar $2n$-body problems\,. In 2012, Yu and Zhang  \cite{MR2946965} researched necessary and sufficient conditions for twisted nested central configurations formed by two twisted regular polygons with $n$ masses, respectively. The angle of twisting must be $\theta=0$ or $\theta=\pi/n$\,. Later, in 2015, they proved \cite{MR3297886} that under these conditions, the central configuration requires that the two polygons must have $n$ vertices.
In particular, Moczurad and Zgliczynski \cite{MR4017363} provided a computer-assisted classification of all central configurations with equal masses for 
$n=5,6,7$, using interval arithmetic methods. Their results include a broader set of configurations, and establish existence and symmetry properties, but are based on numerical and algorithmic approaches.

A rosette configuration is a coplanar arrangement where $n$ particles of mass $m_{1}$ are positioned at the vertices of a regular $n$-gon and $n$ particles of mass $m_{2}$ located at the vertices of another concentric $n$-gon but rotated by an angle of $\pi/n$ concerning the first. Additionally, there is an extra particle of mass $m_{0}$ at the center of mass. In 2006, Lei and Santoprete \cite{MR2247661} demonstrated that for $n \geq 3$ and every $\varepsilon>0$, with $\mu=m_0/m_1$ and $\varepsilon=m_2/m_1$, there exists a degenerate central configuration and a bifurcation.

For $n=5$, considering three bodies on the vertices of an equilateral triangle and two bodies on a perpendicular bisector, Llibre, and Mello \cite{Llibre:2008aa} showed the existence of three new families of planar central configurations\,. In 2013, Alvarez and Llibre \cite{MR3018443} characterized the planar central configurations of the $4$-body problem with masses $m_{1}=m_{2}\neq m_{3}=m_{4}$, which have an axis of symmetry\,. They showed that this $4$-body problem has exactly two classes of concave central configurations with the shape of a kite. Their proof was assisted by a computer. In 2021, Alvarez, Gassul, and Llibre \cite{MR4418916} classified the central configurations of the $5$-body problem, where the five bodies are positioned at the vertices of an equilateral pentagon with an axis of symmetry\,. They demonstrated that two unique classes of such equilateral pentagons provide central configurations: one concave equilateral pentagon and one convex equilateral pentagon, the regular one. On the other hand, in 2023, Deng and Hampton \cite{MR4502734} showed that the pentagonal configuration (not regular), with a cycle of five equal edges, has several results concerning central configurations satisfying this property\,. They also presented a computer-assisted proof of the finiteness of such configurations for any positive masses with a range of rational-exponent homogeneous potentials.

In the context of the $3$-body \cite{MR3297886} and $4$-body \cite{MR2386651} problems, it has been established that the only convex polygonal central configuration occurs when the bodies, each with equal mass, are arranged in a regular polygon. However, this is not true for the $6$-body problem \cite{MR1942919}, where configurations can consist of two nested triangles that are rotated relative to each other \cite{MR1963764}.

In this paper, we explore a specific subclass of planar central configurations for the Newtonian five-body problem with equal masses, which we call star central configurations. These are configurations where the five masses are positioned at equal angular separations around the center of mass but not necessarily at the same radial distance. Our main goal is to prove, through a fully analytic approach, that the only star central configuration for five equal masses corresponds to a regular pentagon.
It is important to emphasize that while Moczurad and Zgliczynski \cite{MR4017363} have previously classified all central configurations for five equal masses using computer-assisted interval arithmetic, their result includes a broader class of configurations and relies heavily on numerical methods. In contrast, we focus on a geometrically constrained subclass and provide a purely algebraic proof of uniqueness, offering complementary insight into the structure of central configurations. To our knowledge, no such analytical proof for this specific subclass has been presented in the existing literature.


The paper is organized as follows: in Section~\ref{preliminaries} and~\ref{basicresults}, we cover basic concepts and results related to central configurations and introduce some helpful ones for the rest of the paper\,. Following that, in Subsection~\ref{ss3.1}, we simplify the problem to a system equations with two variables, resulting in a new plausible system \eqref{sistema} within a significant domain $\hat{S}$ in the plane\,. Section~\ref{ss3.2} focuses on the main goal of the paper, which is the existence and uniqueness of star central configurations\,. To achieve this, we first establish the existence of these configurations in Theorem~\ref{th:existence}\,. Theorem~\ref{teorema} states the uniqueness for them, and the way of proving this involves dividing the domain $\hat{S}$ into 16 disjoint regions and demonstrating that the equations \eqref{sistema} are not satisfied in each region.
We would like to inform the reader that we only show explicitly the details of the calculations for the functions involved in the subregion $J_1$ defined in Subsection \ref{auxiliar}\,. This is because a similar strategy is followed for the other regions, and we want to avoid an overly extensive and tedious paper\,. The details for the rest of the subregions can be consulted oin the supplementary material.

\section{Preliminaries}  \label{preliminaries}
The $n$-body problem involves determining the motion of $ n $ point particles (without volume) in $ \mathbb{R}^d $ (where $ d = 1, 2, 3 $), each moving under the influence of Newton's law of gravitation. Assuming that the gravitational constant is $ G = 1 $, that each particle has a positive mass $ m_i $, and that the position of each particle is given by $ \mathbf{q}_i \in \mathbb{R}^d $, the equations of motion can be written as
\begin{equation}
m_i \ddot{\mathbf{q}}_i = -\sum_{j=1, \\ j \neq i}^n \frac{m_i m_j}{r_{ij}^3} (\mathbf{q}_i - \mathbf{q}_j) = \frac{\partial U}{\partial \mathbf{q}_i}, \quad i = 1, 2, \dots, n, \label{ecmov}
\end{equation}
where $ \dot{\ } $ denotes the derivative with respect to time, and $ r_{ij} = |\mathbf{q}_i - \mathbf{q}_j| $ is the Euclidean distance between particles $ i $ and $ j $. The function $ U: X \to \mathbb{R} $ is the Newtonian potential, given by
\begin{equation*}
U = \sum_{i < j} \frac{m_i m_j}{r_{ij}},
\end{equation*}
which represents the total potential energy of the system, where $ r_{ij} $ is the mutual distance between particles $ i $ and $ j $, and the position vector is $ \mathbf{q} = (\mathbf{q}_1, \dots, \mathbf{q}_n) \in \mathbb{R}^{nd} $.

We define the sets $ \triangle_{ij} = \{ \mathbf{q} \in \mathbb{R}^{nd} : \mathbf{q}_i = \mathbf{q}_j, i \neq j \} $ to represent all binary collisions in the system, and let the collision set $ \triangle $ be the union of all such sets, that is,
$
\triangle = \bigcup_{i \neq j} \triangle_{ij}.
$
Thus, the configuration space of the system, where no collisions occur, is given by
$
X = \mathbb{R}^{nd} \setminus \triangle.
$

Finally, we assume that the center of mass of the system is located at the origin. This condition is expressed as
$
m_1 \mathbf{q}_1 + \dots + m_n \mathbf{q}_n = \mathbf{0},
$
which is a first integral of the system, meaning that the total momentum of the system is conserved.

\begin{definition}\label{defII.1}
 A central configuration (CC) in the $n-$body problem for given masses and for some fixed time $t_0$, satisfies

\begin{equation}
-\ddot{\mathbf{q}}_k(t_0)=\lambda\mathbf{q}_k(t_0)\,, \quad k=1,...n,\label{cc}
\end{equation}

where $\lambda$ is a real constant.
\end{definition}    

\begin{definition}\label{ccpoligonalsemiregular}
Let $\mathbf{q}=(\mathbf{q}_1,...,\mathbf{q}_n)$ be a CC, it is named a Star Central Configuration (SCC), if each  $\mathbf{q}_i$, $i=1,...,n$, is placed at a vertex of a polygon for which the central angle $\theta$ (the angle between two consecutive ratios ) is $2\pi/n$.
\end{definition}
The SCC are easy to identify in polar coordinates because they can be written as

\begin{equation}
\mathbf{q}_i=r_i(\cos(\theta_i)\,,\sin(\theta_i))=(q_{i1},q_{i2})\,,  \quad \theta_i=\frac{2\pi (i-1)}{n}, \quad i=1,...n\,.\label{posicionesdeloscuerpos}
\end{equation}

It is well-known that if the configuration $ \mathbf{q}_0 $ is a central configuration (CC), then $ c \mathbf{q}_0 $ and $ A \mathbf{q}_0 $ are also CCs for any $ c \in \mathbb{R}^{+} $ and any $ A \in SO(3) $. In other words, the homothety (scaling) and rotations of a CC will also be CCs, and this provides a natural way to count them.  We can express the equations of motion \eqref{ecmov} in vector form as
$M \ddot{\mathbf{q}}_0 = \nabla U(\mathbf{q}_0),$
where $ M $ is the mass matrix, given by $ M = \text{diag}(m_1, \dots, m_n) $. By applying the matrix $ M $ to both sides of this equation and using the equations of motion, we obtain
$
\nabla U(\mathbf{q}_0) = \lambda \nabla I(\mathbf{q}_0),
$
which represents an optimization problem with $ \lambda $ as the Lagrange multiplier. Thus, a central configuration (CC) satisfies an optimization problem where the potential function $ U $ is minimized subject to the constraint that the moment of inertia $ I $ is constant. Specifically, the moment of inertia is given by
\begin{equation*}
I(\mathbf{q}) = \frac{1}{2} \sum_{j=1}^n m_j \mathbf{q}_j^2 = \frac{1}{4\tilde{m}} \sum_{i<j} m_i m_j r_{ij}^2,
\end{equation*}
where $ \tilde{m} = m_1 + \dots + m_n $ is the total mass and $ r_{ij} = |\mathbf{q}_i - \mathbf{q}_j| $ is the distance between particles $ i $ and $ j $.
Since both $ c \mathbf{q}_0 $ and $ A \mathbf{q}_0 $ are CCs, we can count the classes of CCs using this equivalence relation. To normalize the CCs, we set $ I = 1 $, which defines the sphere of masses $ S = \{ \mathbf{q} \in X : I = 1 \} $. That is, $ \mathbf{q}_0 $ is a normalized CC if and only if it is a critical point of the restriction of $ U(\mathbf{q}) $ to the set $ S $. It is important to note that the restriction of the potential $ U $ to $ S $ always attains its minimum at some $ \mathbf{q}_0 \in S $. The function $ I U^2 $ is homogeneous of degree zero and depends only on the mutual distances between particles. This property makes $ I U^2 $ invariant under rotations and scaling, which reflects how the shape of the configuration changes under such transformations. As a result, a critical point of $ I U^2 $ is a central configuration (CC), and vice versa. This function is referred to as the configuration measure.

\section{Basic results}\label{basicresults}

Let us consider the following set,
\begin{equation*}
\hat{S}=\left\{\mathbf{q}\in X\colon \mathbf{q}_1=(1,0)\right\},
\end{equation*}
which satisfies similar properties as the set $S$ mentioned in the previous Section \ref{preliminaries}.

\begin{proposition}
A configuration $\mathbf{q}\in S$ is a CC if and only if $|\mathbf{q}_1|^{-1} \bf{q}$ is a critical point of $U|_{\hat{S}}$.
\end{proposition}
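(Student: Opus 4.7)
The strategy is to exploit the variational characterization of central configurations (CCs) recalled in Section~\ref{preliminaries} together with the scale and rotation invariance of the $n$-body problem. Both $S=\{\mathbf{q}\in X\colon I(\mathbf{q})=1\}$ and $\hat{S}=\{\mathbf{q}\in X\colon\mathbf{q}_1=(1,0)\}$ are codimension-$2$ slices of the configuration space $X$---one fixing size, the other fixing size together with orientation. My plan is to show they are equivalent cross-sections of the scale--rotation orbits, so that critical points of $U$ on one correspond to critical points of $U$ on the other.

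For the forward direction, assume $\mathbf{q}\in S$ is a CC, so that $\nabla U(\mathbf{q})=\lambda\,M\mathbf{q}$ for some $\lambda\in\mathbb{R}$. The homogeneity of $U$ (degree $-1$) and $I$ (degree $2$) implies that the rescaled configuration $\tilde{\mathbf{q}}:=|\mathbf{q}_1|^{-1}\mathbf{q}$ satisfies $|\tilde{\mathbf{q}}_1|=1$ and obeys the same Lagrange relation with a rescaled multiplier $\lambda'=|\mathbf{q}_1|^{3}\lambda$. Following up with a rotation $R\in SO(2)$ that aligns $\tilde{\mathbf{q}}_1$ with $(1,0)$, the resulting configuration lies in $\hat{S}$ and retains the same form of the CC relation, since $R$ commutes with the mass matrix $M$ and preserves $U$. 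The final step is to translate this Lagrange-multiplier relation into the assertion that the configuration is a critical point of $U|_{\hat{S}}$. The converse is handled by running the same sequence backwards: extract a CC relation from a critical point of $U|_{\hat{S}}$, and then rescale the configuration so that $I=1$, landing back in $S$.

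The main obstacle I anticipate lies in the identification of the Lagrange relation with first-order criticality on $\hat{S}$. The tangent space to $\hat{S}$ at $\tilde{\mathbf{q}}$ is the hyperplane $\{v\colon v_1=0\}$, so naive criticality reads only $\partial_{\mathbf{q}_k}U(\tilde{\mathbf{q}})=0$ for $k\geq 2$, whereas the full CC relation $\nabla U=\lambda' M\mathbf{q}$ comprises $2n$ scalar equations. The gap must be closed using the ambient structure: the Euler identity $\sum_k\mathbf{q}_k\cdot\partial_{\mathbf{q}_k}U=-U$ coming from homogeneity and the translation invariance $\sum_k\partial_{\mathbf{q}_k}U=0$ coming from the center-of-mass being at the origin should together supply the two missing components of the CC equation at body $1$ and fix the multiplier $\lambda'$ intrinsically in terms of $U$ and $I$. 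Carrying out this bookkeeping carefully, so as not to collapse into a spurious degenerate case such as $\lambda'=0$, is where the main technical work lies; I would expect the proof to rely on implicitly including the center-of-mass condition in $X$ and on verifying transversality of $\hat{S}$ to the scale/rotation orbits, analogous to the role played by $S$ in the discussion of Section~\ref{preliminaries}.
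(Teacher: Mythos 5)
Your computational core coincides with the paper's: rescale by $|\mathbf{q}_1|^{-1}$, rotate so that the first body lands at $(1,0)$, and use the homogeneities of $U$ (degree $-1$) and $I$ (degree $2$) to convert $\nabla U(\mathbf{q})=\lambda\nabla I(\mathbf{q})$ into $\nabla U(\bar{\mathbf{q}})=\bar{\lambda}\nabla I(\bar{\mathbf{q}})$ with $\bar{\lambda}=\lambda|\mathbf{q}_1|^{3}$. That part is fine, and you are also right to flag that the remaining step --- passing from this Lagrange relation to ``critical point of $U|_{\hat{S}}$'' --- is where the content lies; the paper simply asserts it.

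However, the repair you sketch is aimed at the wrong place, and as described it would fail. You locate the discrepancy ``at body $1$,'' expecting Euler's identity and translation invariance to supply two missing scalar equations. In fact, literal first-order criticality of $U$ on the affine slice $\{\mathbf{q}_1=(1,0)\}$ forces $\partial_{\mathbf{q}_k}U=0$ for every $k\ge 2$ (or $\partial_{\mathbf{q}_k}U=m_k\mathbf{c}$ for a single vector $\mathbf{c}$ if the center-of-mass condition is built into $X$), which contradicts the CC relation $\partial_{\mathbf{q}_k}U=\bar{\lambda}m_k\mathbf{q}_k$ except in degenerate cases: the problem is at bodies $2,\dots,n$, not at body $1$. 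Euler's identity and translation invariance only control the pairing of $\nabla U$ against the dilation and translation directions, a $3$- (or $4$-)dimensional space, while the tangent space of $\hat{S}$ has dimension $2n-2$; they cannot bridge the gap. The identification only makes sense if ``critical point of $U|_{\hat{S}}$'' is read as criticality of $U$ on $\hat{S}$ subject to fixed $I$ (equivalently, criticality of the configuration measure $IU^{2}$ on $\hat{S}$), which is exactly what the relation $\nabla U(\bar{\mathbf{q}})=\bar{\lambda}\nabla I(\bar{\mathbf{q}})$ expresses and is all the paper's own proof actually establishes. So: state that interpretation explicitly and stop at the Lagrange relation, rather than trying to prove unconstrained criticality on the slice, which is false. (The converse direction, which you propose to get by ``running the argument backwards,'' should then also be phrased at the level of the relation $\nabla U=\bar{\lambda}\nabla I$ and undone by scaling back to $I=1$.)
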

\begin{proof}
We consider $\mathbf{q}\in S$ a CC\,. Let us define $\bar{\mathbf{q}}=R|\mathbf{q}_1|^{-1}\mathbf{q}$, with $\bar{\mathbf{q}}=(\bar{\mathbf{q_{1}}},...,\bar{\mathbf{q_{n}}})$ and $R$ a rotation matrix such that $\bar{\mathbf{q}}_1=(1,0)$ and then,

\begin{align*}
\nabla I(\mathbf{q})&=(m_1\mathbf{q}_1,...,m_n\mathbf{q}_n)=
\left(R^{-1}m_1|\mathbf{q}_1|\bar{\mathbf{q}_1},...,R^{-1}m_n|\mathbf{q}_1|\bar{\mathbf{q}_n}\right)
=R^{-1}|\mathbf{q}_1|\nabla I(\bar{\mathbf{q}})\,,\\
\nabla U(\mathbf{q})&=-\sum{m_im_j} \frac{(\mathbf{q}_i-\mathbf{q}_j)}{|\mathbf{q}_i-\mathbf{q}_j|^3}=-\sum{m_im_j} \frac{R^{-1}|{\mathbf{q}}_1|(\bar{\mathbf{q}}_i-\bar{\mathbf{q}}_j)}{|R^{-1}|^{3}|\mathbf{q}_1|^3|\bar{\mathbf{q}}_i-\bar{\mathbf{q}}_j|^3}=\frac{R^{-1}}{|\mathbf{q}_1|^2}\nabla U(\bar{\mathbf{q}})\,.
\end{align*}
Thus $\lambda \nabla I(\mathbf{q})=\lambda |\mathbf{q}_1|\nabla I(\bar{\mathbf{q}})=|\mathbf{q}_1|^{-2}\nabla U(\bar{\mathbf{q}})=\nabla U(\mathbf{q})$\,. 
This allows us to write this equation as $\bar{\lambda}\nabla I(\bar{\mathbf{q}})=\nabla U(\bar{\mathbf{q}})$ where we name $\bar{\lambda}=\lambda|\mathbf{q}_1|^3$\,. Therefore  $\bar{\mathbf{q}}$ is a critical point of $U$ restricted to $\hat{S}$.
\end{proof}

\begin{proposition}
The restriction of the potential $U$ to $\hat{S}$ always attains its minimum at some $\mathbf{q}\in \hat{S}$
\end{proposition}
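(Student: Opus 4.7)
The plan is a Weierstrass-type existence argument: I will combine continuity and the collision blow-up of $U$ with the scaling--rotation correspondence used in the previous proposition, so as to transfer the minimization problem on the non-compact slice $\hat{S}$ to one on the compact inertia sphere $S=\{\mathbf{q}\in X : I(\mathbf{q})=1\}$.

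First I would collect the elementary facts. The set $\hat{S}$ is relatively closed in $X$, being cut out by the linear equations $\mathbf{q}_1=(1,0)$. The potential $U$ is continuous and strictly positive on $\hat{S}$, and $U(\mathbf{q})\to +\infty$ as $\mathbf{q}$ approaches the collision set $\triangle$. Consequently any minimizing sequence in $\hat{S}$ stays uniformly away from $\triangle$, so the only obstruction to compactness is the escape to infinity of bodies other than the first. To rule this out, I would pull each $\mathbf{q}\in\hat{S}$ back to the sphere via $\mathbf{p}(\mathbf{q})=I(\mathbf{q})^{-1/2}\mathbf{q}\in S$; the homogeneity $U(c\mathbf{q})=c^{-1}U(\mathbf{q})$ together with rotational invariance yields the identity $U(\mathbf{q})=I(\mathbf{q})^{-1/2}U(\mathbf{p}(\mathbf{q}))$. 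Since $S$ is compact and $U|_{S\setminus\triangle}$ attains a strictly positive minimum $U_0$ at some $\mathbf{p}_\ast$ (with $(\mathbf{p}_\ast)_1\neq 0$, as one may assume after a rotation and mild genericity), the scaling--rotation procedure of the previous proposition produces a point $\bar{\mathbf{q}}_\ast\in\hat{S}$. A minimizing sequence $\{\mathbf{q}^{(k)}\}\subset\hat{S}$ then pulls back to $\{\mathbf{p}^{(k)}\}\subset S$, which subconverges by compactness to a limit off $\triangle$; tracking the inertia $I(\mathbf{q}^{(k)})^{1/2}$ through the identity above transfers this convergence back to $\hat{S}$ and identifies the limit with the desired minimizer.

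The main obstacle is precisely the non-compactness of $\hat{S}$: fixing only the first body at $(1,0)$ leaves the remaining bodies free to drift to infinity, and along such directions the potential $U$ does not by itself blow up. The decisive ingredient is therefore the inertia pullback to $S$, which converts this affine slice into a problem on the compact sphere, after which the argument reduces to a direct application of the Weierstrass theorem.
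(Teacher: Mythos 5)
Your Weierstrass strategy on $\hat{S}$ does not close, and the failure occurs exactly at the step you summarize as ``tracking the inertia through the identity.'' The identity $U(\mathbf{q})=I(\mathbf{q})^{-1/2}U(\mathbf{p}(\mathbf{q}))$ together with the positive lower bound $U\geq U_{0}>0$ on $S$ forces any sequence in $\hat{S}$ with $U(\mathbf{q}^{(k)})\to\inf_{\hat{S}}U$ to satisfy either $I(\mathbf{q}^{(k)})$ bounded or $I(\mathbf{q}^{(k)})\to\infty$; and since one can send $\mathbf{q}_{2},\dots,\mathbf{q}_{n}$ to infinity in distinct directions while keeping $\mathbf{q}_{1}=(1,0)$, making every mutual distance blow up, one has $\inf_{\hat{S}}U=0$ and it is the second alternative that occurs. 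The minimizing sequence for $U|_{\hat{S}}$ escapes to infinity, its pullback to $S$ may well converge, but the scale factors $I(\mathbf{q}^{(k)})^{1/2}$ diverge and nothing converges back inside $\hat{S}$. The inertia pullback re-parametrizes the escape; it does not rule it out, because the functional you are minimizing on $\hat{S}$ is $U$, not a scale-invariant quantity. Pushed to completion, your argument would show that the literal infimum of $U$ over $\hat{S}$ is not attained. (You in fact concede the key point yourself when you note that $U$ does not blow up along escaping directions.)

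The content the paper needs, and what its own short proof supplies, concerns the configuration measure $IU^{2}$, which is homogeneous of degree zero and rotation invariant. The minimizer $\mathbf{q}^{*}$ of $U|_{S}$ (existence there is the standard compactness-plus-collision-coercivity argument, since $I\equiv 1$ fixes the scale) is a minimizer of $IU^{2}$ on all of $X$ because $IU^{2}=U^{2}$ on $S$ and every ray and rotation orbit meets $S$; the paper then simply transports it into $\hat{S}$ by setting $\bar{\mathbf{q}}^{*}=R|\mathbf{q}^{*}_{1}|^{-1}\mathbf{q}^{*}$, which leaves $IU^{2}$ unchanged. No minimizing sequence on $\hat{S}$ is ever taken: existence is imported from $S$ and pushed forward by the group action. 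If you wish to keep your framework, replace $U$ by $IU^{2}$ (equivalently $I^{1/2}U$) as the functional on $\hat{S}$; your own pullback identity then gives $I(\mathbf{q})U(\mathbf{q})^{2}=U(\mathbf{p}(\mathbf{q}))^{2}$, the problem genuinely descends to the compact picture, and the argument becomes correct. You should also address, as the paper does not, why the minimizer on $S$ has $\mathbf{q}^{*}_{1}\neq 0$, since a rotation cannot move a body off the origin and the rescaling $|\mathbf{q}^{*}_{1}|^{-1}$ requires it.
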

\begin{proof}

We know that $ U $ restricted to $ S $ attains a minimum at $ \mathbf{q}^* $. We then define $ \bar{\mathbf{q}}^* = R | \mathbf{q}_1^* |^{-1} \mathbf{q}^* $, which represents a rescaling and rotation of $ \mathbf{q}^* $. In this way, $ \mathbf{q}^* $ is a minimum of $ I U^2 $, since $ I U^2 $ is a zero-degree homogeneous function. Therefore, $ \bar{\mathbf{q}}^* $ is a minimum of $ U $ restricted to the set $ \hat{S} $.
\end{proof}		

\section{$5$-body problem}\label{ss3.1}
We are considering five bodies of equal mass forming a SCC in $ \hat{S} $. By following the equations in \eqref{cc}, we obtain the system

\begin{equation}
\lambda \mathbf{q}_i = m \sum_{i \neq j}^5 \frac{\mathbf{q}_i - \mathbf{q}_j}{r_{ij}^3}. \label{7}
\end{equation}

Writing the positions in polar coordinates as specified in \eqref{posicionesdeloscuerpos}, with $ r_1 = 1 $ and the center of mass fixed at the origin, simplifies the system \eqref{7} to the following equations:

\begin{equation}
\begin{aligned}
\lambda_{ik}(r_3, r_5) &= \frac{m}{q_{ik}(r_3, r_5)} \sum_{i \neq j} \frac{q_{ik}(r_3, r_5) - q_{jk}(r_3, r_5)}{r_{ij}^3}, \quad k = 1, 2. \quad q_{ik} \neq 0\,.\\
\lambda_{12}(r_3, r_5) &= 0, 
\end{aligned} \label{sistema}
\end{equation}

We express the parameter $ \lambda $ as a function that depends on $ r_3 $ and $ r_5 $. According to Definition \ref{defII.1}, we have $ \lambda_{ik} = \lambda_{lm} $ for all $ i, k, l, m = 1, 2 $. By equating all the equations in \eqref{sistema}, except for $ \lambda_{12}(r_3, r_5) $, we obtain the solution we are seeking. The domain $\hat{S}$ for the system \eqref{sistema} is described by the set 
\begin{equation*}
\hat{S}=\left\{(r_3,r_5)\in\mathbb{R}^2 \colon r_3>0, r_5>0, r_5>r_3-b/2,r_5>\dfrac{ar_3-a}{2}\right\}\,,
\end{equation*}
where $a=\sqrt{5}+1$ and $b=\sqrt{5}-1$, (see Fig.~\ref{dominio_fig}).
\begin{figure}[ht]
	\centering
		\includegraphics[scale=0.8]{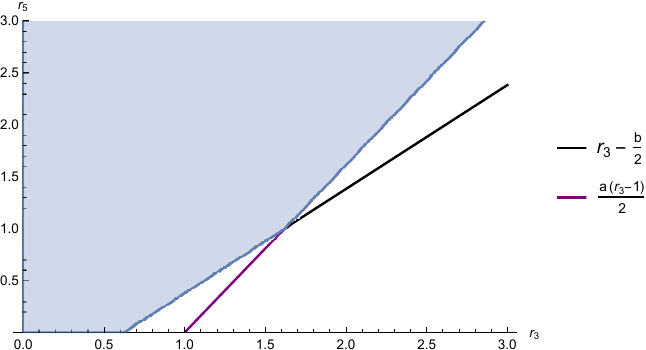}
		\caption{domain $\hat{S}$}
		\label{dominio_fig}
\end{figure}

\section{Uniqueness}
\label{ss3.2}

The main objective of this paper is to demonstrate that a solution to the SCC within the domain $ \hat{S} $, which satisfies equation \eqref{sistema}, corresponds exclusively to a regular pentagon and not to any other configuration. This finding leads to two important conclusions: first, it confirms that the regular pentagon is indeed a valid solution, and second, it establishes the uniqueness of this solution.

\begin{theorem}\label{th:existence}
The point $(1,1)$ is a minimum for $IU^{2}$.
\end{theorem}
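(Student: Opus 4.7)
The plan is to show that $(1,1)$ is a non-degenerate critical point of $IU^2$ (viewed as a function of $(r_3,r_5)$ on the SCC family) whose Hessian is positive definite, which is sufficient for a local minimum.

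First I would verify that $(r_3,r_5)=(1,1)$ corresponds to the regular pentagon. Substituting $r_3=r_5=1$ into the two center-of-mass equations (which force $r_2$ and $r_4$ to be affine functions of $r_3,r_5$) yields $r_2=r_4=1$, so all five bodies lie on the unit circle at the vertices of a regular pentagon. Since the regular $n$-gon with $n$ equal masses is a classical central configuration (see \cite{MR3469182}), it satisfies \eqref{cc} and is therefore a critical point of $IU^2$ restricted to the two-parameter SCC family.

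Next, to show this critical point is a local minimum, I would compute the $2\times 2$ Hessian of $IU^2$ at $(1,1)$ and verify that it is positive definite. To simplify the calculation, I would exploit the reflection across the $x$-axis, which swaps $\mathbf{q}_2\leftrightarrow\mathbf{q}_5$ and $\mathbf{q}_3\leftrightarrow\mathbf{q}_4$ and therefore acts on the parameter plane as an affine involution $\sigma$ fixing $(1,1)$. Because $I$, $U$, and hence $IU^2$, are $\sigma$-invariant, the Hessian at $(1,1)$ commutes with $D\sigma(1,1)$; the latter turns out to have eigenvalues $\pm 1$, so the Hessian is block-diagonal in the $\pm 1$ eigenbasis, and one only needs to check positivity of the second derivative of $IU^2$ along the symmetric eigendirection and along the antisymmetric eigendirection separately.

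The main obstacle will be the sheer algebraic bookkeeping: even after the symmetry reduction, one has to substitute explicit linear expressions for $r_2$ and $r_4$ into the ten mutual distances $r_{ij}$, differentiate $IU^2$ twice, and simplify using the exact values of $\cos(2\pi/5)$ and $\cos(4\pi/5)$. These computations reduce to finite combinations of rationals and $\sqrt{5}$, and the two resulting eigenvalues come out strictly positive, giving the claimed local minimum.
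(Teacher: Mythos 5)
Your proposal follows essentially the same route as the paper: verify that $(1,1)$ is a critical point (the paper does this by checking directly that it solves system \eqref{sistema}) and then show that the $2\times 2$ Hessian of $IU^{2}$ at $(1,1)$ is positive definite, which the paper does by computing it explicitly and checking the leading principal minors. One caveat on your symmetry shortcut: the reflection acts on $(r_{3},r_{5})$ by a non-orthogonal linear involution $D\sigma$, so the Hessian satisfies the congruence $D\sigma^{T}HD\sigma=H$ rather than commuting with $D\sigma$ (a symmetric $H$ commuting with this $D\sigma$ would be forced to be a multiple of the identity); the cross term in the $\pm1$ eigenbasis still vanishes under the congruence, so your reduction survives, but the commutation claim as literally stated is false.
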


\begin{proof}
A solution for \eqref{sistema} is $(r_3,r_5)=(1,1)$, thus is a critical point of the configuration measure $IU^{2}$\,. Calculating the Hessian at $(1,1)$ gives us 
\[ H(1,1) =
\left( \begin{array}{cc}
\frac{5}{4}(25 + 13 \sqrt{5}) & -\frac{5}{8}(25 + 13 \sqrt{5})\\[2mm]
-\frac{5}{8}(25 + 13 \sqrt{5}) & \frac{5}{4}(5 + 7 \sqrt{5})
\end{array} \right)\,. \]
The determinant of $H(1,1)$ is equal to $125(85 + 31 \sqrt{5})/32>0$, and the principal minors of $H(1,1)$ are greater than zero\,. Therefore, $IU^2$ is a convex function in a neighborhood of $(1,1)$, which is a minimal point for $IU^2$.
\end{proof}

\begin{theorem}\label{teorema}
If $(r_{3},r_{5})\neq(1,1)$, there is no solution for the system \eqref{sistema}.
\end{theorem}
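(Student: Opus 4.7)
The plan is to establish Theorem~\ref{teorema} by exhausting the domain $\hat{S}$ via a carefully engineered case analysis. First, I would partition $\hat{S}$ into $16$ disjoint subregions whose closures cover $\hat{S}$ and whose only common point (aside from the regular pentagon) avoids $(1,1)$. The natural guide for the partition is the system \eqref{sistema} itself: one should identify the sign-separating curves of the differences $\lambda_{ik}(r_3,r_5)-\lambda_{lm}(r_3,r_5)$ (these are the actual equations we want to rule out), as well as the obvious splitting lines $r_3=1$, $r_5=1$, $r_3=r_5$ (the last of which reflects the $r_3 \leftrightarrow r_5$ symmetry coming from reflection across the axis through body $1$). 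The boundaries of $\hat{S}$ recorded in Figure~\ref{dominio_fig} provide the remaining constraining curves, so the $16$ subregions arise as intersections of a small number of algebraic half-planes and half-spaces.

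Second, in each subregion I would select one (or two) \emph{witness} equations from \eqref{sistema}, and show that the corresponding difference $\lambda_{ik}-\lambda_{lm}$ is strictly of one sign throughout that subregion. To do this I would clear the positive denominators $r_{ij}^{3}$ to reduce the question to a polynomial inequality in $(r_3,r_5)$ with coefficients in $\mathbb{Z}[\sqrt{5}]$, then either factor the resulting polynomial into pieces of controllable sign or bound each summand above and below using the region-defining inequalities. Because all masses are equal and many $r_{ij}$ coincide or differ only by reflection, the number of genuinely distinct expressions to analyze is substantially smaller than the number of equations.

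Following the authors' convention made in Section~\ref{sec1}, I would present the computation explicitly only for the subregion $J_1$ defined in Subsection~\ref{auxiliar}: identify the witness equation, exhibit the factorization or bound that forces the sign, and conclude that \eqref{sistema} cannot hold in $J_1$. For the remaining $15$ subregions I would record, in a concise table, the witness equation used in each case and defer the analogous algebraic verifications to the supplementary material, noting that several subregions are mapped to others by the symmetry $r_3 \leftrightarrow r_5$ and therefore need not be treated independently.

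The main obstacle is genuinely algebraic rather than conceptual: after clearing denominators, each difference $\lambda_{ik}-\lambda_{lm}$ becomes a high-degree polynomial in $r_3,r_5$ over $\mathbb{Z}[\sqrt{5}]$, and the central difficulty is to design a partition fine enough that in every piece some witness polynomial is of constant sign, while remaining coarse enough that the sign verification is doable by hand (or at worst by symbolic factorization). The existence Theorem~\ref{th:existence} guarantees that $(1,1)$ is a critical point, so the only way the case analysis can succeed is if $(1,1)$ lies on the common boundary of several subregions and is excluded as an interior solution from each; verifying this at the boundaries (where the sign bounds degenerate) is the subtle local step, and it is handled by the strict convexity of $IU^{2}$ near $(1,1)$ established in Theorem~\ref{th:existence}.
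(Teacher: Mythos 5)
Your high-level architecture --- sixteen subregions covering $\hat{S}$, and in each one a witness pair $\lambda_{ik},\lambda_{lm}$ whose difference has constant sign --- is exactly the paper's plan, and your choice to detail only $J_1$ matches the authors' stated convention. The divergence, and the genuine gap, is in how you propose to certify the sign. Clearing the denominators $r_{ij}^{3}$ does not produce ``a polynomial inequality in $(r_3,r_5)$ with coefficients in $\mathbb{Z}[\sqrt{5}]$'': each $r_{ij}^{3}=(r_{ij}^{2})^{3/2}$ is a $3/2$-power of a quadratic form in $(r_3,r_5)$, so a difference $\lambda_{ik}-\lambda_{lm}$ is a sum of many terms carrying pairwise distinct radicals $\bigl(P_{ij}(r_3,r_5)\bigr)^{-3/2}$. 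Eliminating these by iterated squaring or resultants is not a by-hand computation and, worse, destroys the sign information you need by introducing spurious branches. The paper never attempts this; instead it freezes one variable through a parameter (e.g.\ $r_3=b/(2+\eta)$ on $J_1$), splits each $\lambda$ into its individual summands $e_{i_{\eta}}, f_{i_{\eta}}$, proves monotonicity and convexity statements for these summands and for their $\eta$-derivatives (Properties \ref{es}--\ref{spfs} and Propositions \ref{aux2}--\ref{auxextra1}), and then separates the two families by chains of piecewise-constant and tangent-line bounds evaluated numerically at finitely many points. Any successful write-up must supply machinery of this analytic kind; ``factor the resulting polynomial into pieces of controllable sign'' is not an available move here.

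Two further concrete problems. First, there is no $r_3\leftrightarrow r_5$ symmetry to exploit: the reflection through body $1$ sends $(r_2,r_3,r_4,r_5)\mapsto(r_5,r_4,r_3,r_2)$, and once $r_2,r_4$ have been eliminated by the center-of-mass constraint the reduced system in $(r_3,r_5)$ admits no such involution --- consistent with the fact that the paper's sixteen regions are not symmetric under the swap and each is treated independently with its own witness pair. Second, your plan to handle the degeneration of the bounds near $(1,1)$ via the convexity in Theorem~\ref{th:existence} is not how the paper proceeds: the regions adjacent to $(1,1)$ carry strict inequalities valid up to, but excluding, that corner point, and the cell $J_{16}$ containing it on its boundary is disposed of by a separate argument (Proposition~\ref{proposicionJ16}). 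If you do want to invoke Theorem~\ref{th:existence} for this purpose, you must additionally argue that every solution of \eqref{sistema} is a critical point of the restricted configuration measure and quantify the size of the neighborhood on which the Hessian remains positive definite, neither of which appears in your outline.
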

To prove Theorem \ref{teorema}, we divide the domain $\hat{S}$ in 16 regions, listed as follows (see Fig.~\ref{fig2})\,,

\begin{equation*}
\begin{split}
J_1&=\left\{(r_3,r_5)\in \mathbb{R}^2 \colon 0<r_3\leq \dfrac{b}{2}, 0<r_5\leq \frac{b}{2} \right\}\\
J_2&=\left\{(r_3,r_5)\in \mathbb{R}^2 \colon 0<r_3\leq \frac{b}{2}, b/2<r_5\leq 1 \right\}\\
J_3&=\left\{(r_3,r_5)\in \mathbb{R}^2 \colon b/2<r_3<1, \frac{2-b}{2}\leq r_5< \frac{b}{2} \right\} \\
J_4&=\left\{(r_3,r_5)\in \mathbb{R}^2 \colon \frac{b}{2}<r_3<r_5+\frac{b}{2}, 0\leq r_5< \frac{2-b}{2} \right\}\\
J_5&=\left\{(r_3,r_5)\in \mathbb{R}^2 \colon 1<r_3<r_5+\frac{b}{2}, \frac{2-b}{2}\leq r_5< \frac{b}{2} \right\}\\
J_6&=\left\{(r_3,r_5)\in \mathbb{R}^2 \colon b<r_3<r_5+\frac{b}{2}, r_3-\frac{b}{2}\leq r_5< 1 \right\} \\
J_7&=\left\{(r_3,r_5)\in \mathbb{R}^2 \colon \frac{b}{2}<r_3<1, \frac{b}{2}\leq r_5<1 \right\}\\
J_8&=\left\{(r_3,r_5)\in \mathbb{R}^2 \colon 1<r_3<b, \frac{b}{2}\leq r_5<1 \right\}\\
J_9&=\left\{(r_3,r_5)\in \mathbb{R}^2 \colon 0<r_3<1, 1\leq r_5<\infty \right\}\\
J_{10}&=\left\{(r_3,r_5)\in \mathbb{R}^2 \colon 1<r_3<\dfrac{2}{b}, 1+b\leq r_5<\infty \right\}
\end{split}
\end{equation*}
\begin{equation*}
\begin{split}
J_{11}&=\left\{(r_3,r_5)\in \mathbb{R}^2 \colon \frac{b}{2}<r_3<\dfrac{2}{a}r_{5}+1, 1\leq r_5<3.036	 \right\}
\\
J_{12}&=\left\{(r_3,r_5)\in \mathbb{R}^2 \colon 1.3<r_3<\dfrac{2}{b}, 1\leq r_5<2.05	 \right\}\\
J_{13}&=\left\{(r_3,r_5)\in \mathbb{R}^2 \colon 1<r_3<1.3, 1.4\leq r_5<2.05	 \right\}\\
J_{14}&=\left\{(r_3,r_5)\in \mathbb{R}^2 \colon 1<r_3<2/b, 2.05\leq r_5<1+b	 \right\}\\
J_{15}&=\left\{(r_3,r_5)\in \mathbb{R}^2 \colon 2/b<r_3<\dfrac{2}{a}r_{5}+1, 3.036\leq r_5<\infty	 \right\}\\
J_{16}&=\left\{(r_3,r_5)\in \mathbb{R}^2 \colon 1<r_3<1.3, 1<r_5<1.4 	 \right\}
\end{split}
\end{equation*}

\begin{figure}[ht]
	\centering
		\includegraphics[scale=0.48]{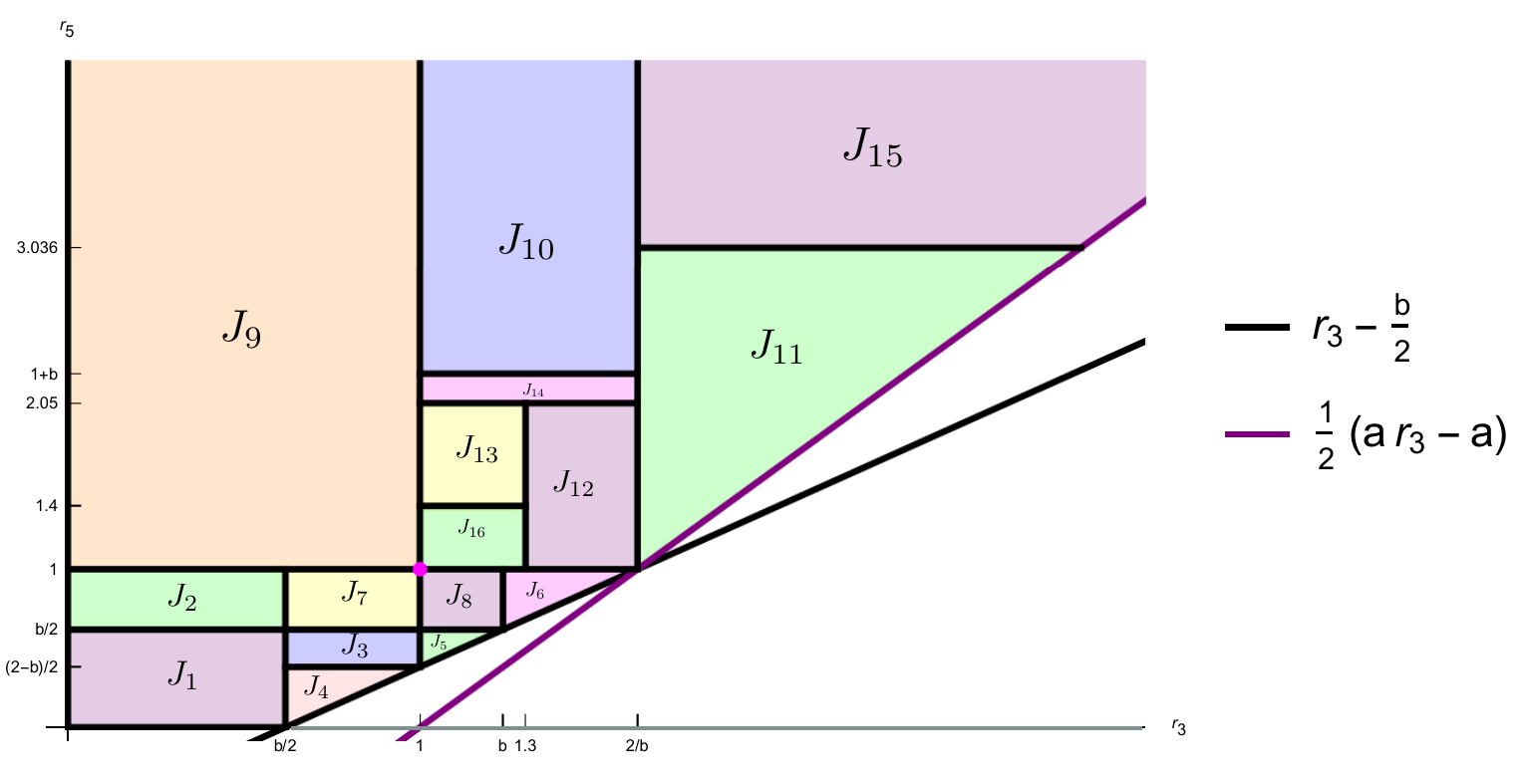}
		\caption{regions $J_{1},...,J_{16}$.}
		\label{fig2}
\end{figure}

We assert that there is no region $J_i$ where the Central Configuration equations are satisfied\,. Specifically, for each $J_n$ where $n=1,...,16$, there exist a pair of equations $\lambda_{ij}(r_3,r_5)$ and $\lambda_{kl}(r_3,r_5)$, such that $\lambda_{ij}(r_3,r_5)\neq\lambda_{kl}(r_3,r_5)$ for all $r_3$ and $r_5$ in $J_n$\,. Because of this, we will begin by proving fundamental properties to achieve the desired result.

\subsection{Properties of auxiliary functions of $J_1$}\label{auxiliar}

\begin{definition}
A family of functions $f_{\alpha}(r)$ with $\alpha\in I$ is strictly increasing with respect to the parameter $\alpha$ if for $\alpha_{1}<\alpha_{2}$, it satisfies $f_{\alpha _{1}}(r)<f_{\alpha _{2}}(r)$ for all $r$ in the appropriate domain.

A family of functions $f_{\alpha}(r)$ with $\alpha\in I$ is strictly decreasing with respect to the parameter $\alpha$ if for $\alpha_{1}<\alpha_{2}$, it satisfies $f_{\alpha _{1}}(r)>f_{\alpha _{2}}(r)$ for all $r$ in the appropriate domain.
\end{definition}

For the region $J_{1}$, we define the following family of functions 
\begin{equation*}
\lambda_{ik_{\eta}}(r_{5})\colonequals\lambda_{ik}\left(\frac{b}{2+\eta},r_{5}\right)\,,  \quad \eta\in(0,\infty)\,.
\end{equation*}
 
We are currently analyzing the families of functions $\lambda_{11_{\eta}}(r_{5})$ and $\lambda_{31_{\eta}}(r_{5})$, which can be expressed as $\lambda_{11_{\eta}}(r_{5})=e_{1_{\eta}}(r_{5})+e_{2_{\eta}}(r_{5})+e_{3_{\eta}}(r_{5})$ and $\lambda_{31_{\eta}}(r_{5})=f_{1_{\eta}}(r_{5})+f_{2_{\eta}}(r_{5})+f_{3_{\eta}}(r_{5})+f_{4_{\eta}}(r_{5})$\,. For into $J_{1}$, the following properties are needed to prove Propositions \ref{aux2} to \ref{auxextra1} and can be verified through straightforward computation\,. 

\begin{properties}\label{es}{$ $}
\begin{itemize}
\item The family of function $e_{1_{\eta}}(r_{5})$ is strictly increasing\,. 
\item The family of function $e_{2_{\eta}}(r_{5})$ is strictly decreasing\,. 
\item The family of function $e_{3_{\eta}}(r_{5})$ is strictly decreasing for $\eta\in(0.52,\infty)$.
\item The family of function $e_{3_{\eta}}(r_{5})$ is strictly decreasing for $\eta\in(0,0.52] $ and $r_{5}\in\left(0.12874,b/2\right]$.
\item Let $r_{5_{\eta}}:=\displaystyle \max_{r_{5}\in(0,b/2]}e_{3_{\eta}}$, thus $r_{5_{\eta}}<\infty$ and $e_{3_{\eta}}(r_{5_{\eta}})\leq 1.0696$.
\end{itemize}
\end{properties}

\begin{properties}\label{fs}{$ $}
\begin{itemize}
Below are the properties of the functions with respect to the variable $r_5$ and parameter $\eta$:

\item The family of functions $f_{1_{\eta}}(r_{5})$ and $f_{2_{\eta}}(r_{5})$ are strictly increasing.
\item The family of function $f_{3_{\eta}}(r_{5})$ is strictly decreasing.
\item The family of function $f_{4_{\eta}}(r_{5})$ is strictly decreasing for $\eta\in(1.16249,7.60839]$, and it is strictly increasing for $\eta\in(7.60839,\infty)$.
\item The functions of the family $f_{1_{\eta}}(r_{5})$ do not depend on $r_{5}$.
\item The function $\lambda_{31_{0}}(r_{5})$ is monotonically decreasing.

For $\eta\in[0,0.02]$, the following statements hold:
\item The functions of the family $f_{2_{\eta}}(r_{5})$ monotonically decrease as $r_{5}\in(0,b/2]$.
\item The functions of the family $f_{3_{\eta}}(r_{5})$ monotonically decrease as $r_{5}\in(0,r_{5}^{*}(\eta)]$.
\item The functions of the family $f_{3_{\eta}}(r_{5})$ monotonically increase as $r_{5}\in(r_{5}^{*}(\eta),b/2]$.
\item The functions of the family $f_{4_{\eta}}(r_{5})$ monotonically increase as $r_{5}\in(0,\hat{r_{5}}(\eta)]$.
\item The functions of the family $f_{4_{\eta}}(r_{5})$ monotonically decrease as $r_{5}\in(\hat{r_{5}}(\eta),b/2]$.

The functions $r_{5}^{*}(\eta)$ and $\hat{r_{5}}(\eta)$ are monotonically decreasing, and it holds that $r_{5}^{*}(0)=0.417957$, $r_{5}^{*}(0.02)=0.397798$, and $\hat{r_{5}}(0)=0.156497$, $\hat{r_{5}}(0.02)=0.148828$.
\end{itemize}
\end{properties}

\begin{properties}\label{fsp}
For $\eta\in[0,0.02]$ the following is true:
\begin{itemize}
\item The functions of the families $f_{2_{\eta}}'(r_{5})$ and $f_{3_{\eta}}'(r_{5})$ are monotonically increasing.
\item The functions of the family $f_{4_{\eta}}'(r_{5})$ are monotonically decreasing for $r_{5}\in(0,\hat{r_{5}}(\eta)]$.
\item The family of functions $f_{2_{\eta}}'(r_{5})$ is strictly decreasing.
\item The family of functions $f_{3_{\eta}}'(r_{5})$ is strictly increasing.
\end{itemize}
\end{properties}

\begin{properties}\label{dfs}
Let $\left(d_{\eta}f_{i_\eta}\right)(r_5):=\frac{d}{d\eta}f_{i_\eta}(r_5)$, $i=1,...,4$.
\begin{itemize}
\item The families of functions $\left(d_{\eta}f_{1_\eta}\right)(r_5)$ and $\left(d_{\eta}f_{2_\eta}\right)(r_5)$ are strictly increasing.
\item The family of functions $\left(d_{\eta}f_{3_\eta}\right)(r_5)$ is strictly increasing for $\eta\in(0,2.1722)$.
\item The family of functions $\left(d_{\eta}f_{3_\eta}\right)(r_5)$ is strictly decreasing for $\eta\in[3.76584,\infty)$.
\item The family of functions $\left(d_{\eta}f_{4_\eta}\right)(r_5)$ is strictly increasing for $\eta\in(4.52952,7.60839]$.
\item The functions of the family $\left(d_{\eta}f_{2_\eta}\right)(r_5)$ are monotonically decreasing.
\item The functions of the family $\left(d_{\eta}f_{3_\eta}\right)(r_5)$ are monotonically increasing\,. 
\item The functions of the family $\left(d_{\eta}f_{4_\eta}\right)(r_5)$ are negative for $\eta\in(1.16249,7.60839]$.
\item The functions of the family $\left(d_{\eta}f_{4_\eta}\right)(r_5)$ are monotonically increasing for the following intervals: 
\begin{itemize}
\item $r_{5}\in(0,b/2]$ and $\eta\in(1.64631,4.52952]$.
\item $r_{5}\in(0.1,b/2]$ and $\eta\in(1.16249,1.64631]$.
\item $r_{5}\in(0.45,b/2]$ and $\eta\in(0.11,1.16249]$.
\item $r_{5}\in(0.582276,b/2]$ and $\eta\in(0.02,0.11]$.
\item $r_{5}\in(0.0785,0.1]$ and $\eta\in(1.52721,1.64631]$.
\end{itemize}
\item The functions of the family $\left(d_{\eta}f_{4_\eta}\right)(r_5)$ are concave for $\eta\in(2.09792,6.24402)$.
\item The functions of the family $\left(d_{\eta}f_{4_\eta}\right)(r_5)$ are convex for the following intervals: 
\begin{itemize}
\item $r_{5}\in(0,0.582276]$ and $\eta\in(0.02,0.11]$.
\item $r_{5}\in(0,0.1]$ and $\eta\in(1.16249,1.52721]$.
\item $r_{5}\in(0,0.756]$ and $\eta\in(1.52721,1.64631]$.
\end{itemize}
\end{itemize}
\end{properties}

\begin{properties}\label{pfs} 
Let $\left(d_{\eta}f_{i_{\eta}}\right)'(r_{5}):=\frac{d}{dr_{5}}\left(\left(d_{\eta}f_{i_{\eta}}\right)(r_{5})\right)$, $i=1,...,4$.
\begin{itemize}
\item The functions of the family $(d_{\eta}f_{2_{\eta}})'(r_{5})$  are concave.
\item The functions of the family $(d_{\eta}f_{2_{\eta}})'(r_{5})$ are monotonically increasing.
\item The functions of the family $\left(d_{\eta}f_{3_{\eta}}\right)'(r_{5})$ are concave for $r_{5}\in(0,0.1]$ and $\eta\in(0,0.037)$.
\item The functions of the family $\left(d_{\eta}f_{3_{\eta}}\right)'(r_{5})$ are monotonically decreasing for $\eta\in[0.037,\infty)$.
\item The functions of the family $\left(d_{\eta}f_{3_{\eta}}\right)'(r_{5})$ are monotonically decreasing for $r_{5}\in(0.1,b/2]$ and $\eta\in(0,0.037)$.
\end{itemize}
\end{properties}

\begin{properties}\label{spfs}
Let $(d_{\eta}f_{2_{\eta}})''(r_{5})$ and $(d_{\eta}f_{3_{\eta}})''(r_{5})$ be the families for which they are the second derivative respect to $r_{5}$ of the functions of $\left(d_{\eta}f_{2_{\eta}}\right)(r_{5})$ and $\left(d_{\eta}f_{3_{\eta}}\right)(r_{5})$, respectively.

\begin{itemize}
\item The family of functions $(d_{\eta}f_{2_{\eta}})''(r_{5})$ is strictly increasing.
\item The functions of the family $(d_{\eta}f_{2_{\eta}})''(r_{5})$ are monotonically decreasing and convex.
\item The functions of the family $(d_{\eta}f_{3_{\eta}})''(r_{5})$ are concave for $\eta\in(0.037,\infty)$.
\item The functions of the family $(d_{\eta}f_{3_{\eta}})''(r_{5})$ are concave for $r_{5}\in(0.4,b/2]$ and $\eta\in(0,0.037]$.
\item The functions of the family $(d_{\eta}f_{3_{\eta}})''(r_{5})$ are monotonically increasing for $r_{5}\in(0.2,b/2]$ and $\eta\in(0,0.037]$.
\item The functions of the family $(d_{\eta}f_{3_{\eta}})''(r_{5})$ are monotonically increasing for $r_{5}\in(0,0.3]$ and $\eta\in(0.26,\infty)$.
\end{itemize}
\end{properties}

Although the domain of the system excludes $r_{3}=0$ and $r_{5}=0$, this system is well-defined. We are going to use these values at the border of the domain to justify the main results.

\begin{proposition}\label{aux2}
For into $J_{1}$, the family of functions $f_{1_{\eta}}(r_{5})+f_{2_{\eta}}(r_{5})+f_{3_{\eta}}(r_{5})$ is strictly increasing.
\end{proposition}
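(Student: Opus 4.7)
Fix $r_5 \in (0, b/2]$ and set
\[
G(\eta) := f_{1_\eta}(r_5) + f_{2_\eta}(r_5) + f_{3_\eta}(r_5).
\]
The statement is that $G$ is strictly increasing on $(0,\infty)$, which reduces to
\[
\Phi(\eta; r_5) := (d_\eta f_{1_\eta})(r_5) + (d_\eta f_{2_\eta})(r_5) + (d_\eta f_{3_\eta})(r_5) > 0 \quad \text{for every } \eta > 0.
\]
By Properties~\ref{fs} the first two summands are strictly positive (since $f_{1_\eta}, f_{2_\eta}$ are strictly $\eta$-increasing) and the third is strictly negative (since $f_{3_\eta}$ is strictly $\eta$-decreasing), so the task is the quantitative one: the growth coming from $f_1$ and $f_2$ must outweigh the decay of $f_3$.

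My plan is to partition $(0,\infty)$ along the break points appearing in Properties~\ref{dfs}: (a) the interval $(0, 2.1722)$ on which $(d_\eta f_{3_\eta})$ is strictly $\eta$-increasing; (b) the transition window $[2.1722, 3.76584)$; and (c) $[3.76584, \infty)$ on which $(d_\eta f_{3_\eta})$ is strictly $\eta$-decreasing. Throughout, $(d_\eta f_{1_\eta})$ and $(d_\eta f_{2_\eta})$ are strictly $\eta$-increasing. On (a), $\Phi$ is itself strictly $\eta$-increasing, so $\Phi(\eta;r_5)>0$ on that range follows as soon as $\liminf_{\eta\to 0^+}\Phi(\eta;r_5)\ge 0$; I would then invoke the $r_5$-monotonicities in Properties~\ref{dfs} (namely $(d_\eta f_{2_\eta})$ monotonically decreasing and $(d_\eta f_{3_\eta})$ monotonically increasing in $r_5$) to reduce this limit inequality to a single-point check in $r_5$, where the three functions $f_1,f_2,f_3$, read off from \eqref{sistema} after substituting $r_3=b/(2+\eta)$, can be evaluated explicitly. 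On (c), I would bound $(d_\eta f_{1_\eta})+(d_\eta f_{2_\eta})$ from below by its value at $\eta=3.76584$ and bound $|(d_\eta f_{3_\eta})|$ from above by its limit as $\eta\to\infty$, and verify that the former exceeds the latter uniformly in $r_5\in(0,b/2]$. For (b), I would bracket $(d_\eta f_{3_\eta})(r_5)$ from below by the minimum of its values at the two endpoints $\eta=2.1722$ and $\eta=3.76584$ and combine with the same lower bounds on the other two summands.

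The main obstacle is expected to lie in the transition window (b), where no $\eta$-monotonicity is stated for $(d_\eta f_{3_\eta})$, so the crude bracket just described may not suffice and explicit evaluation of $(d_\eta f_{3_\eta})(r_5)$ at a handful of interior points of $[2.1722, 3.76584]$, together with the Mean Value Theorem, may be needed. A secondary difficulty is the left boundary $\eta\to 0^+$ on (a), where the three contributions to $\Phi$ are comparable in magnitude and the inequality $\Phi\ge 0$ is tight, requiring sharp use of the closed-form expressions of $f_1, f_2, f_3$.
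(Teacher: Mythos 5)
Your reduction of the statement to the positivity of $\Phi(\eta;r_5)=(d_{\eta}f_{1_{\eta}})(r_{5})+(d_{\eta}f_{2_{\eta}})(r_{5})+(d_{\eta}f_{3_{\eta}})(r_{5})$, and your partition of the $\eta$-axis at the monotonicity breakpoints $2.1722$ and $3.76584$, mirror what the paper actually does (it establishes the base case $\eta=0$ in this proposition and delegates the propagation in $\eta$ to Proposition~\ref{otraaux}). However, the step you rely on to settle the base case is the one that fails. You propose to use the $r_5$-monotonicities of Properties~\ref{dfs} to "reduce the limit inequality to a single-point check in $r_5$." But those two monotonicities point in \emph{opposite} directions: $(d_{\eta}f_{2_{\eta}})$ is decreasing in $r_5$, so the worst case of $(d_{\eta}f_{1_{0}})+(d_{\eta}f_{2_{0}})$ on $(0,b/2]$ is at $r_5=b/2$, with value below $(d_{\eta}f_{1_{0}})(0.45)+(d_{\eta}f_{2_{0}})(0.45)=1.92656$; while $(d_{\eta}f_{3_{\eta}})$ is increasing in $r_5$, so its worst case is at $r_5\to 0$, where $-(d_{\eta}f_{3_{0}})(0)=3.413203$. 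Taking each summand at its own worst point therefore gives a lower bound for $\Phi(0^+;\cdot)$ that is at most $1.93-3.41<0$, and no single evaluation point can certify positivity. This is exactly why the paper instead subdivides $(0,b/2]$ into $(0,0.2]\cup(0.2,0.45]\cup(0.45,b/2]$ and compares the piecewise-constant step bounds $L_3$ (an upper bound for $-(d_{\eta}f_{3_{0}})$, using left endpoints) against $L_4$ (a lower bound for $(d_{\eta}f_{1_{0}})+(d_{\eta}f_{2_{0}})$, using right endpoints), obtaining $3.4132<3.5138$, $1.9013<1.9266$, $0.8396<1.0538$ on the three pieces. Your plan is missing this subdivision, and without it the argument does not close.

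A second, smaller gap is the transition window $\eta\in[2.1722,3.76584]$. Bounding $(d_{\eta}f_{3_{\eta}})(r_{5})$ below by the minimum of its two endpoint values in $\eta$ is not legitimate when no $\eta$-monotonicity is available on that window, and you concede yourself that the bracket "may not suffice" — so the plan leaves its self-declared main obstacle unresolved. For comparison, the paper (in Proposition~\ref{otraaux}, Part~I) closes this window by again splitting in $r_5$ (into $(0,0.3]$ and $(0.3,b/2]$) and by the specific computational claims that the maxima of $-(d_{\eta}f_{3_{\eta}})(0)$ and $-(d_{\eta}f_{3_{\eta}})(0.3)$ over the window occur at $\eta=2.1722$ (values $0.3197$ and $0.2080$), which are then dominated by the lower bounds $4.1293$ and $1.7842$ for $(d_{\eta}f_{1_{\eta}})+(d_{\eta}f_{2_{\eta}})$. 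Some statement of that kind, justified by an explicit analysis of $(d_{\eta}f_{3_{\eta}})$ as a function of $\eta$ on the window, is needed to complete your part (b).
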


\begin{proof}
We can establish the strict positivity of the function $\left(d_{\eta}f_{1_{0}}\right)(r_{5})+\left(d_{\eta}f_{2_{0}}\right)(r_{5}) +\left(d_{\eta}f_{3_{0}}\right)(r_{5})$ by comparing it with the functions $\left(d_{\eta}f_{1_{0}}\right)(r_{5})+\left(d_{\eta}f_{2_{0}}\right)(r_{5})$ and $-\left(d_{\eta}f_{3_{0}}\right)(r_{5})$\,. We know these functions monotonically decrease due to Properties~\ref{dfs}\,. Now, let us consider the following piecewise functions:

\begin{equation*}
L_{3}(r_5)= \left\{ \begin{array}{lll}
             -\left(d_{\eta}f_{3_{0}}\right)(0)=3.413203,  &0<r_{5}\leq 0.2, \\
             -\left(d_{\eta}f_{3_{0}}\right)(0.2)=1.90132, & 0.2<r_{5}\leq 0.45,\\
             -\left(d_{\eta}f_{3_{0}}\right)(0.45)=0.83961, & 0.45<r_{5}\leq 1,
\end{array}
   \right.
\end{equation*}

\begin{equation*}
L_{4}(r_5)= \left\{ \begin{array}{lll}
             \left(d_{\eta}f_{1_{0}}\right)(0.2)+ \left(d_{\eta}f_{2_{0}}\right)(0.2)=3.51384,  &0<r_{5}\leq 0.2, \\
             \left(d_{\eta}f_{1_{0}}\right)(0.45)+ \left(d_{\eta}f_{2_{0}}\right)(0.45)=1.92656, & 0.2<r_{5}\leq 0.45,\\
             \left(d_{\eta}f_{1_{0}}\right)(1)+ \left(d_{\eta}f_{2_{0}}\right)(1)=1.05376, & 0.45<r_{5}\leq 1.
\end{array}
   \right.
\end{equation*}
In this way $-\left(d_{\eta}f_{3_{0}}\right)(r_{5})<L_{3}(r_{5})<L_{4}(r_{5})<\left(d_{\eta}f_{1_{0}}\right)(r_{5})+\left(d_{\eta}f_{2_{0}}\right)(r_{5})$\,. Therefore, we conclude that $\left(d_{\eta}f_{1_{\eta}}\right)(r_{5})+\left(d_{\eta}f_{2_{\eta}}\right)(r_{5})+\left(d_{\eta}f_{3_{\eta}}\right)(r_{5})>0$.
\end{proof}

\begin{proposition}\label{otraaux}
For into $J_{1}$, the family of functions $(d_{\eta}f_{1_{\eta}})(r_{5})+(d_{\eta}f_{2_{\eta}})(r_{5})+(d_{\eta}f_{3_{\eta}})(r_{5})$ is strictly increasing.
\end{proposition}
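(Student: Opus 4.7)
The goal is to show the family $g_\eta(r_5) := (d_\eta f_{1_\eta})(r_5) + (d_\eta f_{2_\eta})(r_5) + (d_\eta f_{3_\eta})(r_5)$ is strictly increasing in $\eta$ on $(0,\infty)$, for every $r_5\in(0,b/2]$. The plan is to split the $\eta$-range according to the behavior of $(d_\eta f_{3_\eta})$ recorded in Properties~\ref{dfs}, and then mirror the piecewise constant minorant/majorant scheme used in Proposition~\ref{aux2}, but now applied in the $\eta$-direction instead of the $r_5$-direction.

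On the initial range $\eta\in(0,2.1722)$ the argument is immediate: Properties~\ref{dfs} assert that each of $(d_\eta f_{1_\eta})$, $(d_\eta f_{2_\eta})$, and $(d_\eta f_{3_\eta})$ is strictly increasing in $\eta$ there, so their sum is too. This disposes of the sub-range where all three summands behave monotonically in the same direction.

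On the complementary range $\eta\in[2.1722,\infty)$, the partial sum $(d_\eta f_{1_\eta})+(d_\eta f_{2_\eta})$ is still strictly increasing in $\eta$ (Properties~\ref{dfs}), while $(d_\eta f_{3_\eta})$ is strictly decreasing on $[3.76584,\infty)$ and of unspecified monotonicity on $[2.1722,3.76584]$. The strategy is to partition this range into finitely many slabs $(\eta_k,\eta_{k+1}]$ and on each slab produce, uniformly in $r_5\in(0,b/2]$, a strict lower bound $\Delta_k^+$ on the increment of $(d_\eta f_{1_\eta})+(d_\eta f_{2_\eta})$ across the slab, together with a strict upper bound $\Delta_k^-$ on the total magnitude of the decrement of $(d_\eta f_{3_\eta})$ across the same slab. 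Verifying $\Delta_k^+>\Delta_k^-$ slab by slab then yields strict monotonicity of $g_\eta$ on $[2.1722,\infty)$, and chaining with the first range gives the proposition.

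The main obstacle I anticipate is the intermediate window $[2.1722,3.76584]$, on which Properties~\ref{dfs} do not directly control the sign of the $\eta$-derivative of $(d_\eta f_{3_\eta})$. There one must bound the total oscillation of $(d_\eta f_{3_\eta})$ over the window --- uniformly in $r_5\in(0,b/2]$ --- tightly enough that it stays below the guaranteed increase of $(d_\eta f_{1_\eta})+(d_\eta f_{2_\eta})$ over the same window. This is the quantitative step analogous to the careful choice of the step values in the piecewise constant bounds $L_3,L_4$ used in Proposition~\ref{aux2}, and will likely require evaluating the relevant functions at a few carefully chosen interior $\eta$-points and at the boundary values $r_5=0$ and $r_5=b/2$.
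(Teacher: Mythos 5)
Your handling of $\eta\in(0,2.1722)$ is exactly the paper's: Properties~\ref{dfs} make all three summands increasing in $\eta$ there, and that part is fine. The gap is in your treatment of $[2.1722,\infty)$. Comparing the total increment $\Delta_k^+$ of $(d_{\eta}f_{1_{\eta}})+(d_{\eta}f_{2_{\eta}})$ across a slab $(\eta_k,\eta_{k+1}]$ with the total decrement $\Delta_k^-$ of $(d_{\eta}f_{3_{\eta}})$ across the same slab only controls the change of the sum between the two slab \emph{endpoints}. It says nothing about interior pairs $\eta_k<\eta'<\eta''\le\eta_{k+1}$: between $\eta'$ and $\eta''$ the gain of the first two terms can be an arbitrarily small fraction of $\Delta_k^+$ while the loss of the third term can be essentially all of $\Delta_k^-$, so the sum may well decrease from $\eta'$ to $\eta''$ even though $\Delta_k^+>\Delta_k^-$. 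Hence the scheme does not yield that the family is strictly increasing in $\eta$; it only yields monotonicity along the chosen grid. To repair it you would need a pointwise-in-$\eta$ comparison of rates, i.e.\ a bound of the form $\frac{d}{d\eta}\bigl[(d_{\eta}f_{1_{\eta}})+(d_{\eta}f_{2_{\eta}})\bigr]>-\frac{d}{d\eta}(d_{\eta}f_{3_{\eta}})$ on each slab, which is second-order information in $\eta$ that neither Properties~\ref{dfs} nor your outline supplies; the unbounded final slab $(3.76584,\infty)$ would in addition require control of the limits as $\eta\to\infty$.

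For comparison, the paper does not attempt any oscillation estimate on $[2.1722,\infty)$. There its proof changes target entirely and establishes the pointwise inequality $(d_{\eta}f_{1_{\eta}})(r_{5})+(d_{\eta}f_{2_{\eta}})(r_{5})>-(d_{\eta}f_{3_{\eta}})(r_{5})$, i.e.\ positivity of the sum, by playing a uniform lower bound for the first two terms (attained at $\eta=2.1722$ and at the right end of each $r_5$-subinterval, or in the limit $\eta\to\infty$) against a uniform upper bound for $-(d_{\eta}f_{3_{\eta}})$ (attained at $r_5=0$). That is a different, and on this range more tractable, statement than the monotonicity in $\eta$ you set out to prove, and it is the positivity --- in combination with Proposition~\ref{aux2} --- that the later arguments actually consume. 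So the numerical data you would need for your oscillation bounds are not the ones recorded in the paper, and even with them supplied the interior-of-slab issue above would remain.
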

\begin{proof}
Since Properties \ref{dfs}, the statement is true for $\eta\in(0,2.1722)$\,. We divide the domain of $\eta$ into $[2.1722, 3.76584]\cup (3.76584,\infty)$.

\underline{Part I}\,. For $\eta\in[2.1722, 3.76584]$\,. We divide the interval of $r_{5}$ into $(0,0.3] \cup (0.3,b/2]$\,. We will analyze and compare the functions $\left(d_{\eta}f_{1_{\eta}}\right)(r_{5}) + \left(d_{\eta}f_{2_{\eta}}\right)(r_{5})$ and $-\left(d_{\eta}f_{3_{\eta}}\right)(r_{5})$ within this interval.

\begin{itemize}
\item Part Ia: for \(r_{5} \in (0, 0.3]\)\,. Based on Properties \ref{dfs}, we can deduce that the minimum value of the family $\left(d_{\eta} f_{1_{\eta}}\right)(r_{5}) + \left(d_{\eta} f_{2_{\eta}}\right)(r_{5})$ is $\left(d_{\eta} f_{1_{2.1722}}\right)(0.3) + \left(d_{\eta} f_{2_{2.1722}}\right)(0.3) = 4.12926$\,. Furthermore, as the functions of the family $-\left(d_{\eta} f_{3_{\eta}}\right)(r_{5})$ are monotonically decreasing, we can treat the value of $-\left(d_{\eta} f_{3_{\eta}}\right)(0)    $ as a function of $\eta$\,. An upper bound of this function is $-\left(d_{\eta} f_{3_{2.1722}}\right)(0)=0.319658$\,. Therefore, $\left(d_{\eta} f_{1_{\eta}}\right)(r_{5}) + \left(d_{\eta} f_{2_{\eta}}\right)(r_{5})>- \left(d_{\eta} f_{3_{\eta}}\right)(r_{5})$.

\item Part Ib: for $r_{5}\in (0.3,b/2]$\,. Utilizing Properties \ref{dfs}, we can establish a lower bound for the family $\left(d_{\eta}f_{1_{\eta}}\right)(r_{5}) + \left(d_{\eta}f_{2_{\eta}}\right)(r_{5})$, which equates to $\left( d_{\eta}f_{1_{2.1722}}\right)(b/2) + \left(d_{\eta}f_{2_{2.1722}}\right)(b/2)=1.78421$\,. Subsequently, let us analyze $-\left(d_{\eta}f_{3_{\eta}}\right)(0.3)$, a function depending on $\eta$, achieving its maximum value at $\eta=2.1722$\,. Hence, $-\left(d_{\eta}f_{3_{2.1722}}\right)(0.3)=0.20796$\,. Thus, we can infer that $\left(d_{\eta} f_{1_{\eta}}\right)(r_{5}) + \left(d_{\eta} f_{2_{\eta}}\right)(r_{5}) > -\left(d_{\eta} f_{3_{\eta}}\right)(r_{5})$.
\end{itemize}

\underline{Part II}: for $\eta\in(3.76584,\infty)$\,. According to the Properties \ref{dfs}, we can observe that the functions of the families $\left(d_{\eta}f_{1_{\eta}}\right)(r_{5})+\left(d_{\eta}f_{2_{\eta}}\right)(r_{5})$ and $-\left(d_{\eta}f_{3_{\eta}}\right)(r_{5})$ are monotonically decreasing\,. This means that the minimum value of $\left(d_{\eta}f_{1_{\eta}}\right)(r_{5})+\left(d_{\eta}f_{2_{\eta}}\right)(r_{5})$ is $\displaystyle \lim_{\eta\rightarrow \infty} \left[\left(d_{\eta}f_{1_{\eta}}\right)(b/2 )+\left(d_{\eta}f_{2_{\eta}}\right)(b/2)\right]=1.80902$\,. Likewise, the maximum value of the family $-\left(d_{\eta}f_{3_{\eta}}\right)(r_{5})$ can be calculated as $-\left(d_{\eta}f_{3_{\eta}}\right)(0)=0.2794$\,. Therefore, we can conclude the inequality $\left(d_{\eta} f_{1_{\eta}}\right)(r_{5}) + \left(d_{\eta} f_{2_{\eta}}\right)(r_{5})>-\left(d_{\eta} f_{3_{\eta}}\right)(r_{5})$.
\end{proof}

\begin{proposition}\label{aux3}
For into $J_{1}$, the functions of the family $\left(d_{\eta}f_{1_{\eta}}\right)(r_{5})+\left(d_{\eta}f_{2_{\eta}}\right)(r_{5})+\left(d_{\eta}f_{3_{\eta}}\right)(r_{5})$ are monotonically decreasing\,. 
\end{proposition}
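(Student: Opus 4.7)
The first observation is that by Properties~\ref{fs}, the function $f_{1_\eta}(r_5)$ does not depend on $r_5$ at all; hence $(d_\eta f_{1_\eta})(r_5)$ is independent of $r_5$ and its $r_5$-derivative vanishes identically. The claim therefore reduces to showing that, for each fixed admissible $\eta$,
\[
(d_\eta f_{2_\eta})'(r_5) + (d_\eta f_{3_\eta})'(r_5) < 0 \qquad \text{for all } r_5 \in (0, b/2].
\]
By Properties~\ref{dfs}, $(d_\eta f_{2_\eta})(r_5)$ is monotonically decreasing in $r_5$ while $(d_\eta f_{3_\eta})(r_5)$ is monotonically increasing, so these two derivatives carry opposite signs. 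The task is to dominate the positive term $(d_\eta f_{3_\eta})'(r_5)$ by the absolute value of the negative term $|(d_\eta f_{2_\eta})'(r_5)|$, uniformly over the relevant pairs $(\eta, r_5)$.

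The strategy is to mimic the step-function comparison used in Propositions~\ref{aux2} and~\ref{otraaux}, matching a piecewise-constant lower bound $L_2(r_5)$ for $|(d_\eta f_{2_\eta})'(r_5)|$ against a piecewise-constant upper bound $L_3(r_5)$ for $(d_\eta f_{3_\eta})'(r_5)$. For the negative side, Properties~\ref{pfs} give that $(d_\eta f_{2_\eta})'(r_5)$ is monotonically increasing and concave in $r_5$, and strictly decreasing in $\eta$; consequently $|(d_\eta f_{2_\eta})'(r_5)|$ is positive, monotonically decreasing and convex in $r_5$, and strictly increasing in $\eta$. Its infimum on any rectangle is therefore attained in the limit of large $r_5$ and small $\eta$, so evaluating at a few break points in $r_5$ produces a step lower bound $L_2(r_5)$ valid uniformly in $\eta$.

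For the positive side I split the parameter range on $\eta$. When $\eta \in [0.037,\infty)$, Properties~\ref{pfs} guarantee that $(d_\eta f_{3_\eta})'(r_5)$ is monotonically decreasing in $r_5$, so it is bounded above by its value as $r_5 \to 0^+$. When $\eta \in (0,0.037)$ I further split $r_5$ at $0.1$: on $(0.1, b/2]$ the same $r_5$-monotonicity applies, while on $(0, 0.1]$ the function is only known to be concave in $r_5$, so I bound it above by the larger of its two endpoint values. Using in addition the $\eta$-monotonicity of the family, these observations yield a piecewise-constant upper bound $L_3(r_5)$, again uniform in $\eta$. The proposition then follows by checking $L_3(r_5) < L_2(r_5)$ on each piece of the common partition, exactly as in Propositions~\ref{aux2} and~\ref{otraaux}.

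The main obstacle I expect is the subregion $\eta \in (0,0.037]$, $r_5 \in (0,0.1]$: here $(d_\eta f_{3_\eta})'(r_5)$ is only concave (not monotone) in $r_5$, and the opposite $\eta$-monotonicities of the two sides prevent a clean common worst-case. If a single step bound on $(0,0.1]$ is not tight enough, the plan is to refine the partition inside this small strip, for instance at $r_5 = 0.05$, repeatedly using concavity to replace a subinterval bound by the maximum of the two endpoint values. Once that corner is controlled, all remaining pieces reduce to direct pointwise comparisons of explicit numerical constants.
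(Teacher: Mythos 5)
Your overall architecture coincides with the paper's: the reduction to $-(d_\eta f_{2_\eta})'(r_5) > (d_\eta f_{3_\eta})'(r_5)$ (using that $f_{1_\eta}$ is independent of $r_5$), the split of the parameter range at $\eta = 0.037$ driven by Properties~\ref{pfs}, the further split of $r_5$ at $0.1$ in the small-$\eta$ regime, and the piecewise-constant comparison of worst-case values on a common partition, exploiting that $-(d_\eta f_{2_\eta})'$ is decreasing in $r_5$ and that $(d_\eta f_{3_\eta})'$ is decreasing in $r_5$ wherever that monotonicity is available. The margins are generous (the paper's bounds on the critical strip are roughly $8.72$ versus $17.1$), so the numerical checks you anticipate would go through.

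However, there is a genuine error in your treatment of the one subregion you correctly identify as the obstacle, namely $\eta \in (0,0.037)$, $r_5 \in (0,0.1]$, where $(d_\eta f_{3_\eta})'(r_5)$ is only known to be concave. You propose to ``bound it above by the larger of its two endpoint values,'' and later to ``repeatedly us[e] concavity to replace a subinterval bound by the maximum of the two endpoint values.'' This has the inequality backwards: it is \emph{convex} functions whose maximum on an interval is attained at an endpoint; a concave function lies \emph{above} the chord joining its endpoint values and may attain its maximum strictly in the interior, strictly above both endpoints. Refining the partition does not repair this, because on whichever subinterval contains the interior maximum the endpoint bound still fails, and continuity alone gives no quantitative control. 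The correct use of concavity --- and what the paper actually does --- is the dual one: a concave function lies \emph{below} each of its tangent lines, so the upper envelope formed by the tangent lines at $r_5=0$ and $r_5=0.1$ (the families $L_{150_\eta}$, $L_{151_\eta}$) is a valid upper bound, and its maximum over the interval is the value at the intersection point of the two tangents ($8.72038$ at the worst $\eta$), which is then compared against the lower bound $17.0956$ for $-(d_\eta f_{2_\eta})'(0.1)$. With that substitution your argument matches the paper's; without it, the critical corner of the domain is not rigorously controlled.
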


\begin{proof}
According to Properties \ref{pfs}, the objective is to demonstrate the inequality $-(d_{\eta}f_{2_{\eta}})'(r_{5})>(d_{\eta}f_{3_{\eta}})'(r_{5})$\,. We plan to partition the interval $\eta$ into two subsets: $(0,0.037)\cup[0.037,\infty)$.

\underline{Part I}: for $\eta\in(0,0.037)$\,. By considering concavity of the functions of the family $(d_{\eta}f_{3_{\eta}})'(r_{5})$ (Properties \ref{pfs}), we will divide the interval of $r_ {5}$ into $(0,0.1]\cup(0.1,b/2]$.
\begin{itemize}
\item Part Ia: for $r_{5}\in(0,0.1]$\,. Since the Properties \ref{pfs}, the functions of the family $\left(d_{\eta}f_{3_{\eta}}\right)'(r_{5})$ are concave, we examine the tangent lines at the points $(0,(d_{\eta}f_{3_{\eta}})'(0))$ and $(0.1,(d_{\eta}f_{3_{\eta}})'(0.1))$, denoting the respective straight lines as $L_{150_{\eta}}(r_{5})$ and $L_{151_{\eta}}(r_{5})$\,. The families $L_{150_{\eta}}(r_{5})$ and $L_{151_{\eta}}(r_{5})$ intersect at $r_{5}^{*}(\eta)$, where $r_{5}^{*}=\{r_{5}|L_{150_{\eta}}(r_{5})=L_{151_{\eta}}(r_{5})\}$, and $L_{150_{\eta}}(r_{5}^{*})$ is strictly increasing, reaching its maximum value at $L_{150_{0.37}}(r_{5}^{*}(0.37)) = 8.72038$\,. Furthermore, since the functions of the family $-\left(d_{\eta}f_{2_{\eta}}\right)'(r_{5})$ are monotonically decreasing (Properties \ref{pfs}), a straightforward calculation shows that $-\left(d_{\eta}f_{2_{\eta}}\right)(0.1)$ corresponds to a strictly increasing function, with $\min_{\eta\in (0,0.037)}\left(d_{\eta} f_{2_{\eta}}\right)(0.1)=\left(d_{\eta}f_{2_{0}}\right)(0.1)=17.0956$\,. (See Fig.~\ref{fig3a}).

Hence, $-(d_{\eta}f_{2_{\eta}})'(r_{5})>(d_{\eta}f_{3_{\eta}})'(r_{5})$.

\item Part Ib: for $r_{5}\in(0.1,b/2]$\,. For a better analysis of the functions, we will divide the interval of
$r_{5}$ into $(0.1, 0.3] \cup (0.3, 0.5] \cup (0.5, b/2]$\,. Due to the convexity of the functions in the family $-\left(d_{\eta}f_{2_{\eta}}\right)'(r_{5})$ (Properties \ref{pfs}), we can define the family of tangent line functions $L_{152_{\eta}}(r_{5}), L_{153_{\eta}}(r_{5}), L_{154_{\eta}}(r_{5})$\,. These tangent lines are defined at the upper limits of each subinterval of $r_{5}$ (See Fig.~\ref{fig3b})\,. Furthermore, it is important to note that the family of functions $\left(d_{\eta}f_{3_{\eta}}\right)'(r_{5})$ exhibits monotonic decreasing behavior (Proposition \ref{pfs}).

\begin{figure}[ht]
 \centering
  \subfloat[Graphs of the functions with labels for $\eta=0.01405$.]{
   \label{fig3a}
   \includegraphics[width=0.43\textwidth]{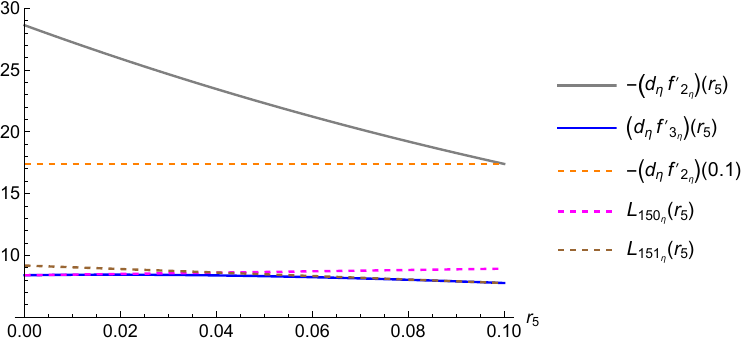}} 
 \subfloat[Graphs of the functions with labels for $\eta=0.0273$.]{
   \label{fig3b}
   \includegraphics[width=0.43\textwidth]{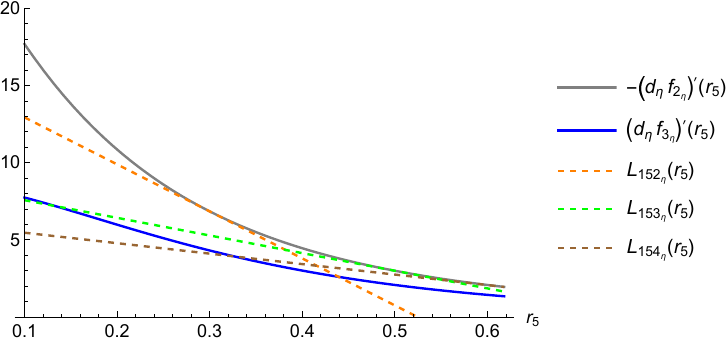}}
   \caption{ }
\end{figure}

\begin{itemize}
\item Subinterval $r_{5}\in(0.1,0.3]$\,. We first divide this interval into $(0.1,0.25]\cup(0.25,0.3]$.
\begin{itemize}
\item For $r_{5}\in (0.1,0.25]$\,. We compare the families $(d_{\eta}f_{3_{\eta}})'(0.1)$ and $L_{152_{\eta}}(0.25)$, with their respective upper and lower values $(d_{\eta}f_{3_{0.00839}})'(0.1)=7.75675$ and $L_{152_{0}}(0.25)=8.16295$, respectively\,. Thus, we have $(d_{\eta}f_{3_{\eta}})'(0.1)<L_{152_{\eta}}(0.25)$.
\item For $r_{5}\in(0.25,0.3]$\,. Using a similar approach as before, we compare the upper and lower bounds of $(d_{\eta}f_{3_{\eta}})'(0.25)$ and $L_{152_{\eta}}(0.3)$\,. The bounds are given by $(d_{\eta}f_{3_{0}})'(0.25)=5.32066$ and $L_{152_{0}}(0.3)=6.69021$, respectively\,. Thus, it follows that $(d_{\eta}f_{3_{\eta}})'(0.25)<L_{152_{\eta}}(0.3)$.
\end{itemize}
Therefore, we have that $(d_{\eta}f_{3_{\eta}})'(r_{5})< L_{152_{\eta}}(r_{5})$.

\item Subinterval $r_{5}\in(0.3,0.5]$\,. As in the previous cases, we will compare the upper and lower bounds for the families  $(d_{\eta}f_{3_{\eta}})'(r_{5})$ and $ L_{153_{\eta}}(r_{5})$ within the following subintervals of $r_{5}$ as follows $(0.3,0.35]\cup(0.35,0.4]\cup(0.4,0.45]\cup(0.45,0.5]$\,. 
\begin{itemize}
\item For $r_{5}\in(0.3,0.35]$\,. The upper and lower bounds are $(d_{\eta}f_{3_{\eta}})'(0.3)=4.52419$ and $L_{153_{0}}(0.35)=4.62889$, respectively\,. Thus, $(d_{\eta}f_{3_{\eta}})'(0.3)<L_{153_{\eta}}(0.35)$.
\item For $r_{5}\in(0.35,0.4]$\,. The upper and lower bounds are $(d_{\eta}f_{3_{\eta}})'(0.35)=3.81145$ and $L_{153_{0}}(0.4)=4.07061$, respectively\,. Thus, $(d_{\eta}f_{3_{\eta}})'(0.35)<L_{153_{\eta}}(0.4)$.
\item For $r_{5}\in(0.4,0.45]$\,. The upper and lower bounds are $(d_{\eta}f_{3_{\eta}})'(0.4)=3.19179$ and $L_{153_{0}}(0.45)=3.51233$, respectively\,. Thus, $(d_{\eta}f_{3_{\eta}})'(0.4)<L_{153_{\eta}}(0.45)$.
\item For $r_{5}\in(0.45,0.5]$\,. The upper and lower bounds are $(d_{\eta}f_{3_{\eta}})'(0.45)=2.66339$ and $L_{153_{0}}(0.5)=2.95406$, respectively\,. Thus, $(d_{\eta}f_{3_{\eta}})'(0.45)<L_{153_{\eta}}(0.5)$.
\end{itemize}
Therefore, we have $(d_{\eta}f_{3_{\eta}})'(r_{5})< L_{153_{\eta}}(r_{5})$.

\item Subinterval $r_{5}\in(0.5,b/2]$\,.  As with previous comparisons, we will now evaluate the upper and lower bounds for the families $(d_{\eta}f_{3_{\eta}})'(r_{5})$ and $ L_{154_{\eta}}(r_{5})$ within the subintervals of $r_{5}$: $(0.5,0.53]\cup(0.53,0.58]\cup(0.58,b/2]$\,. 
\begin{itemize}
\item For $r_{5}\in(0.5,0.53]$\,. The upper and lower bounds are  $(d_{\eta}f_{3_{\eta}})'(0.5)=2.21867$ and $L_{154_{0}}(0.53)=2.27436$, respectively. Thus, $(d_{\eta}f_{3_{\eta}})'(0.5)<L_{154_{\eta}}(0.53)$.

\item For $r_{5}\in(0.53,0.58]$\,.  The upper and lower bounds are  $(d_{\eta}f_{3_{\eta}})'(0.53)=1.9879$ and $L_{154_{0}}(0.58)=2.03677$, respectively. Thus, $(d_{\eta}f_{3_{\eta}})'(0.53)<L_{154_{\eta}}(0.58)$.

\item For $r_{5}\in(0.58,b/2]$\,.  The upper and lower bounds are  $(d_{\eta}f_{3_{\eta}})'(0.58)=1.65611$ and $L_{154_{0}}(b/2)=1.92885$, respectively. Thus, $(d_{\eta}f_{3_{\eta}})'(0.58)<L_{154_{\eta}}(b/2)$.
\end{itemize}
Therefore, $(d_{\eta}f_{3_{\eta}})'(r_{5})<L_{154_{\eta}}(r_{5})$.
\end{itemize}
Part I of the proof is concluded by demonstrating that $(d_{\eta}f_{3_{\eta}})'(r_{5})<-(d_{\eta}f_{2_{\eta}})'(r_{5})$.
\end{itemize}

\item \underline{Part II}:  For $\eta\in[0.037,\infty)$\,. We will divide the interval of $\eta$ into $[0.037,2.21)\cup[2.21,\infty)$.

\begin{itemize}
\item Part IIa: for $\eta\in[0.037,2.21)$\,. We will compare the upper and lower bounds of $(d_{\eta}f_{3_{\eta}})'(r_{5})$ and $-(d_{\eta}f_{2_{\eta}})'(r_{5})$\,. Both functions in these families are monotonically decreasing (Properties \ref{pfs})\,. Following the approach used in Part I, we will divide the interval for $r_{5}$ as follows $(0,0.2]\cup(0.2,0.3]\cup(0.3,0.4]\cup(0.4,0.5]\cup(0.5,0.6]\cup(0.6,b/2]$.
\begin{itemize}
\item For $r_{5}\in (0,0.2]$\,. The upper and lower bounds are  $(d_{\eta}f_{3_{0.037}})'(0)=8.72263$ and $-(d_{\eta}f_{2_{0.037}})'(0.2)=10.9563$, respectively.
\item For $r_{5}\in (0.2,0.3]$\,. The upper and lower bounds are  $(d_{\eta}f_{3_{0.037}})'(0.2)=5.91179$ and $-(d_{\eta}f_{2_{0.037}})'(0.3)=6.89814$, respectively. 
\item For $r_{5}\in (0.3,0.4]$\,. The upper and lower bounds are  $(d_{\eta}f_{3_{0.037}})'(0.3)=4.23557$ and $-(d_{\eta}f_{2_{0.037}})'(0.4)=4.48669$, respectively. 
\item For $r_{5}\in (0.4,0.5]$\,. The upper and lower bounds are  $(d_{\eta}f_{3_{0.037}})'(0.4)=2.94664$ and $-(d_{\eta}f_{2_{0.037}})'(0.5)=3.0142$, respectively. 
\item For $r_{5}\in (0.5,0.6]$\,. The upper and lower bounds are  $(d_{\eta}f_{3_{0.037}})'(0.5)=2.03089$ and $-(d_{\eta}f_{2_{0.037}})'(0.6)=2.08735$, respectively. 
\item For $r_{5}\in (0.6,b/2]$\,. The upper and lower bounds are  $(d_{\eta}f_{3_{0.037}})'(0.6)=1.40277$ and $-(d_{\eta}f_{2_{0.037}})'(b/2)=1.95944$, respectively. 
\end{itemize}

Clearly, for each subinterval of $r_{5}\in(0,b/2]$, it holds that $(d_{\eta}f_{3_{\eta}})'(r_{5})<-(d_{\eta}f_{2_{\eta}})'(r_{5})$.

\item Part IIb: for $\eta\in[2.21,\infty)$\,.  The functions of the families $-(d_{\eta}f_{2_{\eta}})'(r_{5})$ and $(d_{\eta}f_{3_{\eta}})'(r_{5})$ are monotonically decreasing\,. Therefore, the minimum values for the family $-(d_{\eta}f_{2_{\eta}})'(r_{5})$ are represented by $-(d_{\eta}f_{2_{\eta}})'(b/2)$ and the maximum values of the family $(d_{\eta}f_{3_{\eta}})'(r_{5})$ are represented by $(d_{\eta}f_{3_{\eta}})'(0)$\,. Both functions, $-(d_{\eta}f_{2_{\eta}})'(b/2)$ and  $(d_{\eta}f_{3_{\eta}})'(0)$, are continuous with respect to $\eta$\,. The function $-(d_{\eta}f_{2_{\eta}})'(b/2)$ is strictly decreasing, and its minimum value is given by  $\displaystyle{\lim_{\eta\rightarrow \infty}}(-(d_{\eta}f_{2_{\eta}})'(b/2))=b/2=0.618034$\,. On the other hand, the function $(d_{\eta}f_{3_{\eta}})'(0)$ is strictly decreasing for $\eta\in(2.21,4.88)$ and strictly increasing for $\eta\in[4.88,\infty)$,  with its maximum value at  $d_{\eta}f'_{3_{2.21}}(0)=0.562931$\,. Thus, $(d_{\eta}f_{3_{\eta}})'(r_{5})<-(d_{\eta}f_{2_{\eta}})'(r_{5})$\,. \end{itemize}
Part I and Part II complete the proof.
\end{proof}

\begin{proposition}\label{aux4}
For into $J_{1}$, the functions of the family $\left(d_{\eta}f_{1_{\eta}}\right)(r_{5})+\left(d_{\eta}f_{2_{\eta}}\right)(r_{5})+\left(d_{\eta}f_{3_{\eta}}\right)(r_{5})$ are convex.\end{proposition}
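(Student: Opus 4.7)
The goal is to show that $G_\eta(r_5) := (d_\eta f_{1_\eta})(r_5) + (d_\eta f_{2_\eta})(r_5) + (d_\eta f_{3_\eta})(r_5)$ is convex in $r_5$ for every $\eta \in (0,\infty)$. By the last item of Properties \ref{fs}, $f_{1_\eta}$ does not depend on $r_5$, hence neither does $(d_\eta f_{1_\eta})$, and so $(d_\eta f_{1_\eta})''(r_5)=0$. The task therefore reduces to establishing the pointwise inequality
\begin{equation*}
(d_\eta f_{2_\eta})''(r_5) \;\geq\; -(d_\eta f_{3_\eta})''(r_5) \qquad \text{on } (r_5,\eta)\in(0,b/2]\times(0,\infty).
\end{equation*}

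I would follow the same template as Propositions \ref{aux3} and \ref{otraaux}: partition the parameter domain into pieces on which Properties \ref{spfs} give complete qualitative control of both sides, and on each piece dominate the right-hand side by piecewise constant (or linear) bounds read off at appropriate endpoints. The natural cut in $\eta$ is $(0,0.037]\cup(0.037,\infty)$, dictated by the change in concavity of $(d_\eta f_{3_\eta})''$ given in Properties \ref{spfs}.

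For $\eta\in(0.037,\infty)$, the function $r_5\mapsto (d_\eta f_{3_\eta})''(r_5)$ is concave, so $-(d_\eta f_{3_\eta})''$ is convex in $r_5$ and can be dominated on each subinterval by the larger of its two endpoint values; simultaneously, $(d_\eta f_{2_\eta})''(r_5)$ is bounded below using that the family is strictly increasing in $\eta$ (so its infimum over $\eta$ is $(d_\eta f_{2_{0.037}})''$) and that each function is monotonically decreasing in $r_5$ (so its infimum over each subinterval is at the right endpoint). For $\eta\in(0,0.037]$, I would further split $r_5\in(0,b/2]$ at the cut-offs $r_5=0.2$ and $r_5=0.4$: on $(0.4,b/2]$ Properties \ref{spfs} give concavity of $(d_\eta f_{3_\eta})''$ in $r_5$, and on $(0.2,b/2]$ they give monotonic increase, so on both pieces the needed upper bound for $-(d_\eta f_{3_\eta})''$ is obtained from right endpoints. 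On the remaining piece $r_5\in(0,0.2]$ I would invoke monotonic increase in $\eta\in(0.26,\infty)$ only where it applies and otherwise subdivide further.

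The main obstacle is the small-parameter, small-radius corner $\eta\in(0,0.037]$, $r_5\in(0,0.2]$, where Properties \ref{spfs} supply neither monotonicity nor concavity of $(d_\eta f_{3_\eta})''$. There I would introduce a finer subdivision of $r_5$ and, if the raw endpoint estimates are too loose, replace $-(d_\eta f_{3_\eta})''$ by a chord and $(d_\eta f_{2_\eta})''$ by a tangent line at the right endpoint (this is permitted since the latter is convex in $r_5$ by Properties \ref{spfs}). The resulting piecewise linear comparison closes the inequality in the same style as the families $L_{150_\eta},\ldots,L_{154_\eta}$ of Proposition \ref{aux3}. Assembling the bounds over all cases yields $(d_\eta f_{2_\eta})''(r_5)+(d_\eta f_{3_\eta})''(r_5)\geq 0$ on $J_1$, which is the desired convexity.
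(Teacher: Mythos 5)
Your proposal follows essentially the same route as the paper's proof: discard the $f_{1}$ term, reduce convexity to the pointwise inequality $(d_{\eta}f_{2_{\eta}})''(r_{5})\geq -(d_{\eta}f_{3_{\eta}})''(r_{5})$, cut $\eta$ at $0.037$, and close each piece with endpoint, monotonicity, and convexity bounds drawn from Properties~\ref{spfs} together with chord/tangent comparisons --- even your $r_{5}$-cuts at $0.2$ and $0.4$ coincide with the paper's Part~I partition $(0,0.2]\cup(0.2,0.4]\cup(0.4,b/2]$. The one corner you leave as a plan, $\eta\in(0,0.037]$ with $r_{5}\in(0,0.2]$, is settled in the paper by a single additional computational observation (the maxima of $-(d_{\eta}f_{3_{\eta}})''$ form an increasing family in $\eta$, bounded by $18.5569$, comfortably below the lower bound $(d_{\eta}f_{2_{0}})''(0.2)=49.6372$), so the finer subdivision you propose there would indeed close the argument.
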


\begin{proof}
First of all, $\left(d_{\eta}f_{1_{\eta}}\right)(r_{5})=0$\,. We analyze the functions from the families $(d_{\eta}f_{2_{\eta}})''(r_{5})$ and $-(d_{\eta}f_{3_{\eta}})''(r_{5})$ in two parts\,. The interval for $\eta$  is divided into $(0,0.037]\cup(0.037,\infty)$.

\underline{Part I}: for $\eta\in(0,0.037]$\,. We divide the interval for $r_{5}$ into $(0,0.2]\cup(0.2,0.4]\cup(0.4,b/2]$.
\begin{itemize}
\item Parte Ia: for $r_{5}\in(0,0.2]$\,. According to the Properties \ref{spfs}, the lower bound of the family $(d_{\eta}f_{2_{\eta}})''(r_{5})$ is $(d_{\eta}f_{2_{0}})''(0.2)=49.6372$\,. Conversely, computational calculations show that the maximum values of the family $-(d_{\eta}f_{3_{\eta}})''(r_{5})$ form a strictly increasing sequence\,.  Thus, an upper bound is $-d_{\eta}f''_{3_{0.037}}(0.167719)=18.5569$\,. Therefore, $(d_{\eta}f_{2_{\eta}})''>-(d_{\eta}f_{3_{\eta}})''(r_{5})$.

\item Parte Ib: for $r_{5}\in(0.2,0.4]$\,. Both $(d_{\eta}f_{2_{\eta}})''(r_{5})$ and $-(d_{\eta}f_{3_{\eta}})''(r_{5})$ are monotonically decreasing (as stated in Properties \ref{spfs})\,. Thus, it suffices to compare $(d_{\eta}f_{2_{\eta}})''(0.4)$ and $-(d_{\eta}f_{3_{\eta}})''(0.2)$, which we can be viewed as functions of $\eta$\,. We will analyze these piecewise functions (Fig.~\ref{fig4}).

\begin{equation*}
    L_{156}(\eta):=\left\{ \begin{array}{lll}
        (d_{\eta}f_{2_{0}})''(0.2)=17.8795,  &0<\eta\leq 0.02,\\
             (d_{\eta}f_{2_{0.02}})''(0.4)=18.2481, & 0.02<\eta\leq 0.037.
               \end{array}
   \right.
\end{equation*}

\begin{equation*}
   L_{158}(\eta):=\left\{ \begin{array}{lll}
        -(d_{\eta}f_{3_{0.02}})''(0.2)=17.8526,  &0<\eta\leq 0.02,\\
             (d_{\eta}f_{3_{0.037}})''(0.2)=18.2362, & 0.02<\eta\leq 0.037.
               \end{array}
   \right. 
\end{equation*}

\begin{figure}[ht]
	\centering
		\includegraphics[scale=0.8]{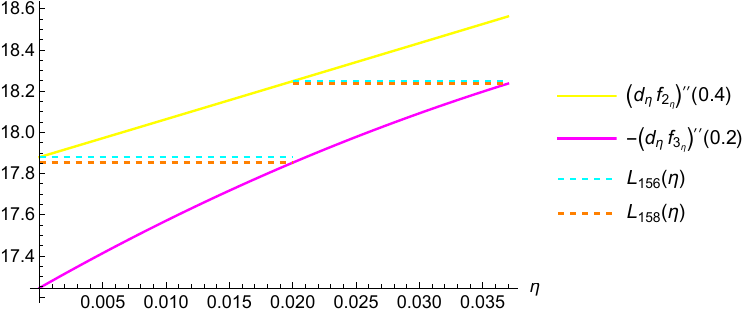}
	\caption{functions $(d_{\eta}f_{2_{\eta}})''(0.4)$, $-(d_{\eta}f_{3_{\eta}})''(0.2)$, $L_{156}(\eta)$ and $L_{158}(\eta)$.}
		\label{fig4}
\end{figure}

Clearly, $-(d_{\eta}f_{3_{\eta}})''(0.2)<L_{158}(\eta)<L_{156}(\eta)<(d_{\eta}f_{2_{\eta}})''(0.4)$\,. Therefore, it follows that $-(d_{\eta}f_{3_{\eta}})''(r_{5})<(d_{\eta}f_{2_{\eta}})''(r_{5})$\,. Hence, we have $(d_{\eta}f_{2_{\eta}})''(r_{5})+(d_{\eta}f_{3_{\eta}})''(r_{5})>0$.

\item Part Ic: for $r_{5}\in(0.4,b/2]$\,. The functions from the families $(d_{\eta}f_{2_{\eta}})''(r_{5})$ and $-(d_{\eta}f_{3_{\eta}})''(r_{5})$ are convex, as stated in 
Properties \ref{spfs}\,. To analyze these functions, we will consider the tangent lines to  $(d_{\eta}f_{2_{\eta}})''(r_{5})$ at $r_{5}=b/2$, denoted by $L_{160_{\eta}}(r_{5})$\,. We will also examine the lines connecting the endpoints of the functions from $-(d_{\eta}f_{3_{\eta}})''(r_{5})$ in the interval $r_{5}\in(0.4,b/2]$, denoted by $L_{163_{\eta}}(r_{5})$.

Calculations show that the family $L_{163_{\eta}}(r_{5})$ is strictly increasing\,. Similarly, the family of functions $L_{160_{\eta}}(r_{5})$ is strictly increasing, according to Properties \ref{spfs}\,. Thus, it suffices to compare $L_{160_{0}}(r_{5})$ with $L_{163_{0}}(r_{5})$\,. Both functions are monotonically decreasing, and the following inequalities hold:
\begin{equation*}
\begin{split}
L_{160_{0}}(0.4)=13.3123>L_{163_{0}}(0.4)=11.4622,\\
L_{160_{0}}(b/2)=6.62826>L_{163_{0}}(b/2)=5.0596.
\end{split}
\end{equation*}
Hence, $-(d_{\eta}f_{3_{\eta}})''(r_{5})<(d_{\eta}f_{2_{\eta}})''(r_{5})$.
\end{itemize}

\underline{Part II}: for $\eta\in(0.037,\infty)$\,. We divide the interval  $r_{5}\in(0,b/2]$ into two parts $(0,0.3]\cup(0.3,b/2]$\,. 

\begin{itemize}
\item Part IIa: for $r_{5}\in(0,0.3]$\,. Considering the growth of the functions of the family $(d_{\eta}f_{2_{\eta}})''(r_{5})$ and their monotonicity (Properties \ref{spfs}), we have a lower bound of $(d_{\eta}f_{2_{0.037}})''(0.3)$=30.8087\,. We further divide the interval of $\eta$ into $(0.037,0.26]\cup(0.26,\infty)$.

\begin{itemize}

\item For $\eta\in(0.037,0.26]$\,. Our computational calculations show that the maximum values of the family $-(d_{\eta}f_{3_{\eta}})''(r_{5})$ form a strictly increasing sequence\,. Therefore, an upper bound is $-d_{\eta}f''_{3_{0.26}}(0)=28.0721$\,. Hence $-(d_{\eta}f_{3_{\eta}})''(r_{5})<(d_{\eta}f_{2_{\eta}})''(r_{5})$.

\item For $\eta\in(0.26,\infty)$\,. According to Properties \ref{spfs}, the functions of the family $-(d_{\eta}f_{3_{\eta}})''(r_{5})$ are monotonically decreasing\,. Therefore, 
the upper bound is given by $-(d_{\eta}f_{3_{\eta}})''(0)$, with a maximum value of $-d_{\eta}f''_{3_{0.3575}}(0)=29.4401$\,. Hence $-(d_{\eta}f_{3_{\eta}})''(r_{5})<(d_{\eta}f_{2_{\eta}})''(r_{5})$.
\end{itemize}

\item Part IIb: for $r_{5}\in(0.3,b/2]$\,. The functions of the family  $-(d_{\eta}f_{3_{\eta}})''(r_{5})$ are convex\,.  We consider the family $L_{166_{\eta}}(r_{5})$ of lines connecting the endpoints of the graphs of these functions, which are monotonically decreasing\,. To compare $(d_{\eta}f_{2_{\eta}})''(r_{5})$ with  $L_{166_{\eta}}(r_{5})$, we evaluate different intervals for $r_{5}$, $(0.3,0.43]\cup(0.43,0.5]\cup(0.5,054]\cup(0.54,b/2]$\,. For each subinterval, we compare the maximum values of $L_{166_{\eta}}(r_{5})$ with the values of $(d_{\eta}f_{2_{0.037}})''(r_{5})$:
\begin{itemize}
\item In $r_{5}\in(0.3,0.43)$: the maximum value of $L_{166_{\eta}}(0.3)$ is $L_{166_{0.037}}(0.3)=14.9306$ and $(d_{\eta}f_{2_{0.037}})''(0.43)=16.0356$.
\item In $r_{5}\in(0.43,0.5)$: the maximum value of  $L_{166_{\eta}}(0.43)$  is $L_{166_{0.037}}(0.43)=10.6878$ and $(d_{\eta}f_{2_{0.037}})''(0.5)=11.5202$.
\item In $r_{5}\in(0.5,0.54)$: the maximum value of  $L_{166_{\eta}}(0.5)$ is $L_{166_{0.037}}(0.5)=18.40326$ and $(d_{\eta}f_{2_{0.037}})''(0.54)=9.60098$.
\item In $r_{5}\in(0.54,0.6)$: the maximum value of  $L_{166_{\eta}}(0.54)$ is $L_{166_{0.037}}(0.54)=7.09779$ and $(d_{\eta}f_{2_{0.037}})''(b/2)=7.3711$.
\end{itemize}
In all cases, $L_{166_{\eta}}(r_{5})<(d_{\eta}f_{2_{\eta}})''(r_{5})$\,. Thus, $-(d_{\eta}f_{3_{\eta}})''(r_{5})<(d_{\eta}f_{2_{\eta}})''(r_{5})$, which concludes the proof.
\end{itemize}
\end{proof}

\begin{proposition}\label{aux5}
For $r_{5}\in(0,b/2]$ and $\eta\in(0.02,\infty)$, the family of functions $\lambda_{31_{\eta}}(r_{5})$ is strictly increasing.\end{proposition}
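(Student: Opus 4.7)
The plan is to show directly that $\frac{d}{d\eta}\lambda_{31_{\eta}}(r_5)>0$ on the rectangle $(0,b/2]\times(0.02,\infty)$. Decomposing $\lambda_{31_{\eta}}=f_{1_{\eta}}+f_{2_{\eta}}+f_{3_{\eta}}+f_{4_{\eta}}$, the sought positivity reads
$$\frac{d}{d\eta}\lambda_{31_{\eta}}(r_5)=S(r_5,\eta)+T(r_5,\eta)>0,\qquad S:=\sum_{i=1}^{3}(d_{\eta}f_{i_{\eta}})(r_5),\quad T:=(d_{\eta}f_{4_{\eta}})(r_5).$$
By Proposition \ref{aux2}, $S>0$ holds throughout the rectangle. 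Hence the whole task is to control $T$ against $S$. My plan is to split $\eta\in(0.02,\infty)$ at the two breakpoints $1.16249$ and $7.60839$ that Properties \ref{fs} and \ref{dfs} identify as the transitions of $f_{4_{\eta}}$ in its parameter, and handle each of the three resulting subintervals separately.

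For $\eta\in(7.60839,\infty)$, Properties \ref{fs} yield that $f_{4_{\eta}}$ is strictly increasing in $\eta$, so $T>0$ and $S+T>0$ follows immediately. For $\eta\in(0.02,1.16249]$ one analogously reads the sign of $T$ from the convexity/monotonicity panels of Properties \ref{dfs} restricted to each $r_5$-interval $(0,0.582276]$, $(0.582276,b/2]$, etc.; where $T\ge 0$ there is nothing to do, and on any remaining portion where $T$ might dip slightly negative, Proposition \ref{otraaux} supplies a strictly positive lower bound for $S$ (minimised at the smallest $\eta$) and Proposition \ref{aux3} controls its $r_5$-variation, which will absorb $|T|$.

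The substantive case, and where I expect the main obstacle, is $\eta\in(1.16249,7.60839]$: here Properties \ref{dfs} assert $T<0$, so one must actually prove $S>-T$. The natural strategy is the same book-keeping argument used in Propositions \ref{aux3} and \ref{aux4}. I would subdivide the $\eta$-interval at the inner breakpoints $1.64631$, $2.09792$, $4.52952$, $6.24402$ listed in Properties \ref{dfs} (monotonicity/concavity transitions of $(d_{\eta}f_{4_{\eta}})$), and subdivide $r_5\in(0,b/2]$ at the corresponding spatial breakpoints $0.0785$, $0.1$, $0.45$, $0.582276$, $0.756$. On each resulting sub-rectangle the convexity/concavity data of Properties \ref{dfs} give a sharp linear majorant of $-T$ — a tangent line on convex pieces, a chord on concave ones — while the monotonicity established in Propositions \ref{otraaux} and \ref{aux3} produces a piecewise-constant minorant of $S$ by evaluating $S$ at the sub-rectangle corner where it is smallest. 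A finite collection of explicit numerical comparisons, of the same type carried out in the proofs of Propositions \ref{aux3} and \ref{aux4}, then establishes $S>-T$ on every piece. The delicate point is choosing the subdivision fine enough that the piecewise-linear majorants of $-T$ stay below the minorants of $S$; once that subdivision is fixed, the verification reduces to routine arithmetic.
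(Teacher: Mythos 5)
Your proposal follows essentially the same route as the paper's proof: the same decomposition into $S=\sum_{i=1}^{3}(d_{\eta}f_{i_{\eta}})$ versus $T=(d_{\eta}f_{4_{\eta}})$, the same immediate dispatch of $\eta\in(7.60839,\infty)$ via Proposition~\ref{aux2} together with the monotonicity of $f_{4_{\eta}}$ in the parameter, and the same tangent-line/chord book-keeping with explicit numerical evaluations to verify $S>-T$ on the remaining range. One correction of emphasis: the tightest comparison in the paper occurs for $\eta\in(0.02,1.16249]$, where $-T$ peaks at about $2.62$ against a lower bound of about $2.63$ for $S$, so that range is not one where $T$ ``might dip slightly negative'' and be easily absorbed — it is the heart of the argument and requires the same fine subdivision you reserve for $(1.16249,7.60839]$.
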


\begin{proof}
Considering Proposition \ref{aux2} and the Properties \ref{es}, we know that the family 
$\lambda_{31\eta}(r_{5})$ is strictly increasing for \mbox{$\eta\in(7.60839,\infty)$}\,. Thus, we need to demonstrate the behavior of this family for $r_{5}\in(0,b/2]$ and $\eta\in(0.02,7.60839]$\,. Specifically, we will compare the functions $\left(d_{\eta}f_{1_{\eta}}\right)(r_{5})+\left(d_{\eta}f_{2_{\eta}}\right)(r_{5})+\left(d_{\eta}f_{3_{\eta}}\right)(r_{5})$ with $-\left(d_{\eta}f_{4_{\eta}}\right)(r_{5})$\,.
The proof will be conducted in three main parts, as outlined by the Properties of the family $\left(d_{\eta}f_{4_{\eta}}\right)(r_{5})$ described in \ref{dfs}, as showed in Fig. \ref{partes}.

\begin{figure}[ht]
	\centering
		\includegraphics[scale=0.36]{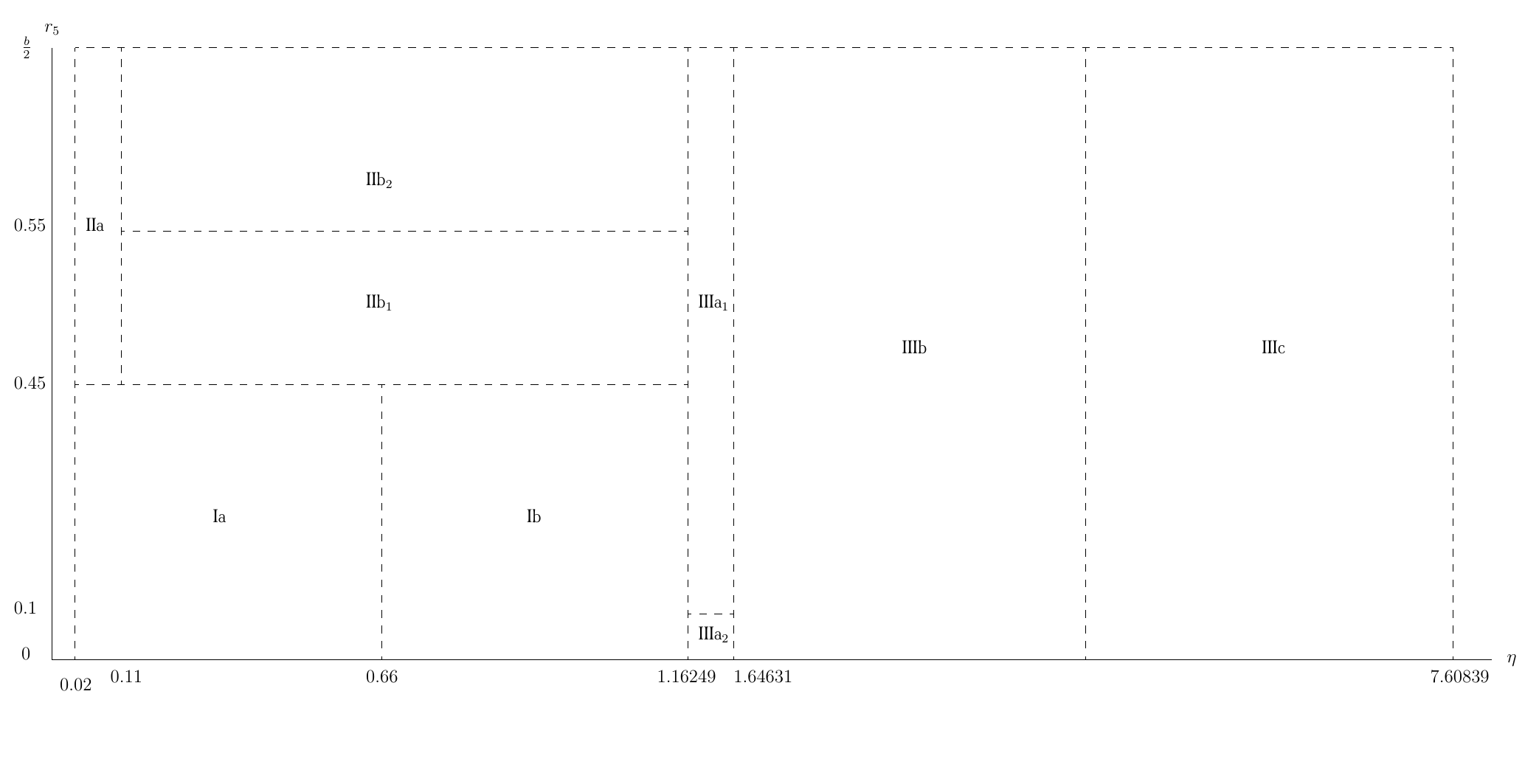}
	\caption{proof parts of Proposition \ref{aux5}.}
		\label{partes}
\end{figure}

\underline{Part I}\,. We will divide the interval for  $\eta$ into two parts $(0.02,0.66]\cup(0.66,1.16249]$.
\begin{itemize}
\item Part Ia\,. Computational analyses have established that the family of maximum values of $-\left(d_{\eta}f_{4_{\eta}}\right)(r_{5})$ is strictly increasing\,. Specifically, these maximum values are attained when $r_{5}\in(0.229163,0.45]$\,. Therefore, the analysis will focus solely on this interval\,. 
Additionally, according to Properties \ref{dfs}, the functions of the family $-\left(d_{\eta}f_{4_{\eta}}\right)$ are concave for $\eta$ into $(0.02,0.37]\cup(0.37,0.66]$\,. It is important to note that the functions of the family $\left(d_{\eta}f_{1_{\eta}}\right)(r_{5})+\left(d_{\eta}f_{2_{\eta}} \right)(r_{5})+\left(d_{\eta}f_{3_{\eta}}\right)(r_{5})$ exhibit decreasing monotonicity, as specified by Proposition \ref{aux3}\,. Therefore, for both intervals, we will use the family $\left(d_{\eta}f_{1_{\eta}}\right)(0.45)+\left(d_{\eta}f_{2_{\eta}} \right)(0.45)+\left(d_{\eta}f_{3_{\eta}}\right)(0.45)$.

\begin{itemize}
\item For $\eta\in(0.02,0.37]$\,. We will define the families $L_{9_{\eta}}(r_{5})$, $L_{10_{\eta}}(r_{5})$, and $L_{11_{\eta}}( r_{5})$ by constructing the tangent lines to the functions of the family $-\left(d_{\eta}f_{4_{\eta}}\right)$ at the points $r_{5}=0.229163$, $r_{5}=0.35$, and $r_{5}=0.45$, respectively (Fig.~\ref{fig6}).

\begin{figure}[ht]
	\centering
		\includegraphics[scale=0.76]{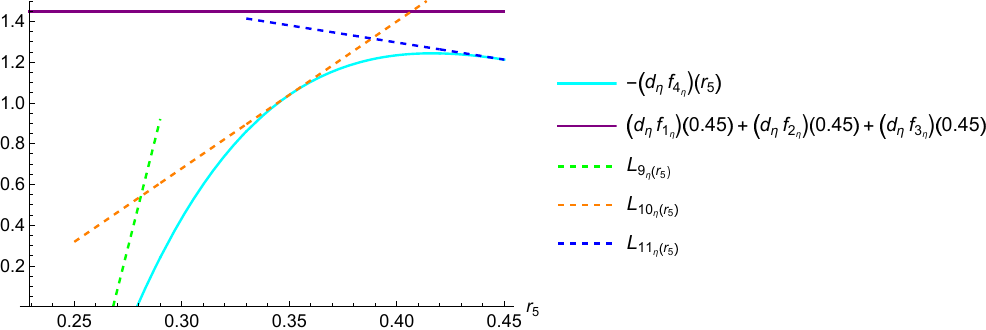}
	\caption{functions $-\left(d_{\eta}f_{4_{\eta}}\right)(r_{5})$, $\left(d_{\eta}f_{1_{\eta}}+d_{\eta}f_{2_{\eta}}+d_{\eta}f_{3_{\eta}}\right)(0.45)$, $L_{9_{\eta}}(r_{5})$, $L_{10_{\eta}}(r_{5})$ and $L_{11_{\eta}}(r_{5})$ with $\eta=0.1785$.}
		\label{fig6}
\end{figure}

\begin{itemize}
\item For $r_{5}\in(0.229163,0.35]$\,. Consider the intersection points of the line families $L_{9_{\eta}}(r_{5})$ and $L_{10_{\eta}}(r_{5})$ which constitute the family $p_{1}(\eta)$\,. Notably, the function $L_{9_{\eta}}(p_{1}(\eta))$ is positive for $\eta\in(0.1188,0.37)$\,. We will now compare the families $L_{9_{\eta}}(p_{1}(\eta))$ and $\left(d_{\eta}f_{1_{\eta}}\right)(0.45)+ \left(d_{\eta}f_{2_{\eta}}\right)(0.45)+\left(d_{\eta}f_{3_{\eta}}\right)(0.45)$, which are functions depending of $\eta$\,. Both functions are monotonically increasing and can be bounded by the following piecewise functions,

{\footnotesize
\begin{align*}
L_{12}(\eta)\colonequals
\begin{cases}
(d_{\eta}f_{1_{0.1188}}+d_{\eta}f_{2_{0.1188}}+d_{\eta}f_{3_{0.1188}})(0.45)=1.3443,&0.1188\leq\eta<0.3\\[2mm]
(d_{\eta}f_{1_{0.3}}+d_{\eta}f_{2_{0.3}}+d_{\eta}f_{3_{0.3}})(0.45)=1.62225,& 0.3\leq\eta<0.37.
\end{cases}
\end{align*}}

\begin{align*}
L_{13}(\eta):= \left\{ \begin{array}{lll}
          L_{9_{0.3}}(pn_{1}(0.3))=1.30223, & 0.1188\leq\eta<0.3\\
          L_{9_{0.37}}(pn_{1}(0.37))=1.56241, & 0.3\leq\eta<0.37.
\end{array}
   \right.
\end{align*}

Clearly, $L_{9_{\eta}}(pn_{1}(\eta))<L_{13}(\eta)<L_{12}(\eta)<\left(d_{\eta}f_{1_{\eta}}+d_{\eta}f_{2_{\eta}}+d_{\eta}f_{3_{\eta}}\right)(0.45)$, (see Fig.~\ref{fig7}).

\begin{figure}[ht]
	\centering
		\includegraphics[scale=0.75]{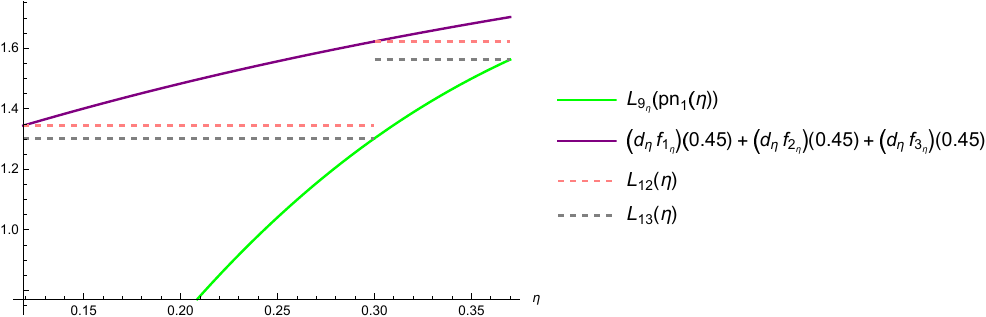}
	\caption{functions $L_{9_{\eta}}(pn_{1}(\eta))$, $\left(d_{\eta}f_{1_{\eta}}+d_{\eta}f_{2_{\eta}}+d_{\eta}f_{3_{\eta}}\right)(0.45)$, $L_{12}(\eta)$ and $L_{13}(\eta)$\,.}
		\label{fig7}
\end{figure}

\item For $r_{5}\in(0.35,0.45]$\,. We proceed similarly to the previous analysis\,. Let $p_{2}(\eta)$ denote the family of intersection points where the line families $L_{10_{\eta}}(r_{5})$ and $L_{11_{\eta}}(r_{5})$ intersect\,. We will compare the families $L_{10_{\eta}}(p_{2}(\eta))$ with $\left(d_{\eta}f_{1_{\eta}}\right)(0.45)+ \left(d_{\eta}f_{2_{\eta}}\right)(0.45)+\left(d_{\eta}f_{3_{\eta}}\right)(0.45)$\,. Both of these families can be viewed as functions of $\eta$ and are monotonically increasing\,. Therefore, we introduce the next piecewise functions, as shown in Fig.~\ref{fig8},

\begin{figure}[ht]
	\centering
		\includegraphics[scale=0.8]{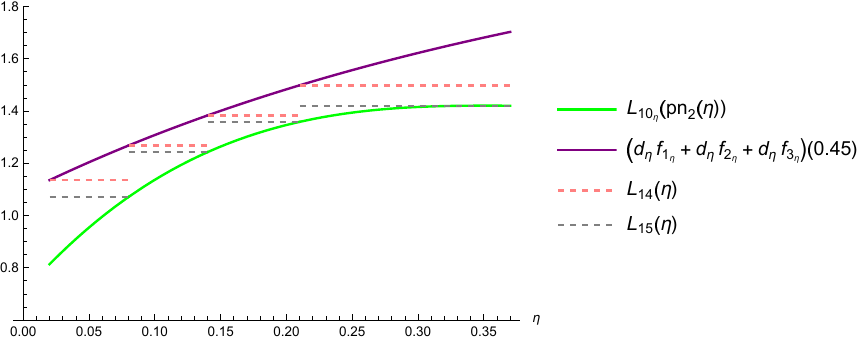}
	\caption{functions $L_{10_{\eta}}(p_{2}(\eta))$, $\left(d_{\eta}f_{1_{\eta}}\right)(0.45)+ \left(d_{\eta}f_{2_{\eta}}\right)(0.45)+\left(d_{\eta}f_{3_{\eta}}\right)(0.45)$, $L_{14}(\eta)$ and $L_{15}(\eta)$\,.}
		\label{fig8}
\end{figure}

{\footnotesize
\begin{align*}
L_{14}(\eta):= \left\{ \begin{array}{lll}
            (d_{\eta}f_{1_{0.02}}+d_{\eta}f_{2_{0.02}}+d_{\eta}f_{3_{0.02}})(0.45)=1.13542, & 0.02\leq\eta<0.08,\\[2mm]
         (d_{\eta}f_{1_{0.08}}+d_{\eta}f_{2_{0.08}}+d_{\eta}f_{3_{0.08}})(0.45)=1.26796 , & 0.08\leq\eta<0.14,\\ [2mm]       
         (d_{\eta}f_{1_{0.14}}+d_{\eta}f_{2_{0.14}}+d_{\eta}f_{3_{0.14}})(0.45)=1.38307 , & 0.14\leq\eta<0.21,\\ [2mm]        
          (d_{\eta}f_{1_{0.21}}+d_{\eta}f_{2_{0.21}}+ d_{\eta}f_{3_{0.21}})(0.45)=1.38307 , & 0.21\leq\eta<0.37,
\end{array}
   \right. 
\end{align*}
}
\begin{equation*}
L_{15}(\eta):= \left\{ \begin{array}{lll}
          L_{10_{0.08}}(pn_{1}(0.08))=1.071, & 0.02\leq\eta<0.08,\\
            
          L_{10_{0.14}}(pn_{1}(0.14))=1.24254, & 0.08\leq\eta<0.14,\\
          
          L_{10_{0.21}}(pn_{1}(0.21))=1.35853, & 0.14\leq\eta<0.21,\\
          
          L_{10_{0.37}}(pn_{1}(0.37))=1.42032, & 0.21\leq\eta<0.37,
\end{array}
   \right.
\end{equation*}
and clearly, the inequality $L_{10_{\eta}}(pn_{2}(\eta))<L_{15}(\eta)<L_{14}(\eta)<\left(d_{\eta}f_{1_{\eta}}+d_{\eta}f_{2_{\eta}}+d_{\eta}f_{3_{\eta}}\right)(0.45)$ holds.
\end{itemize}

\item For $\eta\in(0.37,0.66]$\,. We analyze a lower bound of the expression $\left(d_{\eta}f_{1_{\eta}}\right)(0.45)+\left(d_{\eta}f_{2_{\eta}}\right) (0.45)+\left(d_{\eta}f_{3_{\eta}}\right)(0.45)$ and an upper bound of $-\left(d_{\eta}f_{4_{\eta }}\right)(r_{5})$.The interval for $\eta$ is divided into $(0.37,0.53]\cup(0.53,0.63]\cup(0.63,0.66]$\,. The following conclusions are drawn: 
\begin{equation*}
\begin{split}
(d_{\eta}f_{1_{0.37}}+d_{\eta}f_{2_{0.37}}+d_{\eta}f_{3_{0.37}})(0.45)=1.70293&>-(d_{\eta}f_{4_{0.53}})(0.272695)=1.67685,\\
(d_{\eta}f_{1_{0.53}}+d_{\eta}f_{2_{0.53}}+d_{\eta}f_{3_{0.53}})(0.45)=1.84892&>-(d_{\eta}f_{4_{0.63}})(0.238827)=1.81203, \\ 
(d_{\eta}f_{1_{0.63}}+d_{\eta}f_{2_{0.63}}+d_{\eta}f_{3_{0.63}})(0.45)=1.9199&>-(d_{\eta}f_{4_{0.66}})(0.229163)=1.85361.
\end{split}
\end{equation*}

Therefore, it follows that $\left(d_{\eta}f_{1_{\eta}}+d_{\eta}f_{2_{\eta}}+d_{ \eta}f_{3_{\eta}}\right)(r_{5})>-\left(d_{\eta}f_{4_{\eta}}\right)(r_{5})$, as illustrated in Figures~\ref{fig9},~\ref{fig10} and~\ref{fig11}.

\begin{figure}[ht]
	\centering
		\includegraphics[scale=0.8]{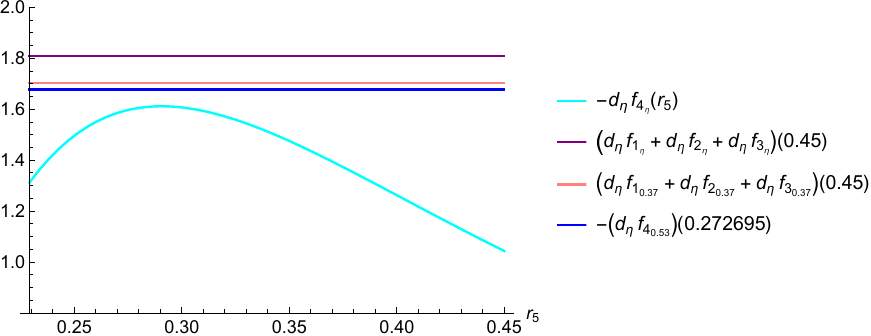}
	\caption{functions $-\left(d_{\eta}f_{4_{\eta}}\right)(r_{5})$, $\left(d_{\eta}f_{1_{\eta}}+d_{\eta}f_{2_{\eta}}+d_{\eta}f_{3_{\eta}}\right)(0.45)$, $(d_{\eta}f_{1_{0.37}}+d_{\eta}f_{2_{0.37}}+d_{\eta}f_{3_{0.37}})(0.45)$ and $-(d_{\eta}f_{4_{0.53}})(0.272695)$ with $\eta=0.48$\,.}
		\label{fig9}
\end{figure}

\begin{figure}[ht]
	\centering
		\includegraphics[scale=0.8]{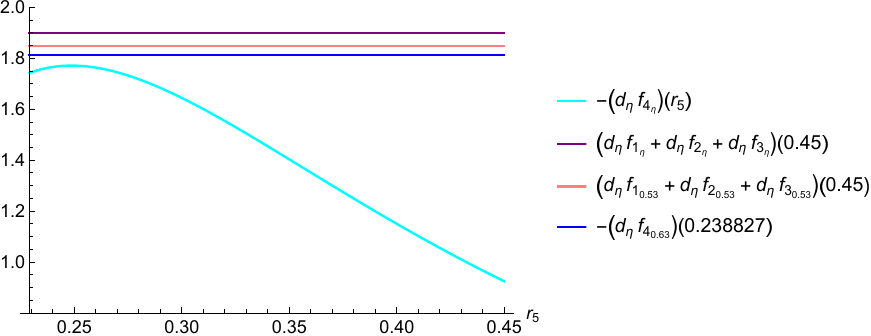}
	\caption{functions $-\left(d_{\eta}f_{4_{\eta}}\right)(r_{5})$, $\left(d_{\eta}f_{1_{\eta}}+d_{\eta}f_{2_{\eta}}+d_{\eta}f_{3_{\eta}}\right)(0.45)$, $(d_{\eta}f_{1_{0.53}}+d_{\eta}f_{2_{0.53}}+d_{\eta}f_{3_{0.53}})(0.45)$ and $-(d_{\eta}f_{4_{0.53}})(0.238827)$, with $\eta=0.6$\,.}
		\label{fig10}
\end{figure}

\begin{figure}[ht]
	\centering
		\includegraphics[scale=0.8]{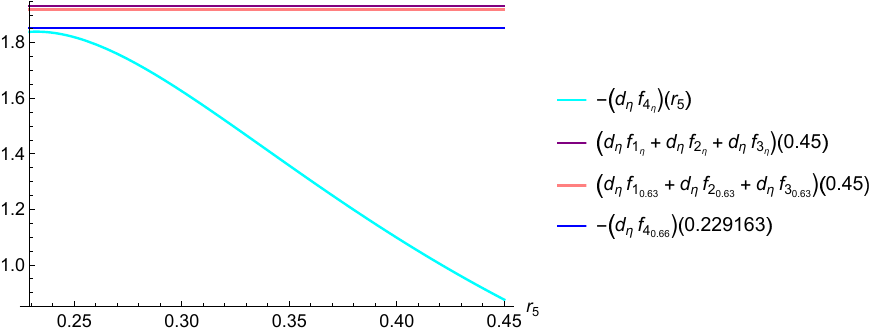}
	\caption{functions $-\left(d_{\eta}f_{4_{\eta}}\right)(r_{5})$, $\left(d_{\eta}f_{1_{\eta}}+d_{\eta}f_{2_{\eta}}+d_{\eta}f_{3_{\eta}}\right)(0.45)$ and $(d_{\eta}f_{1_{0.63}}+d_{\eta}f_{2_{0.63}}+d_{\eta}f_{3_{0.63}})(0.45)$, with $\eta=0.65$\,.}
		\label{fig11}
\end{figure}
\end{itemize}

\item Part Ib: for $\eta\in(0.66,1.16249]$\,. We will analyze the interval for $r_{5}$ by dividing it into $(0,0.33]\cup(0.33 .0.45]$.
\begin{itemize}
\item For $r_{5}\in(0,0.33]$\,. Given that the family of maxima for $-\left(d_{\eta}f_{4_{\eta}}\right)(r_{5})$ is strictly increasing, and that the family $\left(d_{ \eta}f_{1_{\eta}}+d_{\eta}f_{2_{\eta}}+d_{\eta}f_{3_{\eta}}\right)(r_{5})$ is also strictly increasing (Properties \ref{dfs}), we will compare the maximum value of one family with the minimum value of the other\,. Specifically, the maximum value of $-\left(d_{\eta}f_{4_{\eta}}\right)(r_{5})$ is reached at \mbox{$\eta = 1.16249$} and corresponds to $-(d_{\eta}f_{4_{1.16249}})(0.0945512) = 2.62007$\,. This value is less than the minimum value of $\left(d_{\eta}f_{1_{\eta}} + d_{\eta}f_{2_{\eta}} + d_{\eta}f_{3_{\eta}}\right)(r_{5})$ which is reached at $\eta = 0.66$ and it is $\left(d_{\eta}f_{1_{0.66}} + d_{\eta}f_{2_{0.66}} + d_{\eta}f_{3_{0.66}}\right)(0.33) = 2.6338$, (see Fig.~\ref{fig12}).

\item For $r_{5}\in(0.33,0.45]$\,. A lower bound for $\left(d_{\eta}f_{1_{\eta}}+d_{\eta}f_{2_{\eta}}+d_{\eta}f_{3_{\eta}}\right)(r_{5})$ is given by $d_{\eta}f_{1_{0.66}}(0.45)+d_{\eta}f_{2_{0.66}} (0.45)+d_{\eta}f_{3_{0.66}}(0.45)=1.93885$\,. Conversely, an upper bound for 
$-\left(d_{\eta}f_{4_{\eta}}\right)(r_ {5})$ is $-(d_{\eta}f_{4_{0.66}})(0.33)=1.4581$, (see Fig.~\ref{fig13}).

\begin{figure}[ht]
	\centering
		\includegraphics[scale=0.8]{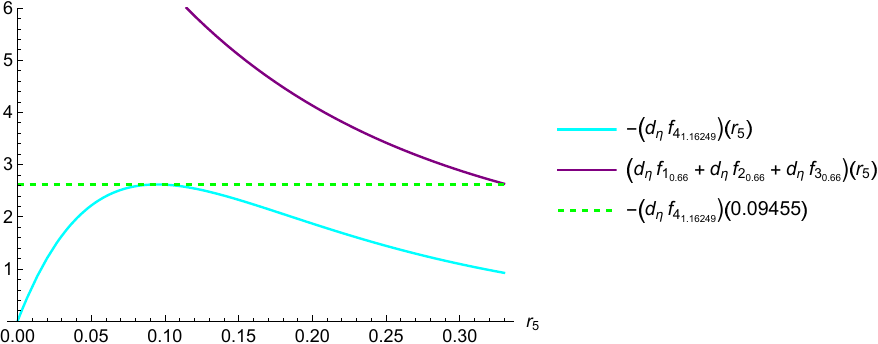}
	\caption{functions $-d_{\eta}f_{4_{1.16249}}(r_{5})$, $d_{\eta}f_{1_{0.66}}(r_{5})+d_{\eta}f_{2_{0.66}}(r_{5})+d_{\eta}f_{3_{0.66}}(r_{5})$ and $-d_{\eta}f_{4_{1.16249}}(0.09455)$, with $\eta=1.16249$\,.}
		\label{fig12}
\end{figure}

\begin{figure}[ht]
	\centering
		\includegraphics[scale=0.8]{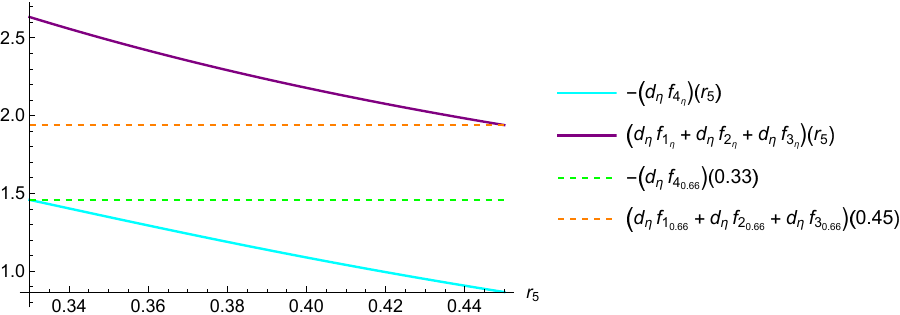}
	\caption{functions $-\left(d_{\eta}f_{4_{\eta}}\right)(r_{5})$, $\left(d_{\eta}f_{1_{\eta}}+d_{\eta}f_{2_{\eta}}+d_{\eta}f_{3_{\eta}}\right)(r_{5})$ with $\eta=0.66$, and their respective upper and lower bounds\,.}
		\label{fig13}
\end{figure}

\end{itemize}
Therefore, its true that $-\left(d_{\eta}f_{4_{\eta}}\right)(r_{5})<\left(d_{\eta}f_{1_{\eta}}+d_{ \eta}f_{2_{\eta}}+d_{\eta}f_{3_{\eta}}\right)(r_{5})$.
\end{itemize}
\underline{Part II}\,. We divide the interval for $\eta$ into two intervals $(0.02,0.11]\cup(0.11,1.16249]$.
\begin{itemize}
\item Part IIa: for $\eta\in(0.02,0.11]$\,. Based on Properties \ref{dfs}, the functions $-\left(d_{\eta}f_{4_{\eta}}\right)(r_{5})$ exhibit concavity for $r_{5} \in (0.45, 0.582276]$\,. Consequently, we partition the interval for $r_{5}$ into two subintervals: $(0.45, 0.582276]$ and $(0.582276, b/2]$.
\begin{itemize}
\item For $r_{5}\in(0.45, 0.582276]$\,. We consider the families $L_{28_{\eta}}(r_{5})$, $L_{29_{\eta}}(r_{5})$, $L_{30_{\eta}}(r_{5})$, and $L_{31_{\eta}}(r_{5})$, which are constructed from the tangent lines to the functions in the family $-\left(d_{\eta}f_{4_{\eta}}\right)(r_{5})$ at the points $r_{5}=0.45$, $r_{5}=0.48$, $r_{5}=0.5$ and $r_{5}=0.52$, respectively\,. Since the functions in the  family $\left(d_{\eta}f_{1_{\eta}}+d_{ \eta}f_{2_{\eta}}+d_{\eta}f_{3_{\eta}}\right)(r_{5})$ are monotonically decreasing (Proposition \ref{aux3}), we will perform the following comparisons:

\begin{itemize}
\item Interval $r_{5}\in(0.45,0.48]$: We compare $L_{28_{\eta}}(r_{5})$ with $\left(d_{\eta}f_{1_{\eta}}+d_{ \eta}f_{2_{\eta}}+d_{\eta}f_{3_{\eta}}\right)(0.48)$\,. The family of lines, $L_{28_{\eta}}(r_{5})$, exhibits monotonic increasing behavior\,. To compare $L_{28_{\eta}}(0.48)$ and $\left(d_{\eta}f_{1_{\eta}}+d_{\eta}f_{2_{\eta}}+d_{\eta}f_{3_{\eta}}\right)(0.48)$, consider these as functions of $\eta$\,. Upon simple calculation, we find that both functions are concave\,. We then create the lines $L_{32}(\eta)$, which connect the endpoints of the curve $\left(d_{\eta}f_{1_{\eta}}+d_{\eta}f_{2_{\eta}}+d_{\eta}f_{3_{\eta}}\right)(0.48)$, and the line $L_{33}(\eta)$, which is tangent to $L_{28_{\eta}}(0.48)$ at the point where $\eta=0.02$\,. The following information is accurate:
\begin{equation*}
\begin{split}
L_{32}(0.02)=1.10507 \quad \mbox{and} \quad L_{33}(0.02)=1.09702,\\
L_{32}(0.11)=1.27808 \quad \mbox{and} \quad L_{33}(0.11)=1.22348.
\end{split}
\end{equation*}

In this way, $L_{32}(r_{5})>L_{33}(r_{5})$.

\item Interval $r_{5}\in(0.48,0.5]$: We will compare $L_{29_{\eta}}(r_{5})$ with $\left(d_{\eta}f_{1_{\eta}}+d_{ \eta}f_{2_{\eta}}+d_{\eta}f_{3_{\eta}}\right)(0.5)$\,. For $\eta\in(0.02,0.052344]$, the lines in the family $L_{29_{\eta}}(r_{5})$  exhibit increasing monotonicity, whereas for $\eta\in (0.052444,0.11]$, they display decreasing monotonicity.

\underline{For $\eta\in(0.02,0.052344]$:} We compare the values of $L_{29_{\eta}}(0.5)$ with $\left(d_{\eta}f_{1_{\eta}}+d_{\eta}f_{2_{\eta}}+d_{\eta}f_{3_{\eta}}\right)(0.5)$\,. Both functions are concave\,. We construct a line, $L_{34}(\eta)$, that connects the endpoints of $\left(d_{\eta}f_{1_{\eta}}+d_{\eta}f_{2_{\eta}}+d_{\eta}f_{3_{\eta}}\right)(0.5)$ and the tangent line to $L_{29_{\eta}}(0.5)$ at $\eta=0.02$\,. Both lines are monotonically increasing\,. The following values are noted:
\begin{equation*}
\begin{split}
L_{34}(0.02)=1.0869 \quad &\mbox{and} \quad L_{35}(0.02)=1.08011,\\
L_{34}(0.052344)=1.1491 \quad &\mbox{and} \quad L_{35}(0.052344)=1.11022.
\end{split}
\end{equation*}
This comparison demonstrates that the desired condition is met.

\underline{For $\eta\in(0.052344,0.11]$:} To compare the values of the family of lines \(L_{29_{\eta}}(r_{5})\), consider \(L_{29_{\eta}}(0.48)\) and \(\left(d_{\eta}f_{1_{\eta}}+d_{\eta}f_{2_{\eta}}+d_{\eta}f_{3_{\eta}}\right)(0.5)\) as functions of \(\eta\)\,. Both functions are monotonically increasing\,. The upper bound of \(L_{29_{\eta}}(0.48)\) is \(L_{29_{0.11}}(0.48)=1.1441\), while the lower bound of \(\left( d_{\eta}f_{1_{\eta}}+d_{\eta}f_{2_{\eta}}+d_{\eta}f_{3_{\eta}}\right)(0.5)\) is \(\left(d_{\eta}f_{1_{0.052344}}+d_{\eta}f_{2_{0.052344}}+d_{\eta}f_{3_{0.052344}}\right)(0.5)=1.1491\)\,.  This completes the comparison.

\item Interval $r_{5}\in(0.5,0.52]$: We compare $L_{30_{\eta}}(r_{5})$ with $\left(d_{\eta}f_{1_{\eta}}+d_{ \eta}f_{2_{\eta}}+d_{\eta}f_{3_{\eta}}\right)(0.52)$\,. The family of functions $L_{30_{\eta}}(r_{5})$ is known to be decreasing\,. Consequently, we compare $L_{30_{\eta}}(0.5)$ with $\left(d_{\eta}f_{1_{\eta}}+d_{ \eta}f_{2_{\eta}}+d_{\eta}f_{3_{\eta}}\right)(0.52)$, both of which are functions of $\eta$\,. Let $L_{50}(\eta)$ denote the tangent line to $L_{30_{\eta}}(0.5)$ at $\eta=0.02$, and let $L_{51}(\eta)$ represent the straight line connecting the endpoints of $\left(d_{\eta}f_{1_{\eta}}+d_{ \eta}f_{2_{\eta}}+d_{\eta}f_{3_{\eta} }\right)(0.52)$\,. Both $L_{50}(\eta)$ and $L_{51}(\eta)$ are monotonically increasing\,. The values are as follows: 
\begin{equation*}
\begin{split}
L_{50}(0.02)=1.06881 \quad \mbox{and} \quad L_{51}(0.02)=1.07015,\\
L_{50}(0.11)=1.15834 \quad \mbox{and} \quad L_{51}(11)=1.22094.
\end{split}
\end{equation*}
This comparison meets the desired objective.

\item  Interval $r_{5}\in(0.52,0.582276]$\,. We compare $L_{31_{\eta}}(r_{5})$ with $\left(d_{\eta}f_{1_{\eta}}+d_{ \eta}f_{2_{\eta}}+d_{\eta}f_{3_{\eta}}\right)(r_{5})$\,.  Note the following: let $L_{36_{\eta}}(r_{5})$ represent the family of tangent lines to the functions $\left(d_{\eta}f_{1_{\eta}}+d_{\eta}f_{2_{\eta}}+d_{\eta}f_{3_{\eta}}\right)(r_{5})$ at $r_{5}=0.52$\,. We need to compare $L_{36_{\eta}}(r_{5})$ with $L_{31_{\eta}}(r_{5})$\,. Both families of lines are monotonically decreasing, so we compare $L_{31_{\eta}}(0.52)$ and $L_{36_{\eta}}(0.52)$, as well as $L_{31_{\eta}}(0.582276)$ and $L_{36_{\eta}}(0.582276)$.

\underline{Families $L_{31_{\eta}}(0.52)$ y $L_{36_{\eta}}(0.52)$:} We divide the interval for $\eta$ into $(0.02,0.04]\cup(0.04,0.11]$.

    ** For $\eta\in(0.02,0.04]$\,. Both families $L_{31_{\eta}}(0.52)$ and $L_{36_{\eta}}(0.52)$ are strictly increasing\,. Thus, an upper bound for $L_{31_{\eta}}(0.52 )$ is $L_{31_{0.04}}(0.52)=1.06649$, and a lower bound for $L_{36_{\eta}}(0.52)$ is $L_{36_{0.04}}(0.52)=1.07015 $\,. Therefore, $L_{31_{\eta}}(0.52)<L_{36_{\eta}}(0.52)$.
        
    ** For $\eta\in(0.04,0.11]$\,. We treat $L_{31_{\eta}}(0.52)$ and $L_{36_{\eta}}(0.52)$ as functions of $\eta$\,. A simple calculation show that the maximum value of $L_{31_{\eta}}(0.52)$ is $L_{31_{0.074812}}(0.52)=1.07179$, while a lower bound for $L_{36_{\eta}}(0.52)$ is $L_{36_{0.04}}(0.52)=1.10666$\,. Thus, $L_{31_{\eta}}(0.52)<L_{36_{\eta}}(0.52)$.
    
\underline{Families $L_{31_{\eta}}(0.582276)$ and $L_{36_{\eta}}(0.582276)$:} The family $L_{31_{\eta}}(0.582276)$ is strictly decreasing with its maximum value being $L_{31_{0.02}}(0.582276)=1.00089$\,. In contrast, $L_{36_{\eta}}(0.582276)$ is strictly increasing with its minimum value being $L_{36_{0.02}}(0.582276)=1.02002$\,. Consequently, 
$L_{31_{\eta}}(0.582276)<L_{36_{\eta}}(0.582276)$.
\end{itemize}

\item For $r_{5}\in(0.582276,b/2]$\,. Considering the Properties stated in \ref{dfs}, both families of functions are monotonically decreasing\,. Therefore, it suffices to compare the values of $\left(d_{\eta}f_{1_{\eta}}+d_{\eta}f_{2_{\eta}}+d_{\eta}f_{3_{\eta}}\right)(b/2)$ and $-\left(d_{\eta}f_{4_{\eta}}\right)(0.582276)$, which are functions of $\eta$\,. The  function $\left(d_{\eta}f_{1_{\eta}}+d_{\eta}f_{2_{\eta}}+d_{\eta}f_{3_{\eta}}\right) (b/2)$ is monotonically increasing, with its minimum value given by $\left(d_{\eta}f_{1_{0.02}}+d_{\eta}f_{2_{0.02}}+d_{\eta} f_{3_{0.02}}\right)(b/2)=1.00361$\,. Conversely, the function $-\left(d_{\eta}f_{4_{\eta}}\right)(0.582276)$ is monotonically decreasing with its maximum value being $-\left(d_{\eta} f_{4_{0.02}}\right)(0.582276)=0.953404$\,. Consequently, the inequality $-\left(d_{\eta}f_{4_{\eta}}\right)(r_{5})<\left(d_{\eta}f_{1_{\eta}}+d_{\eta }f_{2_{\eta}}+d_{\eta}f_{3_{\eta}}\right)(r_{5})$ its true.
\end{itemize}

\item Part IIb: for $\eta\in(0.11,1.16249]$\,. Based on the Properties \ref{dfs}, we can see that the functions $-\left(d_{\eta}f_{4_{\eta}}\right)(r_{5})$ and $\left(d_{\eta}f_{1_{\eta}}+d_{\eta}f_{2_{\eta}}+d_{\eta}f_{3_{\eta}}\right)(r_{5})$ are both decreasing\,. We will divide the interval of $r_{5}$ into two subintervals: $(0.45,0.55]\cup(0.55,b/2]$\,. In each of these subintervals, we will compare the family of upper bounds of $-\left(d_{\eta}f_{4_{\eta}}\right)(r_{5})$ with the lower bounds of $\left(d_{\eta}f_{1_{\eta}}+d_{\eta}f_{2_{\eta}}+d_{\eta}f_{3_{\eta}}\right)(r_{5})$.

\begin{itemize}
\item Parte IIb$_{1}$: for $r_{5}\in(0.45,0.55]$\,. Let us consider the families $\left(d_{\eta}f_{1_{\eta}}+d_{\eta}f_{2_{\eta}}+d_{\eta}f_{3_{\eta }}\right)(0.55)$ and $-\left(d_{\eta}f_{4_{\eta}}\right)(0.45)$, which depend on the variable $\eta$\,. With some simple calculations, we find that the function $-\left(d_{\eta}f_{4_{\eta}}\right)(0.45)$ is concave for $\eta\in(0.11,0.557561]$, convex for $\eta\in(0.557561,1.16249]$, and monotonically decreasing within the latter interval\,. In contrast, the function $\left(d_{\eta}f_{1_{\eta}}+d_{\eta}f_{2_{\eta}}+d_{\eta}f_{3_{\eta}}\right)(0.55)$ is concave.
\begin{itemize}
\item For $\eta\in(0.11,0.557561]$\,. We consider the function $L_{24}(\eta)$, which is the line that connects the endpoints of the function $\left(d_{\eta}f_{1_{\eta}}+d_{\eta}f_{2_{\eta} }+d_{\eta}f_{3_{\eta}}\right)(0.55)$\,. We also consider the lines $L_{25}(\eta)$ and $L_{26}(\eta)$, which are tangent lines to $-\left(d_{\eta}f_{4_{\eta}}\right)(0.45)$ at the points $\eta=0.11$ and $\eta=0.15$, respectively as showed in Fig.~\ref{fig14}.

\begin{figure}[ht]
	\centering
		\includegraphics[scale=0.8]{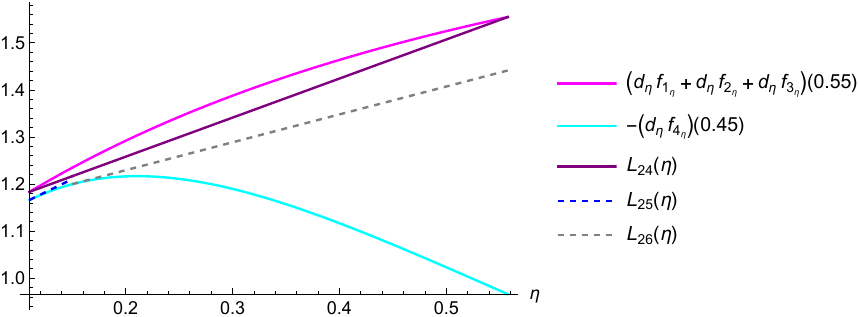}
	\caption{functions $\left(d_{\eta}f_{1_{\eta}}+d_{\eta}f_{2_{\eta}}+d_{\eta}f_{3_{\eta}}\right)(0.55)$, $-\left(d_{\eta}f_{4_{\eta}}\right)(0.45)$, $L_{24}(\eta)$, $L_{25}(\eta)$ and $L_{26}(\eta)$.}
		\label{fig14}
\end{figure}
The lines $L_{24}(\eta)$, $L_{25}(\eta)$ and $L_{26}(\eta)$ are increasing monotone and the following holds:
\begin{equation*}
\begin{split}
L_{24}(0.11)&=1.18358 \quad \mbox{ and } \quad L_{25}(0.11)=1.16632,\\
L_{24}(0.15)&=1.21676, \quad L_{25}(0.15)=1.21054 \quad \mbox{ and } \quad L_{26}(0.15)=1.19992,\\
L_{24}(b/2)&=1.60502 \quad \mbox{ and } \quad L_{26}(b/2)=1.14776.
\end{split}
\end{equation*}
It is true that $L_{24}(\eta)>L_{25}(\eta)$ y $L_{24}(\eta)>L_{26}(\eta)$\,. Hence, $\left(d_{\eta}f_{1_{\eta}}+d_{\eta}f_{2_{\eta} }+d_{\eta}f_{3_{\eta}}\right)(0.55)>-\left(d_{\eta}f_{4_{\eta}}\right)(0.45)$.

\item For $\eta\in(0.557561,1.16249]$\,. We consider the line $L_{27}(\eta)$, which connects the ends of the curve given by the function $\left(d_{\eta}f_{1_{\eta}}+d_{\eta}f_{2_{\eta} }+d_{\eta}f_{3_{\eta}}\right)(0.55)$\,. It is easy to see that such line is monotonically increasing, therefore, the minimum value is $L_{27}(0.557561)=1.55486$\,. On the other hand, the maximum value of $-\left(d_{\eta}f_{4_{\eta}}\right)(0.45)$ is $-\left(d_{\eta}f_{4_{0.557561} }\right)(0.45)=0.966289$\,. Then, $\left(d_{\eta}f_{1_{\eta}}+d_{\eta}f_{2_{\eta} }+d_{\eta}f_{3_{\eta}}\right) (0.55)>-\left(d_{\eta}f_{4_{\eta}}\right)(0.45)$.
\end{itemize}
\item Parte IIb$_{2}$: for $r_{5}\in(0.55,b/2]$\,. We compare the families $-\left(d_{\eta}f_{4_{\eta}}\right)(0.55)$ and $\left(d_{\eta}f_{1_{\eta}}+d_{\eta}f_{2_{\eta} }+d_{\eta}f_{3_{\eta}}\right)(b/2)$, which can be viewed as functions dependent on $\eta$\,. It is easy to see that the function $\left(d_{\eta}f_{4_{\eta}}\right)(0.55)$ is strictly decreasing, reaching its maximum value at $-\left(d_{\eta}f_{4_{0.11}}\right)(0.55)=0.99126$, while the function $\left(d_{\eta}f_{1_{\eta}}+d_{\eta}f_{2_{\eta} }+d_{\eta}f_{3_{\eta}}\right)(b/2)$ is strictly decreasing and attains its minimum value at $\left(d_{\eta}f_{1_{0.11}}+d_{\eta}f_{2_{0.11}}+d_{\eta}f_{3_{0.11}}\right)(b/2)=1.11263$\,. Then $-\left(d_{\eta}f_{4_{\eta}}\right)(0.55)<\left(d_{\eta}f_{1_{\eta}}+d_{\eta}f_{2_{\eta} }+d_{\eta}f_{3_{\eta}}\right)(b/2)$.
\end{itemize}
\end{itemize}
\underline{Part III}\,. First, let us partition the interval $\eta$ into three segments: $(1.16249,1.64631]\cup(1.64631,4.52952]\cup(4.52952,7.60839]$\,. 
\begin{itemize}
\item Part IIIa\,. Here, we are going to divide the interval $r_{5}$ into two parts: $(0,0.1]\cup(0.1,b/2]$.
\begin{itemize}
\item Part IIIa$_{1}$: for $r_{5}\in(0,0.1]$\,. Again we divide the interval of $\eta$ into $(1.16249,1.52721]\cup(1.52721,1.64631]$\,. 
\begin{itemize}
\item For $\eta\in(1.16249,1.52721]$\,. We are employing the Properties \ref{dfs}\,. The functions of the family $-\left(d_{\eta}f_{4_{\eta}}\right)(r_{5})$ are concave, enabling us to form the family $L_{41_{\eta}}(r_{5})$, comprising the tangent lines to each function of the family $-\left(d_{\eta}f_{4_{\eta}}\right) (r_{5})$\,. It is easy to see that the lines of the family $L_{41_{\eta}}(r_{5})$ are decreasing\,. Now, let us compare the family $\displaystyle\lim_{r_{5}\rightarrow 0}L_ {41_{\eta}}(r_{5})$ and $\left(d_{\eta}f_{1_{\eta}}+d_{\eta}f_{2_{\eta} }+d_{\eta}f_{3_{\eta}}\right)(0.1)$, which we can interpret as functions in terms of $\eta$\,. The function $\displaystyle\lim_{r_{5}\rightarrow 0}L_{41_{\eta}}(r_{5})$ is monotonically decreasing, while the function $\left(d_{\eta}f_{ 1_{\eta}}+d_{\eta}f_{2_{\eta} }+d_{\eta}f_{3_{\eta}}\right)(0.1)$ is increasing\,. Therefore, it follows that $\displaystyle\lim_{r_{5}\rightarrow 0}L_{41_{1.52721}}(r_{5})=3.81175<\left(d_{\eta}f_{1_{1.52721}}+d_{\eta}f_{2_{1.52721}}+d_{\eta}f_{3_{1.52721}}\right)(0.1)=10.0444$\,. 
\item For $\eta\in(1.52721,1.64631]$\,. We are going to divide the interval $r_{5}$ into $(0,0.0765], (0.0765,0.1]$.
\begin{itemize}
\item \underline{For $r_{5}\in(0,0.0765]$}\,. We are creating the family $L_{42_{\eta}}(r_{5})$, which consists of tangent lines to the functions of the family $-\left(d_{\eta}f_{4_{\eta}}\right)(r_{5})$ at the point $r_{5}=0.0765$\,. These lines are decreasing\,. We want to compare $\displaystyle{\lim _{r_{5}\rightarrow 0}L_{42_{\eta}}(r_{5})}$ and $\left(d_{\eta} f_{1_{\eta}}+d_{\eta}f_{2_{\eta}}+d_{\eta}f_{3_{\eta}}\right)(0.0765)$\,. We can view both families as functions in terms of $\eta$\,. The function $\displaystyle{\lim_{r_{5}\rightarrow 0}L_{42_{\eta}}(r_{5})}$ is decreasing, while the function $\left(d_{\eta} f_{1_{\eta}}+d_{\eta}f_{2_{\eta}}+d_{\eta}f_{3_{\eta}}\right)(0.0765)$ is increasing\,. So,
\begin{equation*}
\begin{split}
\lim_{r_{5}\rightarrow 0}L_{42_{1.5272}}(r_{5})&=3.75605\\
&<\left(d_{\eta}f_{1_{1.5272}}+d_{\eta}f_{2_{1.5272} }+d_{\eta}f_{3_{1.5272}}\right)(0.0765)\\&=11.7066.
\end{split}
\end{equation*}
So, $-\left(d_{\eta}f_{4_{\eta}}\right) (r_{5})<\left(d_{\eta}f_{1_{\eta}}+d_{\eta}f_{2_{\eta} }+d_{\eta}f_{3_{\eta}}\right)(r_{5})$.

\item \underline{For $r_{5}\in(0.0765,0.1]$}\,. The family of functions $\left(-d_{\eta}f_{4_{\eta}}\right)(r_{5})$ exhibits a decreasing behavior, as indicated by Properties \ref{dfs}\,. Consequently,  we compare $\left(-d_{\eta}f_{4_{\eta}}\right)(0.0765)$ and $\left(d_{\eta}f_{1_{\eta}}+d_{\eta}f_{2_{\eta}}+d_{\eta}f_{3_{\eta}}\right)(0.1)$ with respect to the parameter $\eta$\,. The former is decreasing, while the latter is increasing\,. So, 

\begin{equation*}
\begin{split}
-\left(d_{\eta}f_{4_{1.5272}}\right)(0.0765)&=2.82118\\
&<\left(d_{\eta}f_{1_{1.5272}}+d_{\eta}f_{2_{1.5272} }+d_{\eta}f_{3_{1.5272}}\right)(0.1)\\
&=10.0444.
\end{split}
\end{equation*}
Therefore, $-\left(d_{\eta}f_{4_{\eta}}\right)(r_{5})<\left(d_{\eta}f_{1_{\eta}}+d_{\eta}f_{2_{\eta} }+d_{\eta}f_{3_{\eta}}\right)(r_{5})$.
\end{itemize}

\end{itemize}
\item Part IIIa$_{2}$: for $r_{5}\in(0.1,b/2]$\,. We know that the functions of both families are decreasing (Properties \ref{dfs} and Proposition \ref{aux3})\,. We divide the interval of $r_{5}$ into $(0.1,0.35]\cup(0.35,b/2]$.
\begin{itemize}
\item For $r_{5}\in(0.1,0.35]$\,. We just compare $-\left(d_{\eta}f_{4_{\eta}}\right)(0.1)$ and $\left(d_{\eta}f_{1_{\eta}}+d_{\eta}f_{2_{\eta}}+d_{\eta}f_{3_{\eta}}\right )(0.35)$, which can be seen as functions in terms of $\eta$\,. An upper bound of the function $-\left(d_{\eta}f_{4_{\eta}}\right)(0.1)$ is $-\left(d_{\eta}f_{4_{1.28843}}\right)(0.1)=2.69234$\,. On the other hand, the function $\left(d_{\eta}f_{1_{\eta}}+d_{\eta}f_{2_{\eta} }+d_{\eta}f_{3_{\eta} }\right)(0.35)$ is increasing, so its minimum value is $\left(d_{\eta}f_{1_{1.16249}}+d_{\eta}f_{2_{1.16249}}+d_{\eta }f_{3_{1.16249}}\right)(0.35)=2.85468$\,. Then, $-\left(d_{\eta}f_{4_{\eta}}\right)(r_{5})<\left(d_{\eta}f_{1_{\eta}}+d_{\eta}f_{2_{\eta} }+d_{\eta}f_{3_{\eta}}\right)(r_{5})$.
\item For $r_{5}\in(0.35,b/2]$\,. We just compare $-\left(d_{\eta}f_{4_{\eta}}\right)(0.35)$ and $\left(d_{\eta}f_{1_{\eta}}+d_{\eta}f_{2_{\eta}}+d_{\eta}f_{3_{\eta}}\right )(b/2)$, which can be seen as functions that depend on $\eta$, and they are decreasing and increasing, respectively\,. Then, it is true that
\begin{equation*}
\begin{split}
-\left(d_{\eta}f_{4_{1.16249}}\right)(0.&35)=0.829044\\
&<\left(d_{\eta}f_{1_{1.16249}}+d_{\eta}f_{2_{1.16249}}+d_{\eta}f_{3_{1.16249}}\right)(b/2)=1.5501.
\end{split}
\end{equation*}
Therefore, $-\left(d_{\eta}f_{4_{\eta}}\right)(r_{5})<\left(d_{\eta}f_{1_{\eta}}+d_{\eta}f_{2_{\eta} }+d_{\eta}f_{3_{\eta}}\right)(r_{5})$.
\end{itemize}
\end{itemize}
\item Part IIIb: for $\eta\in(1.64631, 4.52952]$\,. The functions of the families $-\left(d_{\eta}f_{4_{\eta}}\right)(r_{5})$ and $\left(d_{\eta}f_{1_{\eta}}+d_{\eta}f_{2_{\eta} }+d_{\eta}f_{3_{\eta}}\right)(r_{5})$ are decreasing\,. We divide the interval for $r_{5}$ into $(0,0.03], (0.3,b/2]$.
\begin{itemize}
\item For $r_{5}\in(0,0.03]$\,. We compare $-\left(d_{\eta}f_{4_{\eta}}\right)(0)$ y $\left(d_{\eta}f_{1_{\eta}}+d_{\eta}f_{2_{\eta} }+d_{\eta}f_{3_{\eta}}\right)(0.3)$, which depend of $\eta$\,. Is easy to see that the maximum value of  $-\left(d_{\eta}f_{4_{\eta}}\right)(0)$ es $-\left(d_{\eta}f_{4_{1.82051}}\right)(0)=3.63409$, and the minimum value of $\left(d_{\eta}f_{1_{\eta}}+d_{\eta}f_{2_{\eta} }+d_{\eta}f_{3_{\eta}}\right)(0.3)$ is $\left(d_{\eta}f_{1_{0}}+d_{\eta}f_{2_{0} }+d_{\eta}f_{3_{0}}\right)(0.3)=3.69114$\,. Therefore, we have the inequality $-\left(d_{\eta}f_{4_{\eta}}\right)(r_{5})<\left(d_{\eta}f_{1_{\eta}}+d_{\eta}f_{2_{\eta} }+d_{\eta}f_{3_{\eta}}\right)(r_{5})$.
\item For $r_{5}\in(0.3,b/2]$\,. It is enough to compare $-\left(d_{\eta}f_{4_{\eta}}\right)(0.3)$ and $\left(d_{\eta}f_{1_{\eta}}+d_{\eta}f_{2_{\eta} }+d_{\eta}f_{3_{\eta}}\right )(b/2)$, which can be seen as functions that depend on $\eta$\,. It is easy to show that the function $-\left(d_{\eta}f_{4_{\eta}}\right)(0.3)$ is monotically decreasing, so its maximum value is $-\left(d_{\eta}f_{4_{1.64631}}\right )(0.3)=0.667847$\,. On the other hand, the function $\left(d_{\eta}f_{1_{\eta}}+d_{\eta}f_{2_{\eta} }+d_{\eta}f_{3_{\eta} }\right)(b/2)$ is monotonically increasing, so its minimum value is $\left(d_{\eta}f_{1_{1.64631}}+d_{\eta}f_{2_{1.64631} }+d_ {\eta}f_{3_{1.64631}}\right)(b/2)=1.60192.$
\end{itemize}
\item Part IIIc: for $\eta\in(4.52952, 7.60839]$\,. By  the Properties described in \ref{dfs}, the family of functions $-\left(d_{\eta}f_{4_{\eta}}\right)(r_{5})$ is strictly decreasing\,. Furthermore, due to the monotonicity of the functions of the family $\left(d_{\eta}f_{1_{\eta}}+d_{\eta}f_{2_{\eta} }+d_{\eta}f_{3_{\eta} }\right)(r_{5})$ (Proposition \ref{aux3}), we need to compare $-\left(d_{\eta}f_{4_{4.52952}}\right)(r_{5})$ and $\left(d_{\eta}f_{1_{\eta}}+d_{\eta}f_{2_{\eta} }+d_{\eta}f_{3_{\eta}}\right)( b/2)$\,. It is easy to see that the function $-\left(d_{\eta}f_{4_{4.52952}}\right)(r_{5})$ is decreasing, so its maximum value is $-\left(d_{\eta}f_{4_{4.52952}}\right)(0)=0.677958$\,. On the other hand, the family $\left(d_{\eta}f_{1_{\eta}}+d_{\eta}f_{2_{\eta}}+d_{\eta}f_{3_{\eta}}\right)(b/2)$ is decreasing, so its minimum value is $\left(d_{\eta}f_{1_{7.60839}}+d_{\eta}f_{2_{7.60839} }+d_{\eta}f_{3_{7.60839}}\right)(b/2)=1.67366.$
\end{itemize}

Therefore, $-\left(d_{\eta}f_{4_{\eta}}\right)(r_{5})<\left(d_{\eta}f_{1_{\eta}}+d_{\eta}f_{2_{\eta} }+d_{\eta}f_{3_{\eta}}\right)(r_{5})$.
\end{proof}

\begin{proposition}\label{auxextra1}
For $r_{5}\in (0,b/2]$ and $\eta\in(0,0.02]$, the functions of the family $\lambda_{31_{\eta}}(r_{5})$ are monotonic decreasing.
\end{proposition}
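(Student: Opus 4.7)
The plan is to show that $\lambda_{31_\eta}'(r_5) < 0$ for all $r_5 \in (0, b/2]$ and $\eta \in (0, 0.02]$. Since $f_{1_\eta}$ is independent of $r_5$ by Properties \ref{fs}, the task reduces to proving
\[
f_{2_\eta}'(r_5) + f_{3_\eta}'(r_5) + f_{4_\eta}'(r_5) < 0.
\]

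First I would partition $(0, b/2]$ according to the sign-change points of $f_{3_\eta}'$ and $f_{4_\eta}'$. By Properties \ref{fs}, $f_{4_\eta}'$ changes sign at $\hat{r}_5(\eta)$ and $f_{3_\eta}'$ at $r_5^*(\eta)$, with both $\hat{r}_5$ and $r_5^*$ monotonically decreasing in $\eta$ and satisfying $\hat{r}_5(0.02)=0.148828$, $\hat{r}_5(0)=0.156497$, $r_5^*(0.02)=0.397798$, $r_5^*(0)=0.417957$. This yields three principal sub-intervals, $(0, \hat{r}_5(0.02)]$, $[\hat{r}_5(0), r_5^*(0.02)]$, and $(r_5^*(0), b/2]$, separated by two short transitional sub-intervals of width less than $0.021$.

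On the central sub-interval $[\hat{r}_5(0), r_5^*(0.02)]$ all three derivatives $f_{2_\eta}'$, $f_{3_\eta}'$, $f_{4_\eta}'$ are non-positive uniformly in $\eta \in (0, 0.02]$, so the inequality is immediate. On each narrow transitional sub-interval exactly one of $f_{3_\eta}', f_{4_\eta}'$ has a sign that depends on $\eta$, but that term vanishes at one endpoint and stays small in magnitude; I would combine this with the monotonicity and convexity of $f_{2_\eta}'$ from Properties \ref{fsp} and \ref{spfs} to exhibit a concrete lower bound for $-f_{2_\eta}'$ on the sub-interval that dominates the possibly-positive contribution uniformly in $\eta$.

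The real work lies in the two extreme sub-intervals. For $(0, \hat{r}_5(0.02)]$, where $f_{4_\eta}'(r_5)$ can be sizeable and positive, I would imitate the tangent-and-secant technique used throughout Propositions \ref{aux3} and \ref{aux5}: using the monotonicity of $f_{2_\eta}'$ and its strict decrease with respect to $\eta$ (Properties \ref{fsp}), together with the monotone decrease of $f_{4_\eta}'$ on $(0, \hat{r}_5(\eta)]$, I would refine this sub-interval into finitely many pieces and on each piece compare the maximum of $f_{4_\eta}'(r_5) + f_{3_\eta}'(r_5)$ against the minimum of $-f_{2_\eta}'(r_5)$ via explicit piecewise-linear bounds. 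An entirely analogous argument handles $(r_5^*(0), b/2]$, where now $f_{2_\eta}' + f_{4_\eta}'$ supplies the negative contribution that must dominate $f_{3_\eta}'$. The main obstacle will be controlling the narrow region near the maximum of $f_{4_\eta}'$ in the left sub-interval uniformly in $\eta$: since $f_{2_\eta}'$ is also approaching its own extreme there, several levels of refinement in $r_5$ (together with an extra sub-split of $(0, 0.02]$ in $\eta$, exploiting that $f_{2_\eta}'$ strictly decreases and $f_{3_\eta}'$ strictly increases in $\eta$) will be required to produce bounds tight enough to close the inequality.
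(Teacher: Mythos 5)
Your proposal follows essentially the same route as the paper's proof: reduce to $f_{2_\eta}'(r_5)+f_{3_\eta}'(r_5)+f_{4_\eta}'(r_5)<0$, observe that the middle region between the critical points of $f_{4_\eta}$ and $f_{3_\eta}$ is immediate because all three derivatives are non-positive there, and handle the two extreme sub-intervals by comparing $-f_{2_\eta}'-f_{3_\eta}'$ against $f_{4_\eta}'$ (respectively $-f_{2_\eta}'$ against $f_{3_\eta}'$) using monotonicity in $r_5$ and in $\eta$ together with a finite refinement into pieces with explicit endpoint bounds. The only difference is organizational: the paper works directly with the $\eta$-dependent intervals $(0,\hat{r}_{5}(\eta)]$ and $(r_{5}^{*}(\eta),b/2]$, which eliminates the two transitional sub-intervals your fixed-endpoint decomposition requires.
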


\begin{proof}
By to the Properties \ref{fs}, it remains to be shown that the proposition holds for $r_{5}\in(r_{5}^{*}(\eta),b/2]$ and $r_{5}\in(0,\hat{r_{5}}(\eta)]$.
\begin{itemize}
\item Part I: for $r_{5}\in(r_{5}^*(\eta),b/2]$\,. By the Properties \ref{fsp}, it is sufficient to compare the families of functions $-f_{2_{\eta}}'(r_{5})$ and $f_{3_{\eta}}'(r_{5})$\,. According to Properties \ref{fsp}, both families of functions are strictly increasing\,. Additionally, the functions of the family $-f_{2_{\eta}}'(r_{5})$ are decreasing monotone, while the functions of the family $f_{3_{\eta}}'(r_{5 })$ are increasing monotone\,. Therefore, it is sufficient to compare $-f_{2_{0}}'(b/2)$ and $f_{3_{0.02}}'(b/2)$ (see Fig.~\ref{fig15}). We find that $-f_{2_{0}}'(b/2)=1.68415$ is greater than $f_{3_{0.02}}'(b/2)=0.354561$\,. Therefore, $-f_{2_{\eta}}'(r_{5})>f_{3_{\eta}}'(r_{5})$.

\begin{figure}[ht]
	\centering
		\includegraphics[scale=0.8]{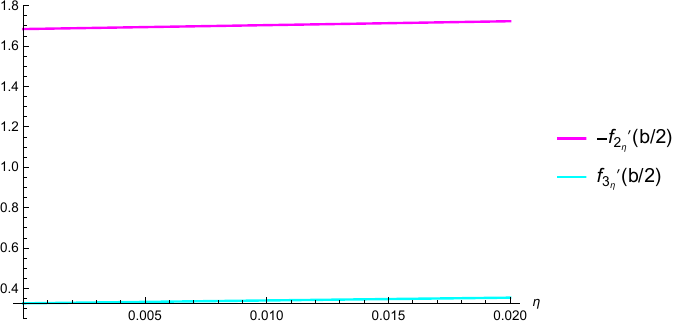}
	\caption{functions $-f_{2_{\eta}}'(b/2)$ and $f_{3_{\eta}}'(b/2)$.}
		\label{fig15}
\end{figure}

\item Part II: for $r_{5}\in(0,\hat{r_{5}}(\eta)]$\,. We will compare the functions of the families $-f_{2_{\eta}}'(r_{5})-f_{3_{\eta}}'(r_{5})$ and $f_{4_{\eta}}'(r_{5})$\,. We will divide the interval for $r_{5}$ into three parts: $(0,0.05]$, $(0.05,0.09]$, and $(0.09,\hat{r_{5}}(\eta)]$\,. In each interval, we will compare the lower bound of the family $-f_{2_{\eta}}'(r_{5})-f_{3_{\eta}}'(r_{5})$ with the upper bound of the family $f_{4_{\eta}}'(r_{5})$.
\begin{itemize}
\item For $r_{5}\in(0,0.05]$\,. 
We compare the families $-f_{2_{\eta}}'(0.05)-f_{3_{\eta}}'(0.05)$ and $f_{4_{\eta}}'(0)$, which we can view as functions depending on $\eta$\,. Both functions are increasing, so we construct the following piecewise functions (see Fig.~\ref{fig16}),

\begin{figure}[ht]
	\centering
		\includegraphics[scale=0.8]{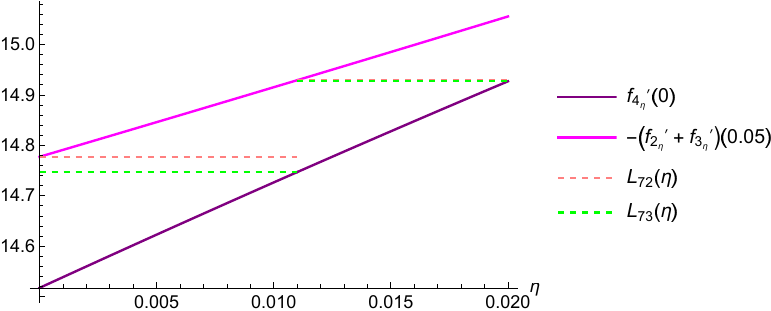}
	\caption{functions $-f_{4_{\eta}}'(0)$, $-(f_{2_{\eta}}'+f_{3_{\eta}}')(0.05)$, $L_{72}(\eta)$ and $L_{73}(\eta)$\,.}
		\label{fig16}
\end{figure}

\begin{equation*}
L_{72}(\eta):= \left\{ \begin{array}{lll}
            -f_{2_{0}}'(0.05)-f_{3_{0}}'(0.05)=14.7772\,, & 0\leq\eta\leq0.011\,,\\
 			-f_{2_{0.011}}'(0.05)-f_{3_{0.011}}'(0.05))=14.9292\,,& 0.011<\eta\leq0.02,
         \end{array}
   \right. 
\end{equation*}

\begin{equation*}
L_{73}(\eta):= \left\{ \begin{array}{lll}
            f_{4_{0.11}}'(0)=14.7469, & 0<\eta\leq0.011\,,\\
 			f_{4_{0.02}}'(0)=14.927,& 0.011<\eta\leq0.02\,,
         \end{array}
   \right. 
\end{equation*}

It follows that $L_{72}(\eta)>L_{73}(\eta)$, thus, $-f_{2_{\eta}}'(r_{5})-f_{3_{\eta} }'(r_{5})>f_{4_{\eta}}'(r_{5})$.

\item For $r_{5}\in(0.05,0.09]$\,. We compare the families $-f_{2_{\eta}}'(0.09)-f_{ 3_{\eta}}'(0.09)$ and $f_{4_{\eta}}'(0.05)$, which we can view as functions depending on $\eta$, (see Fig.~\ref{fig17})\,. The function $-f_{2_{\eta}}'( 0.09)-f_{3_{\eta}}'(0.09)$ is monotonically increasing, while $f_{4_{\eta}}'(0.05)$ is monotonically decreasing\,. We will compare the minimum and maximum values of each function, which are $-f_{2_{0}}'(0.09)-f_{3_{0}}'(0.09)=12.4475$ and $f_{4_{0}}'(0.05)=12.0126$\,. Thus, we have $-f_{2_{\eta}}'(r_{5})-f_{3_{\eta}}'(r_{5})>f_{4_{\eta}}'(r_{5})$.

\begin{figure}[ht]
	\centering
		\includegraphics[scale=0.8]{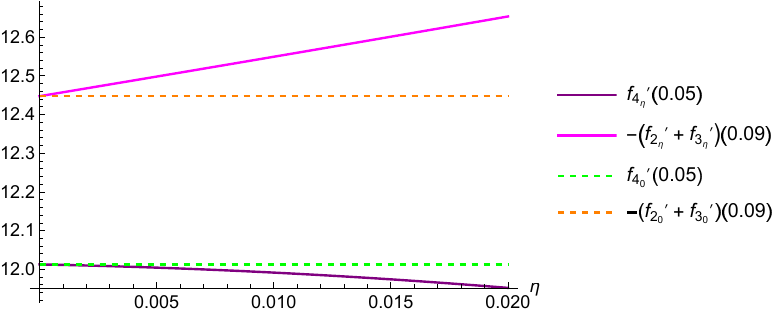}
	\caption{functions $f_{4_{\eta}}'(0.05)$ and $-(f_{2_{\eta}}'+f_{3_{\eta}}')(0.09)$.}
		\label{fig17}
\end{figure}

\item For $r_{5}\in(0.09,\hat{r_{5}}(\eta)]$\,. We compare the families  $-(f_{2_{\eta}}'+f_{3_{\eta}}')(\hat{r_{5}} (\eta))$ and $f_{4_{\eta}}'(0.09)$, which we can view as functions depending on $\eta$, (see Fig.~\ref{fig18})\,. The function $-f_{2_{\eta}}'(\hat{r_{5}}(\eta))-f_{3_{\eta}}'(\hat{r_{5}}(\eta) )$ is monotonically increasing, while $f_{4_{\eta}}'(0.09)$ is monotonically decreasing\,. We compare the minimum and maximum of each function, which are $-f_{2_{0}}'(\hat{r_{5}}(\eta))-f_{3_{0}}'(\hat{r_{5}}(\eta))=9.331$ and $f_{4_{0}}'(0.09)=8.25092$\,. Then, we conclude that $-f_{2_{\eta}}'(r_{5})-f_{3_{\eta}}'(r_{5})>f_{4_{\eta}}'(r_{5})$.

\begin{figure}[ht]
	\centering
		\includegraphics[scale=0.8]{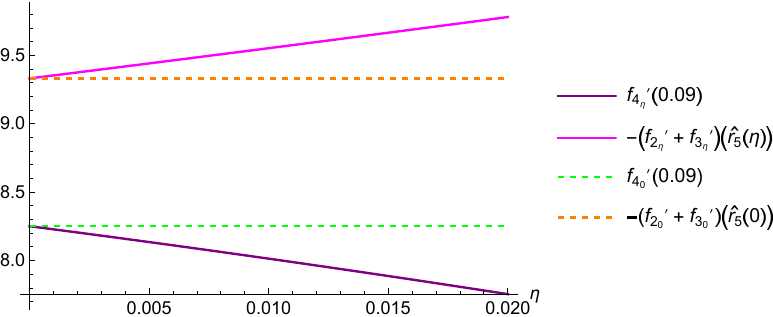}
	\caption{functions $f_{4_{\eta}}'(0.09)$ and $-(f_{2_{\eta}}'+f_{3_{\eta}}')(\hat{r_{5}}(\eta))$.}
		\label{fig18}
\end{figure}

Therefore, the functions of the family $\lambda_{31_{\eta}}(r_{5})$ are monotonically decreasing.
\end{itemize} 
\end{itemize}
\end{proof}

\subsection{Proofs of the regions $J_{n}$}\label{subsec:Jn}
\begin{proposition}\label{proposicionJ1}
For the region $J_{1}$, is true that $\lambda_{11_{\eta}}(r_{5})<\lambda_{31_{\eta}}(r_{5})$.
\end{proposition}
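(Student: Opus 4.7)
The plan is to establish $\lambda_{31_{\eta}}(r_5) - \lambda_{11_{\eta}}(r_5) > 0$ throughout the parameter range $(\eta, r_5) \in (0,\infty) \times (0, b/2]$ that describes $J_1$ via $r_3 = b/(2+\eta)$, by producing, on each of several sub-rectangles, an upper bound for $\lambda_{11_{\eta}}(r_5) = e_{1_{\eta}}(r_5) + e_{2_{\eta}}(r_5) + e_{3_{\eta}}(r_5)$ that falls strictly below a lower bound for $\lambda_{31_{\eta}}(r_5) = f_{1_{\eta}}(r_5) + f_{2_{\eta}}(r_5) + f_{3_{\eta}}(r_5) + f_{4_{\eta}}(r_5)$. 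This will mirror the tangent-line and piecewise-majorant technology already deployed in Propositions \ref{aux3}--\ref{aux5}.

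First I would split the $\eta$-axis at the threshold $\eta = 0.02$ that appears in Propositions \ref{aux5} and \ref{auxextra1}. On the small-$\eta$ slice $\eta \in (0, 0.02]$, Proposition \ref{auxextra1} gives that $\lambda_{31_{\eta}}(r_5)$ is monotonically decreasing in $r_5$, so $\lambda_{31_{\eta}}(r_5) \geq \lambda_{31_{\eta}}(b/2)$; on the complementary slice $\eta \in (0.02, \infty)$, Proposition \ref{aux5} gives that $\lambda_{31_{\eta}}(r_5)$ is strictly increasing in $\eta$, so a lower bound is given by the limiting value as $\eta \searrow 0.02$. Together these reduce the task of lower-bounding $\lambda_{31}$ throughout $J_1$ to evaluating $\lambda_{31_{\eta}}$ on the two boundary segments $\{\eta = 0.02\}$ and $\{r_5 = b/2\}$, both of which are explicit elementary computations.

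On the $\lambda_{11}$ side, I would use Properties \ref{es}: $e_{2_{\eta}}$ is strictly decreasing in $\eta$ so it is controlled from above by $e_{2_{0}}(r_5)$; $e_{3_{\eta}}(r_5)$ is bounded by $1.0696$ uniformly; and $e_{1_{\eta}}$, while strictly increasing in $\eta$, has an explicit form that pairs cleanly with a tangent-line majorant on each sub-rectangle. I would then cut the $r_5$-axis at a few carefully chosen breakpoints (for example $0.1$, $0.3$, $0.45$, and $0.58$, matching those already used for $f_{4_{\eta}}$) and, on each resulting sub-rectangle in $(\eta, r_5)$-space, compare the piecewise upper bound for $\lambda_{11_{\eta}}$ with the piecewise lower bound for $\lambda_{31_{\eta}}$ directly. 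If any rectangle turns out to be marginal, one refines it by introducing additional tangent-line majorants or minorants in the style of $L_{9_{\eta}},\dots, L_{36_{\eta}}$ from the proof of Proposition \ref{aux5}, exploiting the concavity/convexity information recorded in Properties \ref{pfs} and \ref{spfs}.

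The main obstacle I anticipate is the large-$\eta$ regime: $\eta \to \infty$ corresponds to $r_3 \searrow 0$, so body 3 degenerates toward the origin of $\hat{S}$, and both sides of the inequality acquire terms that blow up or become subtle to compare. Proving the strict inequality robustly in this limit requires an explicit asymptotic comparison of $\lambda_{11_{\eta}}$ and $\lambda_{31_{\eta}}$ as $r_3 \searrow 0$, showing that the leading-order growth of $\lambda_{31_{\eta}}$ strictly dominates that of $\lambda_{11_{\eta}}$. As the authors note, these families remain well-defined on the boundary $r_3 = 0$, so this asymptotic comparison is tractable; once it is locked in, the remaining bounded window of $\eta$ is dispatched by the same finite-subdivision, tangent-line bookkeeping that the preceding auxiliary propositions have already exercised extensively.
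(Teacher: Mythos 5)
Your proposal follows essentially the same route as the paper's proof: split the parameter range at $\eta=0.02$, invoke Proposition \ref{auxextra1} (monotone decrease in $r_5$) on the small-$\eta$ slice and Proposition \ref{aux5} (monotone increase in $\eta$, reducing to $\lambda_{31_{0.02}}$) on the large-$\eta$ slice, majorize $\lambda_{11_{\eta}}=e_{1_{\eta}}+e_{2_{\eta}}+e_{3_{\eta}}$ termwise via Properties \ref{es} (including the finite limit of $e_{1_{\eta}}$ as $\eta\to\infty$, which is exactly how the paper disposes of the $r_3\searrow 0$ regime you flag), and finish with a finite subdivision of the $r_5$-interval and explicit numerical comparisons. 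The only differences are in the choice of breakpoints, which is a matter of bookkeeping rather than of method.
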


\begin{proof}
To present a more legible proof, we divide the domain of the parameter $\eta$ into $[0,0.02]\cup(0.02,\infty)$\,.

\begin{itemize}
\item \underline{Part I}: for $\eta\in[0.02,\infty)$, we consider $r_{5} \in (0,0.12874]$ and $r_{5}\in\left(0.12874,b/2\right]$.
\begin{itemize}
\item Part Ia: for $r_{5} \in (0,0.12874]$\,. By the Properties \ref{es}, we can verify that an upper bound of the family $\lambda_{11_\eta}(r_5)$ is given by $\displaystyle{\lim_{\eta\rightarrow \infty}e_{1_{\eta}}(0.12874)}+e_{2_{0.02}}(0)+e_{3_{\eta}}(r_{5_{\eta}})=2.05865+0.402987+1.0696=3.5312$\,. From the Proposition \ref{aux5} we have $\lambda_{31_{0.02}}(r_{5})\leq \lambda_{31_{\eta}}(r_{5})$ for all $\eta$ and $r_{5}$\,. We can verify  that $\lambda_{31_{0.02}}(r_{5})$ is a decreasing function and a lower bound is $\lambda_{31_{0.02}}(0.12874)=9.02703$\,. So $\lambda_{11_{\eta}}(r_{5})<\lambda_{31_{\eta}}(r_{5})$. (See Fig.~\ref{fig19a}).

\item Part Ib: for $r_{5}\in\left(0.12874,b/2\right]$\,. By the Properties \ref{es}, we can verify that an upper bound of the family $\lambda_{11_{\eta}}(r_{5})$ is $\displaystyle{\lim_{\eta\rightarrow \infty}}e_{1_{\eta}}(r_{5})+e_{2_{0.02}}(r_{5})+e_{3_{\eta}}(r_{5})$ which is an increasing function\,. Let $L_{8}(r_{5})$ be such this bound\,. We use the auxiliary function defined as $L_{9}(r_{5})$. (See Fig.~\ref{fig19b}).
\begin{equation*}
L_9(r_5)= \left\{ \begin{array}{lll}
            L_{8}(0.12874)=3.47274,  & 0.12874< r_5<0.5\,, \\
             \\  L_{8}(0.5)=2.75126, & 0.5\leq r_5<0.6\,,\\
             \\ L_{8}(0.6)=2.47871, & 0.6\leq r_5<b/2\,.
             \end{array}
   \right.
\end{equation*}

\begin{figure}[ht]
 \centering
  \subfloat[functions $\lambda_{11_{\eta}}(r_{5})$ and $\lambda_{31_{\eta}}(r_{5})$ for $\eta=7.2$]{
  \label{fig19a}
   \includegraphics[width=0.4\textwidth]{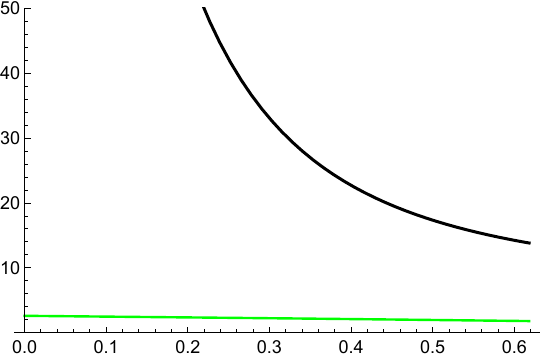}}
  \subfloat[functions $\lambda_{11_{\eta}}(r_{5})$ and $\lambda_{31_{\eta}}(r_{5})$ for $\eta=0.009$]{
   \label{fig19b}
    \ \ \ \ \ \includegraphics[width=0.5\textwidth]{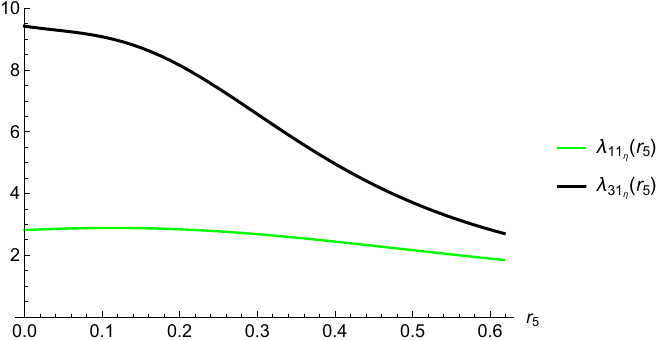}} 
   \caption{ }
\end{figure}

As the final item, $\lambda_{31_{0.02}}(r_{5})$ is a decreasing function\,. Therefore, we define the piecewise function:

\begin{equation*}
L_{10}(r_5)= \left\{ \begin{array}{lll}
            \lambda_{31_{0.02}}(0.5)=3.70904,  & 0.12874< r_5<0.5\,, \\
             \\  \lambda_{31_{0.02}}(0.6)=2.8317, & 0.5\leq r_5<0.6\,,\\
             \\ \lambda_{31_{0.02}}\left(b/2\right)=2.70691, & 0.6\leq r_5<b/2\,.
             \end{array}
   \right.
\end{equation*}

Therefore, $L_{9}(r_{5})>L_{10}(r_{5})$, and so $\lambda_{11_{\eta}}(r_{5})<\lambda_{31_{\eta}}(r_{5})$.
\end{itemize}
\item \underline{Part II}: for $\eta\in[0,0.02]$\,. Let us compare the function $e_{1_{0.02}}(r_{5})+e_{2_{0}}(r_{5})+e_{3_{\eta}}(r_{5_{\eta}})$, which serves as an upper bound for the family of functions $\lambda_{11_{\eta}}(r_{5})$ (as described in Properties \ref{es}), with the family of functions $\lambda_{31_{\eta}}(r_{5})$\,. We can calculate the maximum value of $e_{1_{0.02}}(r_{5})+e_{2_{0}}(r_{5})+e_{3_{\eta}}(r_{5_{\eta}})$ for $r_{5}=0.00985$\,. Thus, we can separate the domain of $r_{5}$ into three intervals: $(0,0.00985]$, $(0.00985,0.055]$, and $(0.55,b/2]$.

\begin{itemize}
\item Part IIa: for $r_{5}\in(0,0.00985]$\,. From the Proposition \ref{auxextra1} the functions of the family $\lambda_{31_{\eta}}(r_{5})$ are monotonically decreasing and a lower bound is $\lambda_{31_{0}}(0.00985)=9.25381$\,. On other hand, we calculate an upper bound of $e_{1_{0.02}}(r_{5})+e_{2_{0}}(r_{5})+e_{3_{\hat{\eta}(0)}}(r_{5})$ which is $e_{1_{0.02}}(0.00985)+e_{2_{0}}(0.00985)+e_{3_{\eta}}(r_{5})=2.88933$\,. Then $\lambda_{11_{\eta}}(r_{5})<\lambda_{31_{\eta}}(r_{5})$.

\item Part IIb: for $r_{5}\in(0.00985,0.55] $.We can verify the monotonicity of the family of functions $\lambda_{31_{\eta}}(r_{5})$, which is increasing (Proposition \ref{auxextra1})\,. We also calculate a lower bound of $\lambda_{31_{\eta}}(r_{5})$ which is $\lambda_{31_{0}}(0.55)=3.22774$\,. On other hand, we calculate the maximum value of  $e_{1_{0.02}}(r_{5})+e_{2_{0}}(r_{5})+e_{3_{\hat{\eta}(0)}}(r_{5})$, that is $e_{1_{0.02}}(0.55)+e_{2_{0}}(0.55)+e_{3_{\eta}}(r_{5_{\eta}})=1.86172$\,. Then, $\lambda_{11_{\eta}}(r_{5})<\lambda_{31_{\eta}}(r_{5})$.

\item Part IIc: for $r_{5}\in\left(0.55,b/2\right]$\,. We can establish the decreasing monotony behavior of the functions of the family $\lambda_{31_{\eta}}(r_{5})$ (Proposition \ref{auxextra1}), so the family of minimum values (parametrized by $\eta$) are $\lambda_{31_{\eta}}(b/2)$\,. This family is a strictly increasing family and a lower bounded is $\lambda_{31_{0}}\left(b/2\right)=2.70464$\,. As in the last items, the maximum value of the function $e_{1_{0.02}}(r_{5})+e_{2_{0}}(r_{5})+e_{3_{\hat{\eta}(0)}}(r_{5})$, corresponds to \mbox{$e_{1_{0.02}}(0.55)+e_{2_{0}}(0.55)+e_{3_{\eta}}(r_{5_{\eta}})=1.86172$}\,\,. Therefore, $\lambda_{11_{\eta}}(r_{5})<\lambda_{31_{\eta}}(r_{5})$.
\end{itemize}
\end{itemize}
\end{proof}

\begin{proposition}\label{proposicionJ2}
For the region $J_{2}$, is true that $\lambda_{41_{\eta}}(r_{5})<\lambda_{31_{\eta}}(r_{5})$.
\end{proposition}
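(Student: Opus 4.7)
The plan is to mirror the strategy used in Proposition~\ref{proposicionJ1}, but with body~$4$ playing the role of body~$1$ on the left-hand side. Since $J_2$ shares with $J_1$ the constraint $0 < r_3 \leq b/2$, the parametrization $r_3 = b/(2+\eta)$ with $\eta\in(0,\infty)$ transfers directly; only the range of $r_5$ changes, from $(0,b/2]$ in $J_1$ to $(b/2,1]$ in $J_2$. So I would first decompose
\[
\lambda_{41_\eta}(r_5) = g_{1_\eta}(r_5) + g_{2_\eta}(r_5) + g_{3_\eta}(r_5) + g_{4_\eta}(r_5),
\]
where each $g_{k_\eta}$ corresponds to the interaction of body~$4$ with one of the remaining bodies, and retain the decomposition $\lambda_{31_\eta}(r_5) = f_{1_\eta} + f_{2_\eta} + f_{3_\eta} + f_{4_\eta}$ from Subsection~\ref{auxiliar}, restricted now to $r_5 \in (b/2,1]$.

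The second step is to catalogue, by direct computation, the monotonicity, concavity, and sign properties of each $g_{k_\eta}$ and of the $f_{k_\eta}$ on the new $r_5$-interval, in the style of Properties~\ref{es}--\ref{spfs}. Most statements follow immediately because the summands are rational functions of the mutual distances $r_{ij}$; the only work is identifying the finitely many critical values of $\eta$ and $r_5$ at which a given family switches monotonicity, and recording those breakpoints once and for all. I would expect a split of the $\eta$-axis into a small-$\eta$ piece (roughly $[0,0.02]$, as in Proposition~\ref{auxextra1}) and a large-$\eta$ piece $(0.02,\infty)$ (as in Proposition~\ref{aux5}).

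Once these auxiliary properties are in place, I would partition the rectangle $(0,\infty)\times(b/2,1]$ into finitely many sub-rectangles chosen so that on each one a crude upper bound for $\lambda_{41_\eta}(r_5)$ sits strictly below a crude lower bound for $\lambda_{31_\eta}(r_5)$. The crude bounds would be constructed exactly as in Propositions~\ref{aux2}--\ref{aux5}: evaluating each summand at its $r_5$-extremum, taking the appropriate $\eta$-limit, and, where a family is concave or convex in $r_5$ or $\eta$, replacing it by a tangent line or a chord. The breakpoints on $r_5$ and $\eta$ should be selected numerically so that the two envelopes remain separated by a definite margin on each sub-rectangle.

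The main obstacle is the same technical burden that dominates the $J_1$ analysis. I expect $g_{4_\eta}$ to be the troublesome summand, with $d_\eta g_{4_\eta}$ changing sign as $\eta$ crosses one or two critical values and $g_{4_\eta}'$ changing sign in $r_5$, exactly as happened for $f_{4_\eta}$ in Properties~\ref{dfs}; handling those non-monotone intervals requires the finest sub-division of the domain and the construction of piecewise-linear majorants such as the $L_{i_\eta}(r_5)$ used repeatedly in Propositions~\ref{aux4} and~\ref{aux5}. A further subtlety is that the endpoint $r_5 = b/2$ must be treated consistently with the $J_1$ proof so that the combined argument actually covers the closed strip; this is the reason the authors label their regions using half-open intervals at $b/2$. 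As in the rest of the paper, the explicit numerical verifications on each sub-rectangle would be relegated to the supplementary material.
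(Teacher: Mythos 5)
Your overall strategy---bounding $\lambda_{41_\eta}$ from above and $\lambda_{31_\eta}$ from below over a partition of the $(\eta,r_5)$-domain---is sound and would eventually succeed, but you have substantially over-engineered the problem, and in doing so you have missed the feature that makes $J_2$ easy. On $J_2$ one has $r_5\in(b/2,1]$, and there the two families are \emph{globally} separated, so the paper's proof consists of exactly two numerical comparisons. For $\eta\in[0,0.02]$ the single function $\lambda_{41_0}(r_5)$ is an upper bound for the whole family $\lambda_{41_\eta}(r_5)$ and is decreasing in $r_5$, so its maximum on the region is $\lambda_{41_0}(b/2)=1.32438$; meanwhile $\lambda_{31_\eta}(1)$ is increasing in $\eta$, so $\lambda_{31_0}(1)=1.36009$ bounds $\lambda_{31_\eta}(r_5)$ from below. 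For $\eta\in(0.02,\infty)$ the analogous comparison with the $\eta=0.02$ members gives $\lambda_{41_{0.02}}(b/2)=1.30144<1.37246=\lambda_{31_{0.02}}(1)$. No decomposition of $\lambda_{41_\eta}$ into four summands $g_{k_\eta}$, no catalogue of monotonicity and concavity properties, and no piecewise-linear majorants are required; the heavy machinery of Propositions~\ref{aux2}--\ref{auxextra1} is needed in $J_1$ only because there the graphs of $\lambda_{11_\eta}$ and $\lambda_{31_\eta}$ approach each other, which does not happen on $J_2$.

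Beyond efficiency, note that as written your proposal establishes nothing yet: every quantitative step---the monotonicity catalogue for the $g_{k_\eta}$, the choice of breakpoints, and the verification that the upper envelope of $\lambda_{41_\eta}$ stays below the lower envelope of $\lambda_{31_\eta}$ on each sub-rectangle---is deferred. The assertion that breakpoints ``should be selected numerically so that the two envelopes remain separated by a definite margin'' is precisely the content of the proposition, so until those numbers are exhibited the argument is a programme rather than a proof. Finally, the worry about consistency at $r_5=b/2$ is moot: the regions are taken half-open so that $r_5=b/2$ belongs to $J_1$ only, and the two propositions are logically independent.
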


\begin{proof}
For monotonicity, we divide the interval of $\eta$ into $[0,0.02]$ and $(0.02,\infty)$.

\begin{itemize}
\item \underline{Part I}: for $\eta\in[0,0.02]$\,. We compare the function $\lambda_{41_{0}}(r_{5})$, which is an upper bound of the family $\lambda_{41_{\eta}}(r_{5})$, with the family $\lambda_{31_{\eta}}(r_{5})$, which functions are monotonically decreasing. We can verify that the family of values $\lambda_{31_{\eta}}(1)$ is strictly increasing\,, therefore, $\lambda_{31_{0}}(1)=1.36009$ is a lower bound of this family\,. On the other hand, we can verify that the function $\lambda_{41_{0}}(r_{5})$ is monotonically decreasing and its maximum value is $\lambda_{41_{0}}\left(b/2\right)=1.32438$\,. Hence, $\lambda_{31_{\eta}}(r_{5})>\lambda_{41_{\eta}}(r_{5})$\,. See Fig.~\ref{fig20a}.

\item \underline{Part II}: for $\eta\in(0.02,\infty)$\,. We can verify that a lower bound for the family $\lambda_{31_{\eta}}(r_{5})$  is $\lambda_{31_{0.02}}(r_{5})$ and the maximal function of the family $\lambda_{41_{\eta}}(r_{5})$ is $\lambda_{41_{0.02}}(r_{5})$\,. Both functions are monotonically decreasing, so the minimum value of $\lambda_{31_{0.02}}(r_{5})$ is $\lambda_{31_{0.02}}(1)=1.37246$ and the maximum value of $\lambda_{41_{0.02}}(r_{5})$ is $\lambda_{41_{0.02}}\left(b/2\right)=1.30144$\,. Thus $\lambda_{31_{0.02}}(r_{5})>\lambda_{41_{0.02}}(r_{5})$\,. See Fig.~\ref{fig20b}.

\end{itemize}
Therefore $\lambda_{31_{\eta}}(r_{5})>\lambda_{41_{\eta}}(r_{5})$.

\begin{figure}[ht]
 \centering
  \subfloat[functions $\lambda_{31_{\eta}}(r_{5})$ and $\lambda_{41_{\eta}}(r_{5})$ for $\eta=0.0102$]{
  \label{fig20a}
   \includegraphics[width=0.4\textwidth]{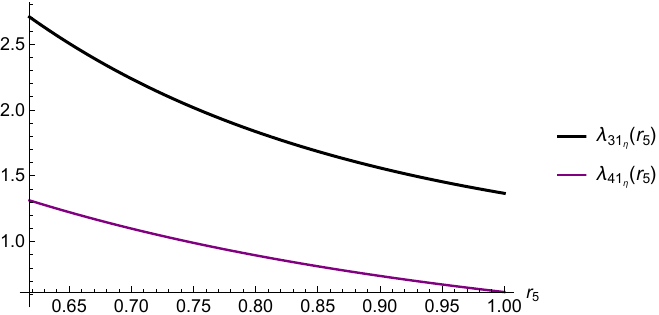}}
  \subfloat[functions $\lambda_{31_{\eta}}(r_{5})$ and $\lambda_{41_{\eta}}(r_{5})$ for $\eta=8.3$]{
   \label{fig20b}
    \ \ \ \ \ \includegraphics[width=0.4\textwidth]{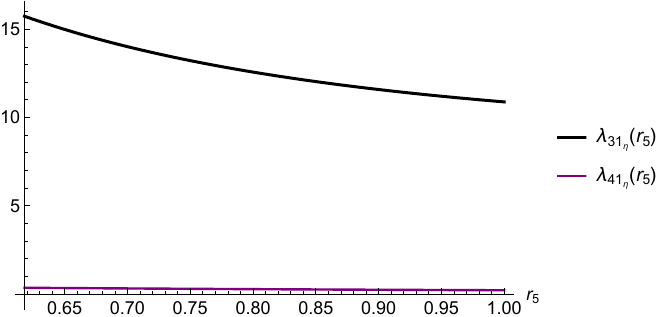}} 
   \caption{ }
\end{figure}
\end{proof}

For the regions $J_{3}$, $J_{4}$ and $J_{5}$ we define the following family of functions:

\begin{equation*}
\lambda_{ik_{\zeta}}(r_{3})\colonequals\lambda_{ik}\left(r_{3},\frac{b}{2+\zeta}\right)\,, \zeta\in [0,nd(\zeta)),
\end{equation*}
here $nd(\zeta)\colonequals b/2\left(\frac{4+\zeta}{2+\zeta}\right)$\,. Also we define $nz\colonequals\frac{4(b-1)}{2-b}$.

\begin{proposition}\label{proposicionJ3}
For the region $J_{3}$, is true that $\lambda_{11_{\zeta}}(r_{3})<\lambda_{52_{\eta}}(r_{3})$, for $\zeta\in[0,nz)$.
\end{proposition}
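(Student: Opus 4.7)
The plan is to mirror the strategy used for $J_1$ (Propositions \ref{proposicionJ1}--\ref{proposicionJ2}), but with the roles of $r_3$ and $r_5$ essentially exchanged, since in $J_3$ the parametrization $r_5=b/(2+\zeta)$ makes $r_3\in(b/2,1)$ the free variable. The goal is to produce an explicit upper bound for $\lambda_{11_\zeta}(r_3)$ and an explicit lower bound for $\lambda_{52_\zeta}(r_3)$ that provably separate on the whole parameter range, thereby forcing $\lambda_{11_\zeta}(r_3)<\lambda_{52_\zeta}(r_3)$. The starting point is to decompose each $\lambda_{ik_\zeta}$ into its four or five interaction summands (one per pair with body $i$), just as $\lambda_{11_\eta}=e_{1_\eta}+e_{2_\eta}+e_{3_\eta}$ and $\lambda_{31_\eta}=f_{1_\eta}+\cdots+f_{4_\eta}$ were split in Section~\ref{auxiliar}.

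First I would establish auxiliary properties analogous to Properties~\ref{es}--\ref{spfs} for the $\zeta$-families arising in $J_3$: monotonicity and concavity/convexity in $r_3$ for each summand of $\lambda_{11_\zeta}$ and $\lambda_{52_\zeta}$, together with sign information for $(d_\zeta\lambda_{11_\zeta})$ and $(d_\zeta\lambda_{52_\zeta})$ on suitable subintervals of $[0,nz)$. As in the $J_1$ analysis, this is straightforward computation (differentiate the explicit trigonometric/radical expressions and check signs), but it must be carried out on a partition $[0,\zeta_1]\cup(\zeta_1,\zeta_2]\cup\cdots\cup(\zeta_{k-1},nz)$ fine enough that each derivative has fixed sign on each piece. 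Once that is done, the strict monotonicity in $\zeta$ of the combined families reduces the two-parameter comparison to a one-parameter comparison of a worst-case upper envelope $L_{\mathrm{up}}(r_3)$ for $\lambda_{11_\zeta}(r_3)$ against a worst-case lower envelope $L_{\mathrm{lo}}(r_3)$ for $\lambda_{52_\zeta}(r_3)$, each obtained by substituting the extremal value of $\zeta$ dictated by the sign of $d_\zeta$.

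Next I would verify $L_{\mathrm{up}}(r_3)<L_{\mathrm{lo}}(r_3)$ on $(b/2,1)$ by the piecewise-constant step-function technique used in Propositions~\ref{aux2}, \ref{proposicionJ1}, and \ref{proposicionJ2}: choose a finite list of cut points $b/2=s_0<s_1<\cdots<s_m=1$, and on each $(s_{j-1},s_j]$ bound $L_{\mathrm{up}}$ above by its value at the endpoint where monotonicity places its maximum and $L_{\mathrm{lo}}$ below by its value at the opposite endpoint, then check numerically that the resulting two scalars separate. The cut points would be chosen so that each subinterval is short enough for the step bound to beat the true function, which for the comparison $\lambda_{11}\ll\lambda_{52}$ should require only a few cuts (likely near the boundaries $r_3=b/2$ and $r_3=1$ where the relevant distances shrink).

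The main obstacle will be the endpoint $r_3\to 1$, where several mutual distances in $\lambda_{11_\zeta}$ remain bounded away from zero while the distance $r_{35}$ appearing in $\lambda_{52_\zeta}$ can be small, making the magnitudes of both sides large and the separation delicate. Concretely, one must either refine the partition in $r_3$ near $1$ or use tangent-line envelopes (as with $L_{150_\eta}$--$L_{154_\eta}$ in Proposition~\ref{aux3}) rather than constant step bounds to keep the inequality strict. A secondary subtlety is the corner $\zeta\to nz$, i.e.\ $r_5\to(2-b)/2$, where $J_3$ meets $J_4$; there the $\zeta$-monotonicity of some summands of $\lambda_{52_\zeta}$ can reverse, so that piece of the partition may need its own dedicated envelope argument rather than the global one.
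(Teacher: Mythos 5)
Your proposal follows essentially the same route as the paper: the paper likewise uses monotonicity in the parameter $\zeta$ to reduce the comparison to the single pair $\lambda_{11_{nz}}(r_3)$ versus $\lambda_{52_{0}}(r_3)$, then separates them on $(b/2,0.68]$ by piecewise-constant step bounds ($L_{18}$, $L_{19}$) and on $(0.68,1]$ by a linear intermediary $L_{20}$ with tangent-line comparisons exploiting convexity — exactly the endpoint refinement you anticipated near $r_3=1$. Your plan is consistent with the paper's proof, differing only in that the paper works directly with the full $\lambda$ functions rather than re-deriving summand-level properties.
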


\begin{proof}
The proof involves establishing a bound for the families of functions \(\lambda_{11_{\zeta}}(r_{3})\) and \(\lambda_{52_{\zeta}}(r_{3})\). We can determine which families of functions are strictly decreasing, and it is sufficient to demonstrate that \(\lambda_{11_{nz}}(r_{3}) < \lambda_{52_{0}}(r_{3})\). To do this, we divide the domain of \(r_{3}\) into two intervals: \(\left(b/2, 0.68\right]\) and \((0.68, 1]\).

\begin{itemize}
\item \underline{Part I}: for $r_{3}\in\left(b/2,0.68\right]$\,. We defined the functions using the following piecewise functions: \( L_{18}(r_{3}) \) and \( L_{19}(r_{3}) \). See Fig.~\ref{fig21a}.

\begin{equation*}
L_{18}(r_3)= \left\{ \begin{array}{lll}
          \lambda_{52_{0}}\left(b/2\right)=2.57116,  &b/2< r_3 \leq 0.66\,, \\
             \\    \lambda_{52_{0}}\left(0.66\right)=2.58854, &0.66<r_3\leq0.674\,, \\
						 \\    \lambda_{52_{0}}\left(0.674\right)=2.60018, &0.674<r_3\leq 0.68\,, 
             \end{array}
   \right.
\end{equation*}

\begin{equation*}
L_{19}(r_3)= \left\{ \begin{array}{lll}
          \lambda_{11_{nz}}\left(0.66\right)=2.56495,  &b/2< r_3 \leq 0.66\,, \\
             \\    \lambda_{11_{nz}}\left(0.674\right)=2.58741, &0.66<r_3\leq 0.674\,, \\
						 \\    \lambda_{11_{nz}}\left(0.68\right)=2.596902, &0.674<r_3 \leq 0.68\,.
             \end{array}
   \right.
\end{equation*}

Clearly, $L_{19}(r_{3})<L_{18}(r_{3})$, then $\lambda_{11_{nz}}(r_{3})<\lambda_{52_{0}}(r_{3})$\,.

\item \underline{Part II}: for $r_{3}\in(0.68,1]$\,. The function  $\lambda_{52_{0}}(r_{3})$ is convex. Consider the linear function $L_{20}(r_{3}) = \frac{2.6 - \lambda_{11_{nz}}(0.72)}{0.68 - 0.72}(r_{3} - 0.68) + 2.601$. It holds that  $L_{20}(r_{3}) > \lambda_{11_{nz}}(r_{3})$. Additionally, we can analyze the tangent lines $L_{21}(r_{3})$ and $L_{24}(r_{3})$ to the function $\lambda_{52_{0}}(r_{3}) - L_{20}(r_{3})$ at the points $r_{3} = 0.68$, $r_{3} = 0.7$, $r_{3} = 0.72$, and $r_{3} = 1$. We can verify that the intersections of these lines occur at positive points, which leads us to conclude that $\lambda_{52_{0}}(r_{3}) > L_{20}(r_{3})$.

Finally, we construct the tangent lines to $L_{20}(r_{3})-\lambda_{11_{nz}}(r_{3})$ for $r_{3}=0.68$, $r_{3}=0.73$ and $r_{3}=1$\,. Let $L_{25}(r_{3})$ to $L_{27}(r_{3})$ be such these tangents lines\,. We can verify that the intersection of the lines is in positive points, then $L_{20}(r_{3})>\lambda_{11_{nz}}(r_{3})$\,. Then, $\lambda_{11_{nz}}(r_{3})<\lambda_{52_{0}}(r_{3})$. See Fig.~\ref{fig21b}.
\end{itemize}

Therefore,  $\lambda_{11_{\zeta}}(r_{3})<\lambda_{52_{\zeta}}(r_{3})$. 
\begin{figure}[ht]
 \centering
  \subfloat[Part I]{
  \label{fig21a}
   \includegraphics[width=0.45\textwidth]{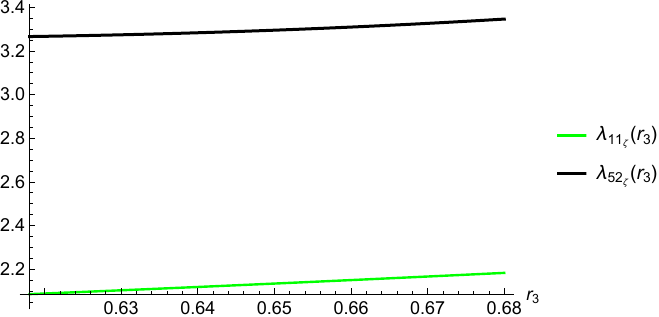}}
  \subfloat[Part II]{
   \label{fig21b}
    \ \ \ \ \ \includegraphics[width=0.45\textwidth]{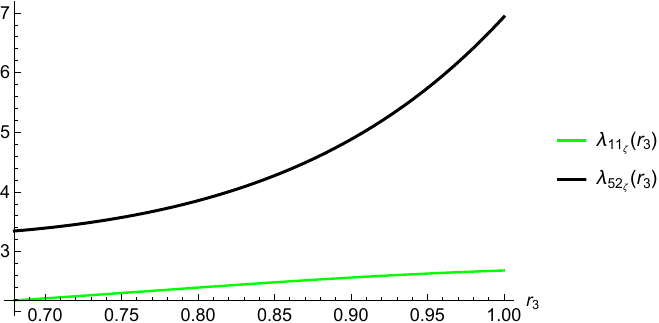}} 
   \caption{functions $\lambda_{11_{\zeta}}(r_{3})$ and $\lambda_{52_{\zeta}}(r_{3})$ for $\eta=0.328794$ }
\end{figure}
\end{proof}

\begin{proposition}\label{proposicionJ4}
For the region $J_{4}$, is true that $\lambda_{11_{\zeta}}(r_{3})<\lambda_{52_{\zeta}}(r_{3})$, $\zeta\in[nz,nd(\zeta))$.
\end{proposition}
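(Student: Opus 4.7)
The approach is to mirror the proof of Proposition \ref{proposicionJ3}, transposed to the parameter range $\zeta\in[nz,nd(\zeta))$, which corresponds to $r_{5}\in(0,(2-b)/2]$ and $r_{3}\in(b/2,r_{5}+b/2)$. As a first step, I would verify that on this new parameter range the family $\lambda_{11_{\zeta}}(r_{3})$ is strictly monotone in $\zeta$ in one direction, while $\lambda_{52_{\zeta}}(r_{3})$ is strictly monotone in the opposite direction — presumably with the same signs encountered in the proof of Proposition \ref{proposicionJ3} — by differentiating the explicit formulas with respect to $\zeta$. Once these monotonicities are in hand, the desired inequality for every admissible $\zeta$ reduces to a single extremal comparison, obtained by replacing $\lambda_{11_{\zeta}}$ with its worst-case member in $\zeta$ and $\lambda_{52_{\zeta}}$ with its worst-case member in $\zeta$, analogous to the reduction to $\lambda_{11_{nz}}(r_{3})<\lambda_{52_{0}}(r_{3})$ used in Proposition \ref{proposicionJ3}.

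Next, I would partition the $r_{3}$-interval into finitely many subintervals, tailored so that on each piece the two extremal functions can be bounded by simple computable objects. On subintervals where the two curves are well separated, piecewise-constant upper bounds for the $\lambda_{11}$-side and piecewise-constant lower bounds for the $\lambda_{52}$-side — obtained from the monotonicity of each individual function in $r_{3}$ — yield the inequality after a finite list of numerical evaluations, just as in Part I of Proposition \ref{proposicionJ3}. On any narrower subinterval near the top of the $r_{3}$-range where the two curves come close, I would exploit the convexity of $\lambda_{52_{\zeta}}(r_{3})$ in $r_{3}$ to construct a linear overestimate $L(r_{3})$ for $\lambda_{11_{\zeta}}(r_{3})$ and then build tangent-line minorants of $\lambda_{52_{\zeta}}(r_{3})-L(r_{3})$ at carefully chosen nodes, exactly mimicking the roles of $L_{20}$ through $L_{27}$ in Part II of Proposition \ref{proposicionJ3}. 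Confirming that the relevant tangent-line intersections fall at points outside the subinterval of interest closes each subcase.

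The principal obstacle is the $\zeta$-dependence of the admissible $r_{3}$-interval: unlike in $J_{3}$, where the range was uniformly $(b/2,1]$, in $J_{4}$ the upper endpoint $r_{5}+b/2$ collapses toward $b/2$ as $\zeta$ approaches the boundary of its range, so the interval shrinks to a point. Consequently, the partition of $r_{3}$ must be chosen so that every evaluation node remains in the admissible interval for every $\zeta$ for which it is used, and the limiting behavior of both $\lambda_{11_{\zeta}}$ and $\lambda_{52_{\zeta}}$ as $r_{3}\to(b/2)^{+}$ (equivalently, as $r_{5}\to 0^{+}$) must be analyzed to confirm that the inequality persists up to the moving boundary. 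The natural way to handle this is to identify the leading-order singular terms on each side and verify that the leading coefficient of $\lambda_{52_{\zeta}}$ dominates that of $\lambda_{11_{\zeta}}$; once this limiting regime is controlled, the remainder of the argument is the same mechanical bookkeeping of numerical evaluations at partition nodes developed in Proposition \ref{proposicionJ3}.
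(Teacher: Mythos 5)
Your plan takes a genuinely different, and much heavier, route than the paper. The paper's proof of this proposition is essentially two numbers: over all of $J_{4}$ the family $\lambda_{11_{\zeta}}(r_{3})$ is bounded above by $2.95524$, while the family $\lambda_{52_{\zeta}}(r_{3})$ is bounded below by its extremal member $\lambda_{52_{nz}}$, which is increasing in $r_{3}$ with minimum $\lambda_{52_{nz}}\bigl(b/2\bigr)=5.76142$. Since $2.95524<5.76142$, the inequality holds uniformly and no partition of the $r_{3}$-interval, no tangent-line minorants, and no analysis of the shrinking admissible interval are needed. The structural point you miss is that $J_{4}$ is not like $J_{3}$: in $J_{3}$ the two curves genuinely approach each other (hence the elaborate $L_{18}$--$L_{27}$ apparatus), whereas in $J_{4}$ they are separated by a gap of nearly $3$ throughout, so the entire second half of your plan (linear overestimates, tangent-line intersections, leading-order singular terms at the moving boundary) addresses a difficulty that does not arise. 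Your worry about the limit $r_{3}\to(b/2)^{+}$ ``equivalently $r_{5}\to 0^{+}$'' also conflates two different limits: $r_{3}\to(b/2)^{+}$ is the fixed left endpoint of the $r_{3}$-range for every $\zeta$, while $r_{5}\to 0^{+}$ is the limit $\zeta\to\infty$ in which the whole interval collapses; neither produces a collision singularity in $J_{4}$, and the uniform bounds absorb both.

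Beyond the mismatch of strategy, note that as written your text is a plan rather than a proof: every substantive claim (the monotonicity of each family in $\zeta$, the choice of partition nodes, the numerical evaluations, the positivity of the tangent-line intersections) is deferred with ``I would verify'' or ``presumably with the same signs.'' To turn it into a proof you would still have to produce explicit bounds, and the moment you compute any single value of $\lambda_{11_{\zeta}}$ and $\lambda_{52_{\zeta}}$ on $J_{4}$ you would discover that the crude global separation already closes the argument, making the rest of the machinery unnecessary.
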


\begin{proof}
An upper bound for $\lambda_{11_{\zeta}}(r_{3})$ is $2.95524$. The function $\lambda_{52_{nz}}(r_{3})$ is increasing and its minimal value is $\lambda_{52_{nz}}\left(b/2\right)=5.76142$. Therefore, $\lambda_{11_{\zeta}}(r_{3})<\lambda_{52_{\zeta}}(r_{3})$\,. See Fig.~\ref{fig22}.

\begin{figure}[ht]
	\centering
		\includegraphics[scale=0.7]{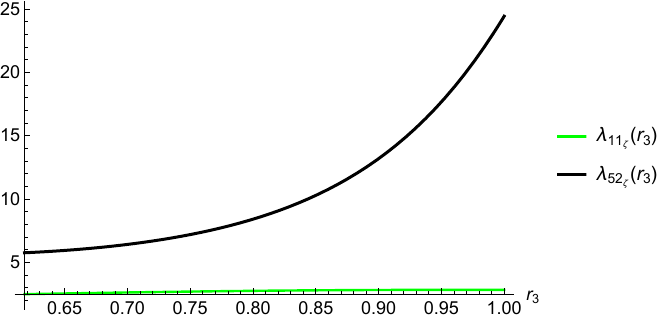}
	\caption{functions $\lambda_{11_{\zeta}}(r_{3})$ and $\lambda_{52_{\zeta}}(r_{3})$ for $\zeta=1.23607$\,.}
		\label{fig22}
\end{figure}
\end{proof}

\begin{proposition}\label{proposicionJ5}
For the region $J_{5}$, is true that $\lambda_{11_{\zeta}}(r_{3})<\lambda_{52_{\zeta}}(r_{3})$, $\zeta\in[0,nd(\zeta))$.
\end{proposition}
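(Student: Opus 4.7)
The plan is to follow the template already established by the proofs of Propositions \ref{proposicionJ3} and \ref{proposicionJ4}. First I would verify the monotonicity of the two families $\lambda_{11_{\zeta}}(r_{3})$ and $\lambda_{52_{\zeta}}(r_{3})$ with respect to the parameter $\zeta$, on the relevant interval $\zeta\in[0,nz)$ that parametrizes $r_{5}\in[(2-b)/2,b/2)$. Following the pattern of Proposition~\ref{proposicionJ3}, I expect $\lambda_{11_{\zeta}}(r_{3})$ to be strictly decreasing in $\zeta$ and $\lambda_{52_{\zeta}}(r_{3})$ to be strictly increasing in $\zeta$, so that the entire family-inequality on $J_5$ reduces to the single worst-case estimate
\[
\lambda_{11_{nz}}(r_{3})<\lambda_{52_{0}}(r_{3}),\qquad r_{3}\in(1,b),
\]
since in $J_5$ one has $r_{3}<r_{5}+b/2<b$. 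The monotonicity claims will be checked by decomposing each family into the same pieces $e_{i_\zeta}$ and $f_{j_\zeta}$ used in Subsection~\ref{auxiliar}, and applying tangent-line majorizations of the $\zeta$-derivatives in the style of Propositions~\ref{aux2}--\ref{auxextra1}.

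Once the reduction to the single worst-case inequality is in hand, I would partition $(1,b)$ into a small number of subintervals and, on each one, replace $\lambda_{11_{nz}}$ by a piecewise-constant (or piecewise-linear) upper bound $L_{u}(r_{3})$ and $\lambda_{52_{0}}$ by a piecewise-constant lower bound $L_{\ell}(r_{3})$, exactly as in the construction of $L_{18},L_{19}$ in the proof of Proposition~\ref{proposicionJ3}. On subintervals close to $r_{3}=1$ the separation $L_{\ell}-L_{u}$ is expected to be comfortably positive (compare with the large gap observed in Proposition~\ref{proposicionJ4}), so only a coarse partition is needed there.

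Near the upper end $r_{3}\to b$ the two curves approach one another, and there a coarse piecewise-constant comparison will likely fail. In that regime I would mimic Part~II of Proposition~\ref{proposicionJ3}: exploit the convexity/concavity of $\lambda_{52_{0}}$ and $\lambda_{11_{nz}}$ to introduce tangent lines $L_{k}(r_{3})$ at a few well-chosen base points, show that $\lambda_{52_{0}}(r_{3})\geq L_{k}(r_{3})\geq\lambda_{11_{nz}}(r_{3})$ by checking only the positivity of intersections of consecutive tangents, and glue the local estimates to cover the whole of $(1,b)$.

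The main obstacle I anticipate is the verification of the $\zeta$-monotonicity uniformly in $r_{3}$, because in $J_5$ the domain of $r_{3}$ itself depends on $\zeta$ through the constraint $r_{3}<b/(2+\zeta)+b/2$; one must therefore be careful when differentiating under a moving boundary. A second, more technical difficulty is the region $r_{3}\in(1.3,b)$ combined with $\zeta$ close to $nz$, where the configuration is near a binary collision and the curvature of $\lambda_{11_{\zeta}}$ grows rapidly; there the tangent-line comparison will have to be done with several base points and possibly with an explicit computation of the intersection abscissas, as was already done in the corresponding subcase of $J_3$.
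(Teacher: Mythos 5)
Your overall strategy (reduce via $\zeta$-monotonicity to a single worst-case curve comparison, then bound piecewise) is workable in principle, but it is considerably heavier than what the situation requires: the paper's actual argument for $J_5$ is a two-line wide-gap estimate. Namely, the family $\lambda_{52_{\zeta}}(r_{3})$ is increasing in $\zeta$, so $\lambda_{52_{0}}(r_{3})$ is a lower bound; this function is monotonically increasing in $r_{3}$ with minimum $\lambda_{52_{0}}(1)=4.4042$; and the entire family satisfies the uniform bound $\lambda_{11_{\zeta}}(r_{3})<2.90749$ on $J_5$. Since $2.90749<4.4042$, the inequality holds with a gap of about $1.5$ everywhere. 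In particular your expectation that ``near the upper end $r_{3}\to b$ the two curves approach one another'' is not borne out, so the tangent-line machinery you reserve for that regime (modelled on Part II of Proposition \ref{proposicionJ3}) is unnecessary: $J_5$ behaves like $J_4$, not like $J_3$. Two further points. First, your reduction contains an internal inconsistency: with the paper's convention for a strictly decreasing family, if $\lambda_{11_{\zeta}}$ were strictly decreasing in $\zeta$ then its largest member would be $\lambda_{11_{0}}$, not $\lambda_{11_{nz}}$, so the worst-case inequality $\lambda_{11_{nz}}(r_{3})<\lambda_{52_{0}}(r_{3})$ does not follow from the monotonicity you assert; you must either verify that the family is in fact increasing in $\zeta$ or, as the paper does, bypass the issue entirely with a single constant upper bound valid for all $\zeta$. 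Second, your worry about differentiating under the moving boundary $r_{3}<b/(2+\zeta)+b/2$ becomes moot once one works with bounds that are uniform over the fixed rectangle $\left\{1<r_{3}<b,\ (2-b)/2\le r_{5}<b/2\right\}\supset J_{5}$.
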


\begin{proof}
We realize that the family of functions $\lambda_{52_{\zeta}}(r_{3})$ is increasing, then the function $\lambda_{52_{0}}(r_{3})$ is a lower bound function and this is monotonically increasing with a minimum value $\lambda_{52_{0}}(1)=4.4042$\,. On the other hand, we verify that $\lambda_{11_{\zeta}}(r_{3})<2.90749$. Therefore, $\lambda_{11_{\zeta}}(r_{3})<\lambda_{52_{\zeta}}(r_{3})$. See Fig.~\ref{fig23}.

\begin{figure}[ht]
	\centering
		\includegraphics[scale=0.8]{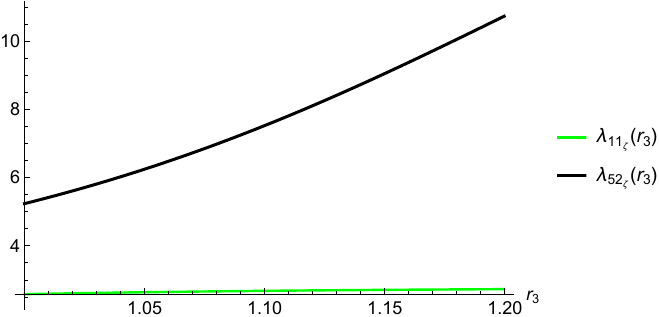}
	\caption{functions $\lambda_{11_{\zeta}}(r_{3})$ and $\lambda_{52_{\zeta}}(r_{3})$ for $\zeta=1.25$\,.}
		\label{fig23}
\end{figure}
\end{proof}

For the region $J_{6}$, we have defined the following family of functions:
\begin{equation*}
\lambda_{ik_{\mu}}(r_{5})\colonequals\lambda_{ik}\left(b+\mu, r_{5}\right)\,,  \quad \mu\in\left[0,2/b-b\right)\,.
\end{equation*}

\begin{proposition}\label{proposicionJ6}
For the region $J_{6}$, is true that $\lambda_{11_{\mu}}(r_{5})>\lambda_{31_{\mu}}(r_{5})$.
\end{proposition}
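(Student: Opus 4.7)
The strategy I would follow mirrors that of Proposition~\ref{proposicionJ1}, but now the roles of the two families are swapped (we want $\lambda_{11_\mu}>\lambda_{31_\mu}$) and the parameter $\mu$ moves $r_3$ across the interval $[b,2/b)$. Note first that the regular pentagon point $(1,1)$ does not lie in $J_6$, since in $J_6$ we have $r_3>b>1$; hence strict inequality is a priori reasonable everywhere on the region.

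First I would decompose
$\lambda_{11_\mu}(r_5)=\tilde e_{1_\mu}(r_5)+\tilde e_{2_\mu}(r_5)+\tilde e_{3_\mu}(r_5)$ and
$\lambda_{31_\mu}(r_5)=\tilde f_{1_\mu}(r_5)+\cdots+\tilde f_{4_\mu}(r_5)$, grouping the terms in the Newtonian sums \eqref{sistema} according to which pairwise distance $r_{ij}$ appears, exactly as was done in Subsection~\ref{auxiliar}. Then I would establish, by straightforward differentiation in $r_5$ and in $\mu$, an analogue of Properties~\ref{es}--\ref{spfs} describing the sign and monotonicity of each $\tilde e_{j_\mu}$, $\tilde f_{j_\mu}$, their $r_5$-derivatives, and their $\mu$-derivatives on the slab $r_5\in[b/2+\mu,1)$, $\mu\in[0,2/b-b)$. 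Because $r_3\geq b>1$ forces $r_{12},r_{15}\leq r_{13},r_{14}$ throughout $J_6$, the dominant terms in $\lambda_{11_\mu}$ come from the reciprocals of $r_{12}^3$ and $r_{15}^3$, whereas in $\lambda_{31_\mu}$ the large $r_{13}$, $r_{14}$ appearing in denominators keep $\tilde f$ small; this is the structural reason for the reversed inequality.

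Second, following the same piecewise-bound technique used in the proofs of Propositions~\ref{aux2}--\ref{proposicionJ1}, I would partition the $(\mu,r_5)$-rectangle of $J_6$ into finitely many subregions (splitting $\mu$ at a handful of thresholds and $r_5$ at the corresponding intermediate values at which the monotonicity type of some $\tilde e_{j_\mu}$ or $\tilde f_{j_\mu}$ changes), and in each subregion construct explicit piecewise-constant or piecewise-linear auxiliary functions $L(r_5)$ and $\tilde L(r_5)$ such that $\lambda_{31_\mu}(r_5)\leq \tilde L(r_5)<L(r_5)\leq \lambda_{11_\mu}(r_5)$. The constants would be obtained by evaluating the monotone envelopes of each $\tilde e_{j_\mu}$ and $\tilde f_{j_\mu}$ at the endpoints of the subregion, together with the maximum/minimum values guaranteed by the analogues of Properties~\ref{es}; where straight monotone bounds are not tight enough, I would replace them by tangent lines (using the convexity/concavity analogues of Properties~\ref{dfs}, \ref{pfs}, \ref{spfs}) just as in Proposition~\ref{aux3}.

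The main obstacle I expect is in the corner of $J_6$ where $\mu\to 2/b-b$ and $r_5\to 1$ simultaneously: here $r_3$ and $r_5$ both approach the boundary values of the region, several mutual distances become comparable, and the margin $\lambda_{11_\mu}-\lambda_{31_\mu}$ shrinks. In this corner a purely linear separator will not suffice, and I would need to subdivide further and invoke the second-derivative (convexity) estimates for $d_\mu\tilde f_{j_\mu}$ analogous to Properties~\ref{spfs}, exactly as was needed in Part~II of the proof of Proposition~\ref{aux5}. Once the separation is verified in each subregion, combining them yields the global strict inequality $\lambda_{11_\mu}(r_5)>\lambda_{31_\mu}(r_5)$ on all of $J_6$.
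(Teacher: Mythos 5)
Your proposal is an outline of a strategy, not a proof: for a statement whose entire content is a quantitative separation of two explicit functions, nothing is actually established until the monotonicity claims are verified and the numerical bounds are computed, and your text computes none of them. More importantly, the plan misses the one observation that makes $J_6$ easy and that the paper exploits: both families are globally monotone in the parameter $\mu$, and in \emph{opposite} senses --- $\lambda_{11_{\mu}}(r_{5})$ is strictly increasing in $\mu$ while $\lambda_{31_{\mu}}(r_{5})$ is strictly decreasing in $\mu$. This collapses the whole two-parameter region to a comparison of the two boundary functions at $\mu=0$, where one checks that $\min_{r_5}\lambda_{11_{0}}(r_{5})=\lambda_{11_{0}}(1)=1.84995$ exceeds the upper bound $1.60778$ for $\lambda_{31_{0}}(r_{5})$. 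No subdivision of the $(\mu,r_5)$-rectangle, no tangent lines, and no second-derivative estimates are needed; the elaborate machinery of Propositions~\ref{aux2}--\ref{aux5} that you propose to replicate is specific to $J_1$, where the families are not uniformly monotone in the parameter.

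A consequence of missing this is that your identification of the ``main obstacle'' is backwards. You predict the inequality is tightest in the corner $\mu\to 2/b-b$, $r_{5}\to 1$; but since $\lambda_{11_{\mu}}$ increases and $\lambda_{31_{\mu}}$ decreases with $\mu$, that corner is where the separation is \emph{widest}. The binding case is $\mu=0$ (i.e.\ $r_{3}=b$) at $r_{5}=1$, and even there the gap is about $0.24$, which is why the paper's argument is three lines long. Your structural heuristic about which mutual distances dominate is also asserted rather than derived, but the real gap is the absence of any verified monotonicity statement or numerical bound on which the conclusion could rest.
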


\begin{proof}
We can verify that the family of functions $\lambda_{11_{\mu}}(r_{5})$ are strictly increasing and the family of functions $\lambda_{31_{\mu}}(r_{5})$ are strictly decreasing. So, it is enough to compare the functions $\lambda_{11_{0}}(r_{5})$ and $\lambda_{31_{0}}(r_{5})$. An easy computation shows that the minimal value of the function $\lambda_{11_{0}}(r_{5})$ is $\lambda_{11_{0}}(1)=1.84995$. On the other hand, an upper bound for the function $\lambda_{31_{0}}(r_{5})$ is 
$1.60778$. So, $\lambda_{11_{0}}(r_{5})>\lambda_{31_{0}}(r_{5})$.
Therefore, $\lambda_{11_{\mu}}(r_{5})>\lambda_{31_{\mu}}(r_{5})$. See Fig.~\ref{fig24}.

\begin{figure}[ht]
	\centering
		\includegraphics[scale=0.8]{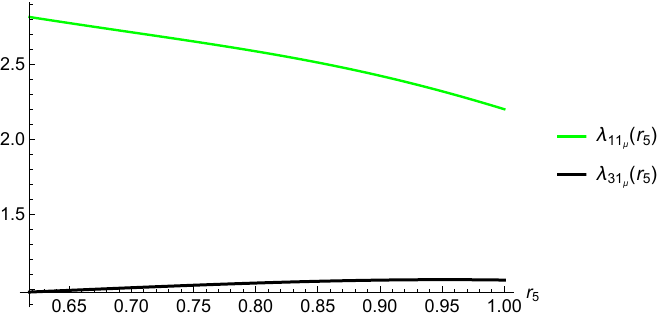}
	\caption{functions $\lambda_{11_{\mu}}(r_{3})$ and $\lambda_{31_{\mu}}(r_{3})$ for $\mu=0.1543145$\,.}
		\label{fig24}
\end{figure}
\end{proof}

For the regions $J_{7}$ and $J_{8}$, we defined the following family of functions 

\begin{equation*}
\lambda_{ik_{\iota}}(r_{3})\colonequals\lambda_{ik}\left(r_{3},1-\iota\right)\,,  \quad \iota\in\left(0,1-b/2\right)\,.
\end{equation*}

\begin{proposition}\label{proposicionJ7}
For the region $J_{7}$, is true that $\lambda_{52_{\iota}}(r_{3})>\lambda_{42_{\iota}}(r_{3})$.
\end{proposition}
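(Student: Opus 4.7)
The plan is to follow the same template that worked for Propositions~\ref{proposicionJ3}--\ref{proposicionJ6}: use monotonicity of each family in the auxiliary parameter $\iota$ to collapse the family-versus-family comparison into a scalar comparison of two representative curves in $r_3$, and then bound those curves explicitly on the compact interval $r_3 \in (b/2,1)$.

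First, I would determine the monotonicity in $\iota$ of the families $\lambda_{52_\iota}(r_3)$ and $\lambda_{42_\iota}(r_3)$. Since $\iota$ only controls the second coordinate via $r_5 = 1-\iota$, the required signs follow from sign analyses of $\partial\lambda_{52}/\partial r_5$ and $\partial\lambda_{42}/\partial r_5$ on the strip $b/2 \le r_5 < 1$, $b/2 < r_3 < 1$. Just as in Propositions~\ref{proposicionJ3} and~\ref{proposicionJ6}, I expect one family to be strictly decreasing in $\iota$ and the other strictly increasing (or at worst both moving in a compatible direction that pins the hardest case to a single boundary value $\iota^\ast$). Then the family-level comparison reduces to comparing the two scalar functions $\lambda_{52_{\iota^\ast}}(r_3)$ and $\lambda_{42_{\iota^\ast}}(r_3)$ on $(b/2,1)$.

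Second, I would exhibit the comparison for the reduced pair. Mimicking Proposition~\ref{proposicionJ5}, the likely clean situation is that $\lambda_{52_{\iota^\ast}}(r_3)$ is monotone with a lower bound attained at a known endpoint (e.g.\ $r_3 = b/2$ or $r_3 = 1$), while $\lambda_{42_{\iota^\ast}}(r_3)$ is monotone with a matching upper bound strictly below it. A single endpoint evaluation, analogous to the numbers $\lambda_{52_0}(1) = 4.4042$ and $\lambda_{11_\zeta}(r_3) < 2.90749$ in Proposition~\ref{proposicionJ5}, would then settle the claim. If the gap is too narrow for one numeric inequality to cover all of $(b/2,1)$, I would instead split $(b/2,1)$ into a small number of subintervals and build piecewise-constant lower and upper bounds $L'(r_3)$ and $L(r_3)$ from values at the subinterval endpoints (using the monotonicity already in hand), exactly as in the piecewise functions $L_{18}$ and $L_{19}$ of Proposition~\ref{proposicionJ3}, and check $L'(r_3) > L(r_3)$ on each piece.

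The main obstacle I anticipate is the monotonicity step in $\iota$: because $J_7$ sits away from the critical point $(1,1)$ yet shares its boundary, neither $\lambda_{52}$ nor $\lambda_{42}$ need vary monotonically in $r_5$ over the whole strip, so a single global sign statement may fail. In that event, I would imitate the nested sub-case strategy of Proposition~\ref{aux5}: partition $\iota \in (0, 1-b/2)$ into finitely many subintervals on which the desired monotonicities do hold, handle each subinterval by the endpoint comparison above, and glue the bounds together. The remaining checks are then elementary algebra on explicit rational/irrational expressions in $r_3$, and are of the same flavour as—though considerably shorter than—the auxiliary computations already carried out in Subsection~\ref{auxiliar}.
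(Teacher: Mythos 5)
Your template breaks down for $J_{7}$ for a structural reason that your proposal does not address. The closure of $J_{7}=\{b/2<r_3<1,\ b/2\le r_5<1\}$ contains the corner $(1,1)$, which is the regular pentagon; there the configuration is a genuine central configuration, so \emph{all} the functions $\lambda_{ik}$ take the same value, and in particular $\lambda_{52}-\lambda_{42}\to 0$ as $(r_3,r_5)\to(1,1)$. Consequently $\inf\lambda_{52}\le\sup\lambda_{42}$ over any sub-box of $J_{7}$ whose closure meets $(1,1)$, and any finite partition of $J_{7}$ (in $r_3$, in $\iota$, or both) must contain such a piece. So the scheme of ``lower bound of one family at an endpoint exceeds upper bound of the other'' — the scheme that works for $J_{3}$--$J_{6}$ precisely because those regions are bounded away from $(1,1)$ — cannot close the argument here, no matter how the monotonicity-in-$\iota$ step turns out and no matter how finely you subdivide. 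Your anticipated obstacle (failure of global monotonicity in $r_5$) is real but secondary; the fatal point is the vanishing gap at the corner.

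The paper's proof sidesteps this by a termwise decomposition rather than a reduction to representative curves: it writes $\lambda_{52_{\iota}}=H_{1_{\iota}}+H_{2_{\iota}}+H_{3_{\iota}}+H_{4_{\iota}}$ and $\lambda_{42_{\iota}}=G_{1_{\iota}}+G_{2_{\iota}}+G_{3_{\iota}}+G_{4_{\iota}}$ (the four two-body interaction contributions) and shows that the grouped differences $H_{1_{\iota}}-G_{1_{\iota}}$, $H_{3_{\iota}}-G_{3_{\iota}}$ and $H_{2_{\iota}}+H_{4_{\iota}}-G_{2_{\iota}}-G_{4_{\iota}}$ are each strictly positive on $J_{7}$. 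Each grouped difference tends to $0$ at the corner but remains positive inside the region, so positivity of the sum is obtained without ever needing a uniform gap. To repair your argument you would have to compare quantities that themselves degenerate at $(1,1)$ — for instance the paired interaction terms as the paper does, or tangent-line/convexity bounds on the difference function of the kind the paper uses for the other regions adjacent to $(1,1)$ such as $J_{9}$, $J_{14}$ and $J_{15}$ — rather than piecewise-constant inf/sup evaluations.
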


\begin{proof}
We can write the families of functions $\lambda_{52_{\iota}}(r_{3})$ and $\lambda_{42_{\iota}}(r_{3})$ as follow $\lambda_{52_{\iota}}(r_{3})=H_{1_{\iota}}(r_{3})+H_{2_{\iota}}(r_{3})+H_{3_{\iota}}(r_{3})+H_{4_{\iota}}(r_{3})$ and $\lambda_{42_{\iota}}(r_{3})=G_{1_{\iota}}(r_{3})+G_{2_{\iota}}(r_{3})+G_{3_{\iota}}(r_{3})+G_{4_{\iota}}(r_{3})$. The families of function $H_{1_{\iota}}(r_{3})-G_{1_{\iota}}(r_{3})$, $H_{3_{\iota}}(r_{3})-G_{3_{\iota}}(r_{3})$ and $H_{2_{\iota}}(r_{3})+H_{4_{\iota}}(r_{3})-G_{2_{\iota}}(r_{3})-G_{4_{\iota}}(r_{3})$ are strictly positive.  Therefore $\lambda_{52_{\iota}}(r_{3})>\lambda_{42_{\iota}}(r_{3})$. See Fig.~\ref{fig25}.
\begin{figure}[ht]
	\centering
		\includegraphics[scale=0.8]{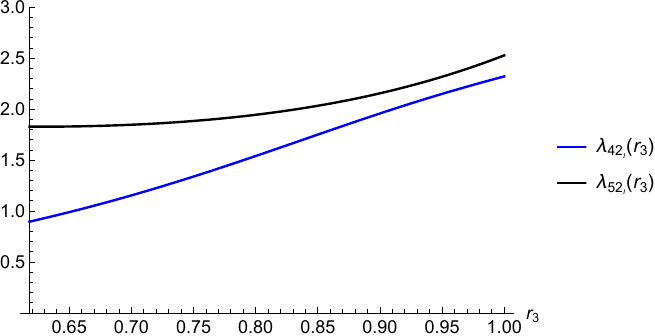}
	\caption{functions $\lambda_{52_{\iota}}(r_{3})$ and $\lambda_{42_{\iota}}(r_{3})$ for $\iota=0.232235$\,.}
		\label{fig25}
\end{figure}
\end{proof}

\begin{proposition}\label{proposicionJ8}
For the region $J_{8}$, is true that $\lambda_{22_{\iota}}(r_{3})>\lambda_{32_{\iota}}(r_{3})$.
\end{proposition}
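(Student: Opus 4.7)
My proof plan is to mirror the decomposition strategy that succeeded for Proposition~\ref{proposicionJ7}, which treats an analogous $\lambda_{\,\cdot\,2}$-vs-$\lambda_{\,\cdot\,2}$ comparison on the neighbouring region $J_7$. The first step is to write the two families as sums of four elementary pieces coming from the individual gravitational interactions that define $\lambda_{22}$ and $\lambda_{32}$, namely
\[
\lambda_{22_\iota}(r_3) = \sum_{i=1}^{4} H_{i_\iota}(r_3), \qquad \lambda_{32_\iota}(r_3) = \sum_{i=1}^{4} G_{i_\iota}(r_3),
\]
ordering the pieces so that $H_{i_\iota}$ and $G_{i_\iota}$ correspond to analogous distances between pairs of bodies.

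The second step is to show that, either individually or after a suitable regrouping, the differences $H_{i_\iota}(r_3) - G_{i_\iota}(r_3)$ are strictly positive on the rectangle $(r_3,\iota) \in (1,b) \times (0, 1 - b/2)$. Following the template of Proposition~\ref{proposicionJ7}, I expect that two of the pairwise differences (say $H_{1_\iota} - G_{1_\iota}$ and $H_{3_\iota} - G_{3_\iota}$) will be positive termwise, while the remaining two terms will need to be grouped as $(H_{2_\iota} + H_{4_\iota}) - (G_{2_\iota} + G_{4_\iota}) > 0$. Each such verification reduces, by monotonicity of the relevant family first in the parameter $\iota$ and then in $r_3$, to an explicit comparison of numerical values at the endpoints $r_3 \in \{1,b\}$ and $\iota \in \{0, 1-b/2\}$; tangent-line or secant envelopes, in the spirit of the auxiliary families $L_{\cdot}$ used throughout Section~\ref{auxiliar}, can be invoked whenever a convexity or concavity argument is needed. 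A figure analogous to Fig.~\ref{fig25} would visualise that the two families remain well separated on $J_8$.

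The main obstacle I anticipate is that $J_8$ is the strip $1 < r_3 < b \approx 1.236$, $b/2 \le r_5 < 1$, which borders the regular pentagon point $(1,1)$. Hence the gap $\lambda_{22_\iota}(r_3) - \lambda_{32_\iota}(r_3)$ is expected to be small near the corner $(r_3,r_5) = (1,1)$, and naïve term-by-term comparisons of the $H_{i_\iota}$'s against the $G_{i_\iota}$'s may fail on subregions adjacent to this corner. To overcome this I plan to regroup the critical pair of terms exactly as done for $J_7$, and, if necessary, subdivide the interval $r_3 \in (1,b)$ into two or three subintervals on which sharper piecewise bounds (modelled on the $L$-families used in Proposition~\ref{aux3}) can be written down. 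Once the regrouped differences are shown positive, summing them yields $\lambda_{22_\iota}(r_3) > \lambda_{32_\iota}(r_3)$ on all of $J_8$, completing the proof.
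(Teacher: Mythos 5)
Your overall strategy --- decompose $\lambda_{22_{\iota}}$ and $\lambda_{32_{\iota}}$ into the four elementary interaction terms and prove positivity of regrouped differences --- is the same one the paper uses, so in outline you are on the right track. However, the specific grouping you commit to is the one from Proposition~\ref{proposicionJ7}, and the paper does \emph{not} reuse it here. Writing $\lambda_{22_\iota}=K_{1_\iota}+\dots+K_{4_\iota}$ and $\lambda_{32_\iota}=z_{1_{\iota}}+\dots+z_{4_{\iota}}$, the paper forms the cross-paired combinations $L_{60_{\iota}}=K_{1_\iota}+K_{2_\iota}-z_{1_{\iota}}-z_{4_{\iota}}$ and $L_{61_{\iota}}=K_{3_\iota}+K_{4_\iota}-z_{2_{\iota}}-z_{3_{\iota}}$, and then asserts positivity only of the \emph{sum} $L_{60_{\iota}}+L_{61_{\iota}}$, not of either piece separately and certainly not of any single difference $K_{i_\iota}-z_{i_{\iota}}$. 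This is a strong hint that your step (2) --- establishing $H_{1_\iota}-G_{1_\iota}>0$ and $H_{3_\iota}-G_{3_\iota}>0$ termwise, with the terms matched by ``analogous distances'' --- fails somewhere on $J_8$: had those termwise comparisons held, the authors would presumably have kept the simpler $J_7$ template. Your anticipated remedies (subdividing $r_3$, tangent/secant envelopes) sharpen bounds but cannot rescue a difference that actually changes sign on the region, so you should be prepared to search over regroupings, including cross-pairings of non-analogous terms, rather than assuming the $J_7$ pairing transfers.

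A second, smaller point: your reduction ``by monotonicity in $\iota$ and then in $r_3$ to endpoint values'' has to be handled with care on $J_8$, because the closure of this region contains the regular-pentagon corner $(r_3,r_5)=(1,1)$ (i.e.\ $\iota\to 0$, $r_3\to 1$), where all the $\lambda_{ik}$ coincide and the total difference tends to $0$. Any endpoint bound taken at that corner degenerates to a non-strict inequality, so at least one of your grouped differences must be controlled by a bound that stays bounded away from zero on the relevant subinterval, or the positivity must be argued on the open region with the corner treated as a limit. Neither your plan nor the paper's two-line proof spells this out (the paper defers the explicit estimates to the supplementary material and Fig.~\ref{fig26}), but if you intend your argument to be self-contained this is the step that needs the explicit numerics.
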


\begin{proof}
We can write $\lambda_{22_\iota}(r_3)$ as follow $ \lambda_{22_\iota}(r_3)=K_{1_\iota}(r_{3})+...+K_{4_\iota}(r_{3})$ and $\lambda_{32_\iota}(r_3)$ as $\lambda_{32_\iota}(r_3)=z_{1_{\iota}}(r_{3})+...+z_{4_{\iota}}(r_{3})$.
We define the following families of functions:
$L_{60_{\iota}}(r_{3}):=K_{1_\iota}(r_{3})+K_{2_\iota}(r_{3})-z_{1_{\iota}}(r_{3})-z_{4_{\iota}}(r_{3})$ and 
$L_{61_{\iota}}(r_{3}):=K_{3_\iota}(r_{3})+K_{4_\iota}(r_{3})-z_{2_{\iota}}(r_{3})-z_{3_{\iota}}(r_{3})$. We have that 
the family of function $L_{60_{\iota}}(r_{3})+L_{61_{\iota}}(r_{3})$ is strictly positive. See Fig.~\ref{fig26}. 

\begin{figure}[ht]
	\centering
		\includegraphics[scale=0.8]{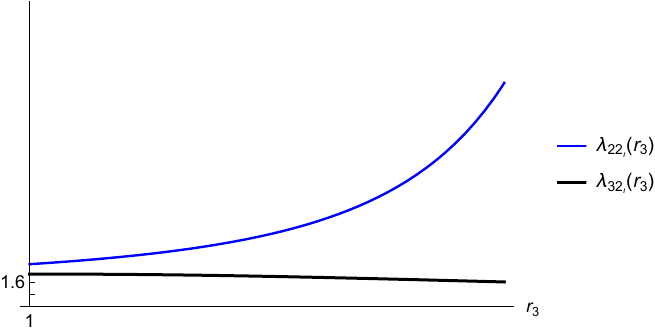}
	\caption{functions $\lambda_{22_{\iota}}(r_{3})$ and $\lambda_{32_{\iota}}(r_{3})$ for $\iota=0.237965$\,.}
		\label{fig26}
\end{figure}
\end{proof}

For the regions $J_{9}$ to $J_{15}$, we defined the follow family of functions 

\begin{equation*}
\lambda_{ik_{\xi}}(r_{3})\colonequals\lambda_{ik}\left(r_{3},1+\xi\right)\,.\end{equation*}

\begin{proposition}\label{proposicionJ9}
For the region $J_{9}$, is true that $\lambda_{11_{\xi}}(r_{3})>\lambda_{41_{\xi}}(r_{3})$, $\xi\in(0,\infty)$.
\end{proposition}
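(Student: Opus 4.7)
The plan is to follow the decomposition-and-monotonicity template that worked for regions $J_{1}$ through $J_{8}$: write $\lambda_{11_{\xi}}(r_{3})$ and $\lambda_{41_{\xi}}(r_{3})$ as sums of gravitational-interaction terms indexed by the partner body, establish how each term depends monotonically on $\xi$ and on $r_{3}$, and then compare upper bounds of $\lambda_{41_{\xi}}(r_{3})$ against lower bounds of $\lambda_{11_{\xi}}(r_{3})$. Since the centre-of-mass constraints determine $r_{2}$ and $r_{4}$ implicitly in terms of $(r_{3},r_{5})$, the first preparatory step is to record, for $(r_{3},\xi)\in(0,1)\times[0,\infty)$, explicit monotone dependence of those implicit variables on $\xi$; this is a routine but necessary calculation so that the sign of each $\partial_{\xi}$ in the decomposition can be read off.

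\textbf{Splitting the $\xi$-range.} Because $\xi$ is unbounded, I would split the parameter interval as $(0,\xi_{0}]\cup(\xi_{0},\infty)$ for a convenient threshold $\xi_{0}$ (empirically around $\xi_{0}\approx 1$), handling the two pieces differently. On $(\xi_{0},\infty)$, body $5$ is far away, so the terms involving $r_{15}$ and $r_{45}$ are both small; a direct computation of $\displaystyle\lim_{\xi\to\infty}\lambda_{11_{\xi}}(r_{3})$ and $\displaystyle\lim_{\xi\to\infty}\lambda_{41_{\xi}}(r_{3})$ reduces the problem to the $4$-body analogue (bodies $1,2,3,4$), in which the inequality $\lambda_{11}>\lambda_{41}$ is a simple geometric statement (body $1$ is at distance $1$ while body $4$ is interior, at $r_{4}<1$, so the radial pull at body $1$ exceeds that at body $4$). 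The monotonicity in $\xi$ then propagates the inequality backwards from the limit to all $\xi>\xi_{0}$.

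\textbf{The bounded regime.} On $(0,\xi_{0}]$, I would further partition $r_{3}\in(0,1)$ into a few subintervals $(0,r_{3}^{(1)}]\cup\cdots\cup(r_{3}^{(k)},1)$ and on each subinterval replace the families by piecewise-constant or piecewise-linear bounds, chosen at the endpoints using the established monotonicity in $r_{3}$ and $\xi$, exactly as in Propositions~\ref{proposicionJ1} and~\ref{aux5}. For each piece one verifies $L_{\text{low}}(r_{3})>L_{\text{up}}(r_{3})$ with two numerical comparisons at the endpoints; chaining the pieces together yields $\lambda_{11_{\xi}}(r_{3})>\lambda_{41_{\xi}}(r_{3})$ on the whole of $(0,\xi_{0}]\times(0,1)$.

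\textbf{Expected main obstacle.} The delicate regime is $r_{3}\to 0^{+}$ with moderate $\xi$: there the distance $r_{13}$ and the corresponding term in $\lambda_{11_{\xi}}$ become large (and the dependent variables $r_{2},r_{4}$ can move significantly to preserve the centre of mass), so the naive piecewise bounds may not suffice and a finer partition near $0$, perhaps combined with a tangent-line argument exploiting convexity of the dominant term in $1/r_{13}^{3}$, will be required. A secondary worry is verifying that the decomposition remains valid throughout $J_{9}$, i.e.\ that no implicit mutual distance $r_{2j}$ or $r_{4j}$ vanishes, which amounts to checking that $(r_{3},1+\xi)$ stays inside the interior of $\hat{S}$; this is immediate from the definition of $J_{9}$ but should be flagged explicitly before writing the monotonicity comparisons.
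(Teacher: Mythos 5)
Your overall architecture matches the paper's: split the parameter range at $\xi_0\approx 1$ (the paper uses $0.99412$), treat the bounded piece with tangent/secant-line bounds obtained from convexity and monotonicity (the paper takes tangent lines to $\lambda_{11_{\xi}}$ at $r_3=0$ and $r_3=1$ and chords of $\lambda_{41_{\xi}}$ over $(0,0.6)$ and $(0.6,1)$), and treat the unbounded piece via the limit $\xi\to\infty$. The bounded-regime sketch is therefore fine in spirit, and your worry about $r_3\to 0^+$ is addressed in the paper precisely by the tangent line at $r_3=0$.

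There is, however, a genuine gap in your unbounded regime. First, the limiting configuration as $\xi\to\infty$ is not the four-body problem on bodies $1,2,3,4$ with body $4$ interior: the centre-of-mass constraint forces $r_2$ and $r_4$ to diverge together with $r_5=1+\xi$ (solving the two linear constraints gives $r_4\sim r_5$ and $r_2\sim \frac{1+\sqrt5}{2}\,r_5$), so the claim that ``body $4$ is interior, at $r_4<1$'' is false and the ``simple geometric statement'' justifying the limiting inequality does not hold as stated. Second, and more seriously, your propagation step fails directionally: both families $\lambda_{11_{\xi}}$ and $\lambda_{41_{\xi}}$ are \emph{decreasing} in $\xi$ on this region, so each lies \emph{above} its own $\xi\to\infty$ limit, and an inequality between the two limits cannot be pushed backwards to finite $\xi$. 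What actually closes the argument (and is what the paper does) is to bound $\lambda_{41_{\xi}}(r_3)$ from above by its value at the \emph{finite} end of the parameter range and at $r_3=1$, namely $\lambda_{41_{0.99412}}(1)=1.39635$, using that the family decreases in $\xi$ while each member increases in $r_3$, and then to check that $\lim_{\xi\to\infty}\lambda_{11_{\xi}}(r_3)$, which is a lower bound for every $\lambda_{11_{\xi}}$, stays above that single number. You would need to replace your ``propagate from the limit'' step by this one-sided comparison for the proof to go through.
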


\begin{proof}
We divide the domain of the parameter $\xi$ as follows $\xi\in[0,0.99412)\cup [0.99412,\infty)$.

\begin{itemize}
\item \underline{Part I}: for $\xi\in[0,0.99412)$\,. We define a $\xi$-parameterized family of piecewise functions, each consisting of two tangent lines. The first line, $L_{47_{\xi}}(r_3)$, is the tangent line at $r_3 = 0$ with respect to $\lambda_{11_{\xi}}(r_3)$, and the second line, $L_{48_{\xi}}(r_3)$, is the tangent line at $r_3 = 1$. In this way, the piecewise function is the following:

\begin{equation*}
\bar{L}_{1_{\xi}}(r_{3})= \left\{ \begin{array}{lll}
          L_{47_{\xi}}(r_{3}),  &0< r_3 \leq r^{*}_{\xi}\,, \\
             \\    L_{48_{\xi}}(r_{3}), &r^{*}_{\xi}<r_3\leq 1\,,
             \end{array}
   \right.
\end{equation*}
where $r^{*}_{\xi}=\{r_{3}\colon L_{47_{\xi}}(r_{3})=L_{48_{\xi}}(r_{3})\}$. Let $L_{49_\xi}(r_3)$ and $L_{50_\xi}(r_3)$ represent the family of lines connecting the endpoints of functions in the family $\lambda_{41_\xi}(r_3)$ over the intervals $(0, 0.6)$ and $(0.6,1)$, respectively. We have that $L_{48_{\xi}}(r_{3})>L_{50_{\xi}}(r_{3})$ for all $r_{3}\in(0.6,1)$, and $L_{47_{\xi}}(r^{*}_{\xi})>L_{49_\xi}(r_3)$. See Fig.~\ref{fig27}.

\begin{figure}[ht]
	\centering
		\includegraphics[scale=0.8]{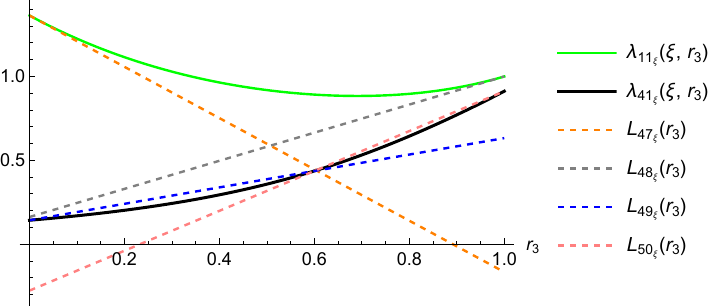}
		\caption{functions $L_{47_\xi}(r_{3})$, $L_{48_\xi}(r_{3})$, $L_{49_\xi}(r_{3})$ and $L_{50_\xi}(r_{3})$ for $\xi=0.189$.}
		\label{fig27}
\end{figure}

\item \underline{Part II}: for $\xi\in[0.99412, \infty)$\,. Since the family $\lambda_{11_\xi}(r_3)$ is strictly decreasing, we will analyze the function obtained by calculating $\lim_{\xi\rightarrow \infty} \lambda_{11_\xi}(r_3)$, which is a monotonically decreasing function. On the other hand, the family $\lambda_{41_\xi}(r_3)$ is decreasing, and each function within this family is increasing. Thus, we can denote $\lambda_{41_{0.99412}}(1) = 1.39635$ as an upper bound for $\lambda_{41_\xi}(r_3)$.  We have the inequality $\lambda_{11_\xi}(r_3) > \lim_{\xi\rightarrow \infty} \lambda_{11_\xi}(r_3) >\lambda_{41_{0.99412}}(1)\geq \lambda_{41_\xi}(r_3)$.
\end{itemize}
\end{proof}

\begin{proposition}\label{proposicionJ10}
For the region $J_{10}$, is true that $\lambda_{11_{\xi}}(r_{3})>\lambda_{41_{\xi}}(r_{3})$, $\xi\in[b,\infty)$.
\end{proposition}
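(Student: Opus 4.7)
The plan is to adapt Part II of the proof of Proposition~\ref{proposicionJ9} to the present region. Here $\xi\geq b$, while $r_{3}$ ranges in $(1,2/b)$, so the strategy is to reduce the proof to at most a few numerical comparisons by establishing two monotonicity facts: (i) the family $\lambda_{11_{\xi}}(r_{3})$ is strictly decreasing in $\xi$ on $[b,\infty)$; and (ii) the family $\lambda_{41_{\xi}}(r_{3})$ is strictly decreasing in $\xi$ on $[b,\infty)$, with each individual function $\lambda_{41_{\xi}}(\cdot)$ monotonically increasing in $r_{3}$ on $(1,2/b)$. Both facts are of the same flavor as the monotonicity claims already used in Proposition~\ref{proposicionJ9}, and can be verified by direct computation of the derivatives with respect to $\xi$ and $r_{3}$ using the explicit form of $\lambda_{ik}$ in \eqref{sistema}.

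With (i), the function $\displaystyle \ell(r_{3}):=\lim_{\xi\to\infty}\lambda_{11_{\xi}}(r_{3})$ serves as a lower bound for $\lambda_{11_{\xi}}(r_{3})$ throughout $J_{10}$. A short calculation should confirm that $\ell$ is monotonically decreasing in $r_{3}$ on $(1,2/b)$, so its infimum there is attained as $r_{3}\to 2/b$ and equals $\ell(2/b)$. With (ii), the family $\lambda_{41_{\xi}}(r_{3})$ attains its supremum on $J_{10}$ at the corner $(\xi,r_{3})=(b,2/b)$, giving the upper bound $\lambda_{41_{b}}(2/b)$. The proof then reduces to checking the single numerical inequality $\ell(2/b)>\lambda_{41_{b}}(2/b)$, completely analogous to the final step of Part II of Proposition~\ref{proposicionJ9}.

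I expect the main obstacle to be precisely this last numerical comparison: the value $r_{3}=2/b$ lies at the boundary of the region where the qualitative behaviour of both families is changing (it is the boundary shared with $J_{15}$), and $\ell(2/b)$ and $\lambda_{41_{b}}(2/b)$ may turn out to be numerically close. In that event, the single global comparison will not suffice and I would partition $(1,2/b)$ into a finite collection of subintervals and construct piecewise linear bounds on each subinterval (tangent lines to $\ell(r_{3})$ from below, and secant lines to $\lambda_{41_{b}}(r_{3})$ from above, exactly in the spirit of the auxiliary functions $L_{\bullet}(r_{3})$ used repeatedly in the proof of Proposition~\ref{proposicionJ9}), comparing them endpoint by endpoint. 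The critical subinterval will be the one closest to $r_{3}=2/b$, where the two bounds are tightest; a finer partition near that endpoint should be sufficient to close the gap and complete the argument.
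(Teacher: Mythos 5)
Your overall strategy is close in spirit to the paper's, and you bound the family $\lambda_{41_{\xi}}(r_{3})$ in exactly the same way: strict decrease in $\xi$ plus monotone increase in $r_{3}$ reduces it to the single corner value $\lambda_{41_{b}}\left(2/b\right)=0.360157$. Where you diverge is in the lower bound for $\lambda_{11_{\xi}}(r_{3})$. The paper does not pass to the limit $\xi\to\infty$ here; instead it uses convexity of each $\lambda_{11_{\xi}}(\cdot)$ in $r_{3}$ and minorizes it by a three-piece envelope of tangent lines $L_{91_{\xi}},L_{92_{\xi}},L_{93_{\xi}}$ taken at $r_{3}=1$, $r_{3}=1.3$ and $r_{3}=2/b$, with kinks at $r^{*}_{1_{\xi}}\approx 1.137$ and $r^{*}_{2_{\xi}}\approx 1.445$. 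Your route (monotonicity in $\xi$, then monotonicity of the limit function $\ell$ in $r_{3}$, reducing everything to the single inequality $\ell(2/b)>0.360157$) is more economical if its hypotheses hold, but the second hypothesis is the weak point: the claim that $\ell$ is decreasing on $(1,2/b)$ is carried over from region $J_{9}$, where $r_{3}\in(0,1)$, and there is no reason it persists past $r_{3}=1$. Indeed, the paper's own construction strongly suggests it does not: if $\lambda_{11_{\xi}}$ were convex and decreasing on $(1,2/b)$, its infimum would simply be its value at $r_{3}=2/b$ and no tangent-line envelope would be needed; the fact that the authors build tangent lines at both endpoints and an interior point, and track the interior intersection points, indicates that $\lambda_{11_{\xi}}$ has an interior minimum on $(1,2/b)$. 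In that case your reduction to the endpoint value $\ell(2/b)$ does not bound the infimum, and your proposed fallback is also aimed at the wrong place, since you plan to refine the partition near $r_{3}=2/b$ while the binding comparison occurs at the interior kink points. The fix is to replace the monotonicity-in-$r_{3}$ step by the convexity argument (or by a partition whose critical subintervals sit around $r_{3}\approx 1.14$ and $r_{3}\approx 1.45$), after which the comparison against $0.360157$ goes through as in the paper.
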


\begin{proof}
The functions in the family $ \lambda_{11_\xi}(r_{3}) $ are convex. We define the $\xi$-parameterized family of piecewise functions, $\bar{L}_{2_{\xi}}(r_{3})$. The first family, $ L_{91_\xi}(r_3) $, represents the family of tangent lines at $ r_{3}= 1 $. The second family, $ L_{92_\xi}(r_3) $, corresponds to the family of tangent lines at $ r_{3} = 1.3 $. And the third family, $ L_{93_\xi}(r_3) $ is the family of tangent lines at $ r_{3}= 2/b$. Namely,

\begin{equation*}
\bar{L}_{2_{\xi}}(r_{3})= \left\{ \begin{array}{lll}
          L_{91_{\xi}}(r_{3}),  &0< r_3 \leq r^{*}_{1_{\xi}}\,, \\
             \\    L_{92_{\xi}}(r_{3}), &r^{*}_{1_{\xi}}<r_3\leq r^{*}_{2_{\xi}}\,, \\
             \\    L_{93_{\xi}}(r_{3}), &r^{*}_{2_{\xi}}<r_3\leq 1\,,
             \end{array}
   \right.
\end{equation*}

where $r^{*}_{1_{\xi}}=\{r_{3}\colon L_{91_{\xi}}(r_{3})=L_{92_{\xi}}(r_{3})\}$ and  $r^{*}_{2_{\xi}}=\{r_{3}\colon L_{92_{\xi}}(r_{3})=L_{93_{\xi}}(r_{3})\}$. See Fig.~\ref{fig28}.

\begin{figure}[ht]
	\centering
		\includegraphics[scale=0.9]{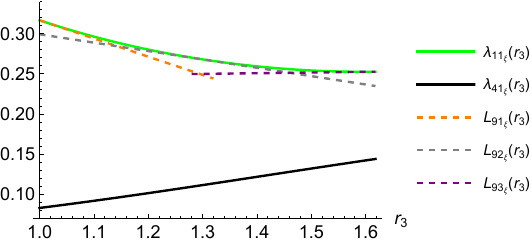}
		\caption{functions $\lambda_{11_{\xi}}(r_{3}), \lambda_{41_{\xi}}(r_{3}), L_{91_{\xi}}(r_{3}), L_{92_{\xi}}(r_{3})$ y $L_{93_{\xi}}(r_{3})$ for $\xi=2.09$.}
		\label{fig28}
	\end{figure}	
	
The minimum values of $ r^{*}_{1_{\xi}} $ and $ r^{*}_{2_{\xi}} $ are $ 1.13653 $ and $ 1.44536 $, respectively. On the other hand, since the family $\lambda_{41_{\xi}}(r_{3})$ is strictly decreasing and the functions of the family are increasing, the maximum value of the family   is given by $ \lambda_{41_{b}}\left(2/b \right) = 0.360157 $. Therefore, it follows that  $ \lambda_{11_{\xi}}(r_3) > \lambda_{41_{\xi}}(r_3). $
\end{proof}

\begin{proposition}\label{proposicionJ111213}
For the regions $J_{11}$, $J_{12}$ y $J_{13}$ is true that $\lambda_{11_{\xi}}(r_{3})>\lambda_{51_{\xi}}(r_{3})$ considering $\xi\in(0,2.036)$, $\xi\in(0,1.05)$ and $\xi\in(0.4,1.05)$, respectively.
\end{proposition}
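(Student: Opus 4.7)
The plan is to handle each of the three regions separately, since the $r_3$-intervals are very different in nature: $J_{11}$ spans a wide interval that crosses $r_3=1$, whereas $J_{12}$ and $J_{13}$ are the two pieces of $r_3\in(1,2/b)$ split at $r_3=1.3$ with a restricted range of $\xi$. In every case the strategy mirrors what was done for regions $J_7$--$J_{10}$: find for each subinterval a sharp upper bound for $\lambda_{51_{\xi}}(r_3)$ and a sharp lower bound for $\lambda_{11_{\xi}}(r_3)$, and then compare the two bounds by explicit numerical evaluation.

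For region $J_{11}$, I would first verify by direct computation that, for $\xi\in(0,2.036)$, the family $\lambda_{51_{\xi}}(r_3)$ is strictly decreasing in $\xi$ while each function is monotone in $r_3$ (decreasing on the branch $r_3\le 1$ and still controlled above by a single value on the branch $r_3>1$ thanks to the slanted upper boundary $r_3<(2/a)r_5+1$). This reduces the upper bound to $\lambda_{51_{0}}(r_3)$, which can be majorized by a piecewise constant $L(r_3)$ analogous to $L_{18}$ in Proposition~\ref{proposicionJ3}, using the partition $r_3\in(b/2,1]\cup(1,r_{\max}(\xi)]$. For $\lambda_{11_{\xi}}(r_3)$, convexity (which was used in Proposition~\ref{proposicionJ10}) allows me to build a piecewise linear lower bound $\bar L_{\xi}(r_3)$ from the tangent lines at the three natural nodes $r_3=b/2$, $r_3=1$, and $r_3=(2/a)r_5+1$, and then verify in each subinterval that $\bar L_{\xi}(r_3)>L(r_3)$. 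This last comparison is the core numerical step.

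Regions $J_{12}$ and $J_{13}$ are almost identical in treatment, since the $r_3$-interval lies entirely in $r_3>1$, where $\lambda_{11_{\xi}}(r_3)$ is convex in $r_3$ (as already used in Proposition~\ref{proposicionJ10}). I would reuse the tangent-line construction $L_{91_{\xi}}$, $L_{92_{\xi}}$, $L_{93_{\xi}}$ at the breakpoints $r_3=1,\,1.3,\,2/b$ to produce a piecewise lower bound $\bar L_{2_{\xi}}(r_3)$, restricted to the relevant $\xi$-range ($\xi\in(0,1.05)$ for $J_{12}$, $\xi\in(0.4,1.05)$ for $J_{13}$). On the other side, monotonicity of $\lambda_{51_{\xi}}(r_3)$ with respect to $\xi$ reduces the upper bound to a single $\xi$-value at the endpoint of each subinterval, and then the comparison is finished by evaluating both sides at the breakpoints, as was done in Proposition~\ref{proposicionJ10}.

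The hard part, as in the previous propositions, will not be any conceptual difficulty but rather the case analysis: verifying the precise convexity/monotonicity properties of $\lambda_{11_{\xi}}$ and $\lambda_{51_{\xi}}$ on the stated $\xi$-ranges requires partitioning $\xi$ into several subintervals (the boundaries $\xi=0.4$, $\xi=1.05$, $\xi=2.036$ arise precisely where monotonicity of one of the families changes). I expect that for $J_{11}$ the need to include the point $r_3=1$ in the interior of the $r_3$-range—where the distance $r_{13}$ crosses its minimum and $\lambda_{11_{\xi}}$ has its highest curvature—will be the most delicate sub-case, and will force an extra tangent line near $r_3=1$ in the construction of $\bar L_{\xi}(r_3)$, analogous to the auxiliary refinement used in Part II of Proposition~\ref{proposicionJ9}. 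Once that refinement is in place, the remaining comparisons reduce to straightforward (though tedious) numerical verifications at the specified nodes.
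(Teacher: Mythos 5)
Your overall strategy (bound $\lambda_{51_{\xi}}$ from above and $\lambda_{11_{\xi}}$ from below using monotonicity in $\xi$ and convexity in $r_3$, then compare at explicit nodes) is the same general methodology the paper uses throughout Subsection~\ref{subsec:Jn}, but the specific reductions differ, most notably for $J_{11}$. There the paper does not work in the $r_3$ variable at all: it observes that the minimum of each $\lambda_{11_{\xi}}(\cdot)$ and the maximum of each $\lambda_{51_{\xi}}(\cdot)$ over the $r_3$-range are both attained on the line $r_3=2/b$, so the entire region collapses to a comparison of the two one-variable curves $\xi\mapsto\lambda_{11_{\xi}}(2/b)$ and $\xi\mapsto\lambda_{51_{\xi}}(2/b)$, separated by two piecewise-constant functions $L_{121},L_{122}$ of $\xi$. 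Your plan instead freezes $\xi$ (taking $\lambda_{51_{0}}$ as the majorant) and builds a $\xi$-dependent piecewise-linear minorant $\bar L_{\xi}(r_3)$ of $\lambda_{11_{\xi}}$ over the full $r_3$-range, including the extra tangent node near $r_3=1$. That route can work, but it is considerably heavier, and as written it leaves a genuinely two-parameter inequality $\bar L_{\xi}(r_3)>L(r_3)$ unresolved: you still need to eliminate the $\xi$-dependence of the minorant (e.g.\ by showing the tangent-line families are monotone in $\xi$ and passing to $\xi\to 2.036$) before the check becomes a finite list of numerical evaluations. For $J_{12}$ the paper is also simpler than your sketch: each $\lambda_{11_{\xi}}$ is monotonically increasing in $r_3$ on $(1.3,2/b)$, so its minimum is the endpoint value $\lambda_{11_{\xi}}(1.3)$ and no tangent-line construction is needed; convexity and the intersection point of the endpoint tangents (the paper's $P_6(\xi)$) are only invoked for $J_{13}$, which matches your proposal most closely. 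In short: your approach buys uniformity across the three regions at the cost of a harder $r_3$-analysis and an extra reduction step you have not supplied; the paper buys brevity by identifying, region by region, the single extremal value of $r_3$ at which the comparison can be made.
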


\begin{proof}
We will show the proof by regions.
\begin{itemize}
\item For the region $J_{11}$\,: we will compare the family formed by the minimum values of the functions in the family $ \lambda_{11_{\xi}}(r_3) $ with the family formed by the maximum values of the functions in the family $ \lambda_{51_{\xi}}(r_3) $. Specifically, we will examine the curves of values $ \xi $-parameterized: $ \lambda_{11_{\xi}}\left(2/b\right) $ and $ \lambda_{51_{\xi}}\left(2/b\right) $.
Both functions are monotonically decreasing. Hence, we will construct the piecewise functions $L_{121}(\xi)$ and  $L_{122}(\xi)$ as shown in Fig.~\ref{fig29a}.

So, it is true that $\lambda_{11_{\xi}}\left(2/b\right)>L_{122}(\xi)>L_{121}(\xi)>\lambda_{51_{\xi}}\left(2/b\right)$. Therefore, $\lambda_{11_{\xi}}(r_{3})>\lambda_{51_{\xi}}(r_{3})$.

\item For the region $J_{12}$\,: the family of functions $ \lambda_{11_{\xi}}(r_3) $ is strictly decreasing, while the functions within this family are monotonically increasing. Additionally, the family $ \lambda_{51_{\xi}}(1.3) $ is also strictly decreasing and serves as an upper bound for the family of functions $ \lambda_{51_{\xi}}(r_3) $. Therefore, it is sufficient to compare the decreasing families $ \lambda_{11_{\xi}}(1.3) $ and $ \lambda_{51_{\xi}}(1.3) $. We can construct the piecewise functions $L_{130}(\xi)$ and $L_{131}(\xi)$ as show in Fig.~\ref{fig29b}. We have that, $\lambda_{11_{\xi}}(1.3)>L_{131}(\xi)>L_{130}(\xi)>\lambda_{51_{\xi}}(1.3)$. Therefore, $\lambda_{11_{\xi}}(r_{3})>\lambda_{51_{\xi}}(r_{3})$.

\begin{figure}[ht]
 \centering
  \subfloat[Functions $\lambda_{11_{\xi}}\left(2/b\right)$, $\lambda_{51_{\xi}}\left(2/b\right)$, $L_{121}(\xi)$ and $L_{122}(\xi)$ . 
  ]
  {
  \label{fig29a}
   \includegraphics[width=0.45\textwidth]{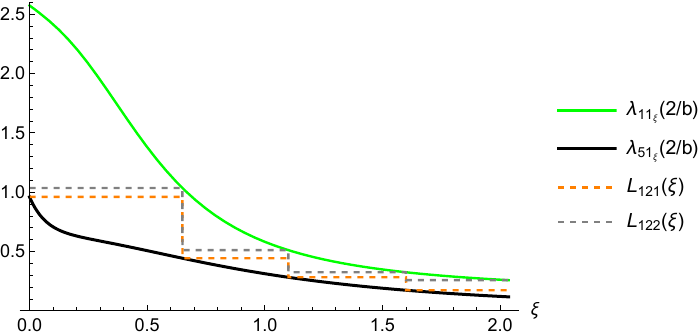}}
  \subfloat[$\lambda_{11_{\xi}}\left(1.3\right)$, $\lambda_{51_{\xi}}\left(1.3\right)$, $L_{130}(\xi)$ and $L_{131}(\xi)$
  ]
  {
   \label{fig29b}
    \ \ \includegraphics[width=0.45\textwidth]{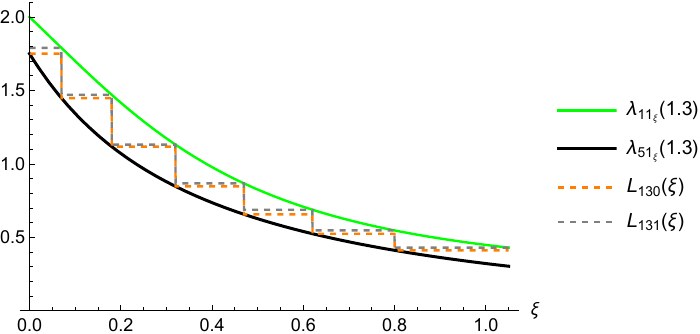}} 
   \caption{}
\end{figure}

\item For the region $J_{13}$: the family of functions $ \lambda_{51_{\xi}}(r_3) $ is strictly decreasing, with monotonically increasing functions. Thus, we consider $ \lambda_{51_{\xi}}(1.3) $, a monotonically decreasing function of $ \xi $. On the other hand, the functions in the family $ \lambda_{11_{\xi}}(r_3) $ are convex. Therefore, we can consider the family of points, $ P_6(\xi) $, formed by the intersection of the families of tangent lines to the functions in the family $ \lambda_{11_{\xi}}(r_3) $ at the endpoints of the interval for $ r_3 $. This function is decreasing; so, 
we compare $ P_6(\xi) $ and $ \lambda_{51_{\xi}}(1.3) $. Both are decreasing, with $ P_6(1.05) = 1.15788 $ and $ \lambda_{51_{0.4}}(1.3) = 0.736328 $. See FIGs.~\ref{fig30a} and~\ref{fig30b}. Therefore $\lambda_{11_{\xi}}(r_{3})>\lambda_{51_{\xi}}(r_{3})$.

\begin{figure}[ht]
 \centering
  \subfloat[Functions $\lambda_{11_{\xi}}\left(r_{3}\right)$, $\lambda_{51_{\xi}}\left(1.3\right)$ and tangent lines of $\lambda_{11_{\xi}}\left(r_{3}\right)$. 
  ]
  {
  \label{fig30a}
   \includegraphics[width=0.45\textwidth]{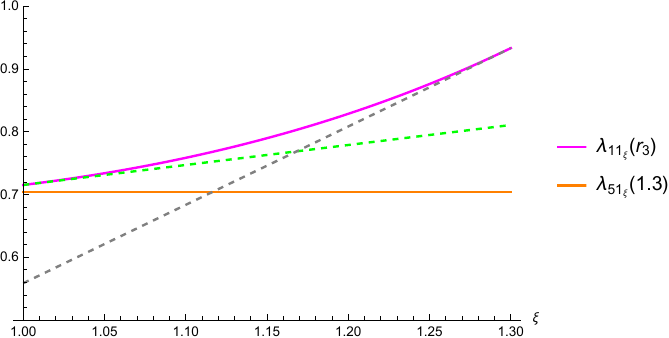}}
  \subfloat[Functions $P_{6}(\xi)$ and $\lambda_{51_{\xi}}\left(1.3\right)$.
  ]
  {
   \label{fig30b}
    \ \ \includegraphics[width=0.45\textwidth]{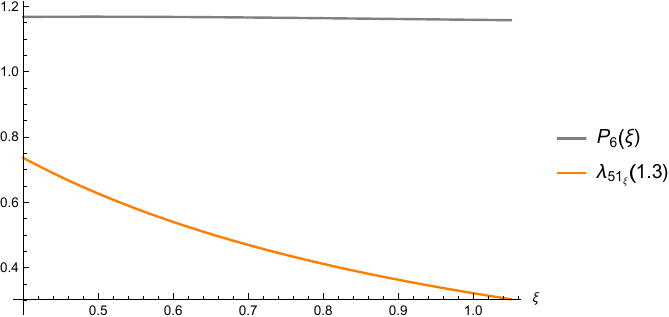}} 
   \caption{}
\end{figure}
\end{itemize}	
\end{proof}	

\begin{proposition}\label{proposicionJ14}
For the region $J_{14}$, is true that $\lambda_{11_{\xi}}(r_{3})>\lambda_{31_{\xi}}(r_{3})$, $\xi\in[1.05,b)$.
\end{proposition}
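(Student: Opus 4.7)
The plan is to follow the same strategy as in Propositions~\ref{proposicionJ10}--\ref{proposicionJ111213}: produce simple, directly comparable bounds for the two families on the narrow parameter window $\xi\in[1.05,b)$ and the bounded range $r_3\in(1,2/b)$. First I would verify the monotonicity properties that make such a bounding possible. Based on the behavior observed in the $J_{10}$--$J_{13}$ analyses, I expect $\lambda_{11_\xi}(r_3)$ to be strictly decreasing in $\xi$, so a uniform lower bound for the family is obtained by passing to the limit $\xi\to b^-$, yielding a single $r_3$-function $\lambda_{11_b}(r_3)$. Analogously, I would check that $\lambda_{31_\xi}(r_3)$ is strictly decreasing in $\xi$ on this window, so that $\lambda_{31_{1.05}}(r_3)$ serves as a uniform upper bound; if uniform monotonicity in $\xi$ fails, I would split $[1.05,b)$ into a couple of subintervals and repeat the argument piecewise, in the spirit of Proposition~\ref{aux5}.

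With both bounds depending only on $r_3$, I would compare them over $(1,2/b)$. Because $\lambda_{11_\xi}(r_3)$ is expected to be convex in $r_3$ (as exploited in Proposition~\ref{proposicionJ10}), I would introduce a piecewise tangent-line lower bound $\bar{L}_{\xi}(r_3)$ built from tangent lines to $\lambda_{11_b}$ at a few selected nodes, for instance $r_3=1$, $r_3=1.3$ and $r_3=2/b$, mimicking the piecewise construction $\bar{L}_{2_{\xi}}$ used in Proposition~\ref{proposicionJ10}. For $\lambda_{31_{1.05}}(r_3)$, I would obtain an upper bound either by direct monotone evaluation at the endpoints of each subinterval (if the function is monotone in $r_3$) or by a secant line if it turns out to be concave. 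The argument would then reduce to finitely many strict numerical inequalities between linear or evaluated quantities, checkable exactly as in the preceding propositions.

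The main obstacle I anticipate is twofold. First, confirming the monotonicity of $\lambda_{31_\xi}(r_3)$ with respect to $\xi$ is not immediate, since $\lambda_{31}$ mixes distance contributions whose $\xi$-dependence competes as $r_5$ grows, and a direct sign analysis of $\partial_\xi\lambda_{31_\xi}$ may be needed to justify either a single uniform bound or the selection of subintervals of $[1.05,b)$. Second, the interval $r_3\in(1,2/b)\approx(1,1.618)$ is wide enough that a single tangent line of $\lambda_{11_b}$ will not dominate $\lambda_{31_{1.05}}$ throughout; choosing the number and positions of the subdivision nodes so that the piecewise lower bound strictly majorizes the upper bound on each piece will require numerical experimentation, with the sharpest comparison expected near $r_3=2/b$, where both families attain their smallest values on the window.
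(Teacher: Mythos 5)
Your proposal follows essentially the same route as the paper's proof: both reduce the problem via strict decrease in $\xi$ to comparing $\lambda_{11_{b}}(r_{3})$ against $\lambda_{31_{1.05}}(r_{3})$, then subdivide $(1,2/b)$ into a few pieces and settle each by tangent/secant line comparisons exploiting convexity. The only notable detail is that on the tightest subinterval near $r_{3}=2/b$ (which you correctly flagged as the delicate spot) the paper bounds the convex \emph{difference} $\lambda_{11_{b}}-\lambda_{31_{1.05}}$ from below by the two endpoint tangent lines, whose intersection value $0.0032$ is positive, rather than bounding the two functions separately.
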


\begin{proof}
The families of functions $ \lambda_{11_{\xi}}(r_3) $ and $ \lambda_{31_{\xi}}(r_3) $ are strictly decreasing. So, 
it suffices to compare the minimal function in the family $ \lambda_{11_{\xi}}(r_3) $ with the maximal function in the family $ \lambda_{31_{\xi}}(r_3) $, which are $ \lambda_{11_b}(r_3) $ and $ \lambda_{31_{1.05}}(r_3) $, respectively. We will divide the interval of $ r_3 $ into $ (1, 1.195585] \cup (1.195585, 1.44496] \cup \left( 1.44496, 2/b \right) $.
For the subinterval $ (1, 1.195585] $, we will construct the piecewise functions $ L_{94}(r_3) $ and $ L_{95}(r_3) $ shown in Fig.~\ref{fig31a}. It is true that, $ L_{95}(r_3) > L_{94}(r_3) $, so $ \lambda_{11_b}(r_3) > \lambda_{31_{1.05}}(r_3) $.	
For the subinterval $(1.195585, 1.44496]$, we construct the line $L_{96}(r_3)$, which is tangent to the function $\lambda_{11_b}(r_3)$ at $r_3 = 1.44496$. Additionally, we define the line $L_{97}(r_3)$, which connects the endpoints of the function $\lambda_{31_{1.05}}(r_3)$. We can establish the inequality $
\lambda_{11_b}(r_3) > L_{96}(r_3) > L_{97}(r_3) > \lambda_{31_{1.05}}(r_3)$ as we show in Fig.~\ref{fig31b}.	
Finally, for the subinterval $ (1.44496, 2/b) $, we are considering the function $ \lambda_{11_{b}}(r_3) - \lambda_{31_{1.05}}(r_3) $, which is convex. To analyze it, we define the tangent lines at each endpoint of the interval; let $ L_{98}(r_3) $ and $ L_{99}(r_3) $ represent these lines, respectively. These tangent lines intersect at the point $ r_3 = 1.54371 $, where $ L_{98}(1.54371) = 0.00320258 $. Therefore the function $ \lambda_{11_{b}}(r_3) - \lambda_{31_{1.05}}(r_3) $ is strictly positive. See Fig.~\ref{fig32}.

\begin{figure}[ht]
 \centering
  \subfloat[Functions $\lambda_{11_{b}}(r_{3})$, $\lambda_{31_{1.05}}(r_{3})$, $L_{94}(r_{3})$ and $L_{95}(r_{3})$. 
  ]
  {
  \label{fig31a}
   \includegraphics[width=0.45\textwidth]{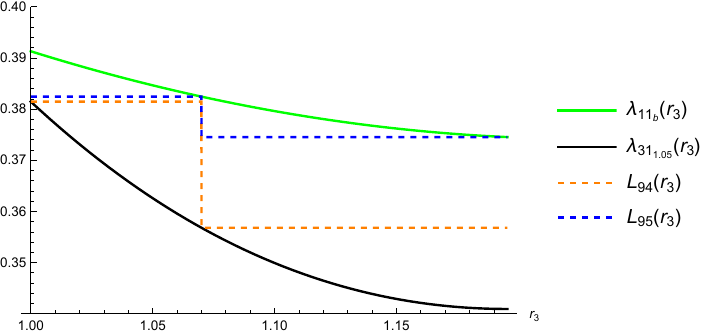}}
  \subfloat[Functions$\lambda_{11_{b}}(r_{3})$, $\lambda_{31_{1.05}}(r_{3})$, $L_{96}(r_{3})$ and $L_{97}(r_{3})$.
  ]
  {
   \label{fig31b}
    \ \ \includegraphics[width=0.45\textwidth]{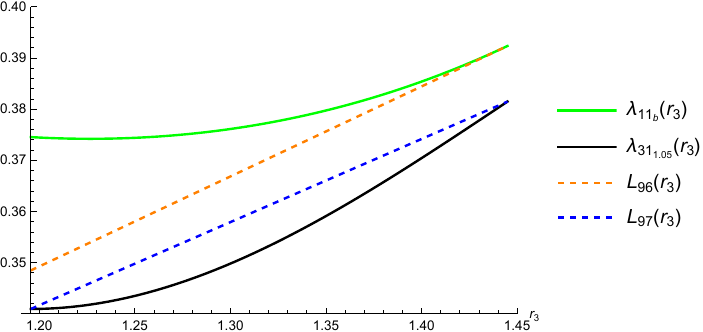}} 
   \caption{}
\end{figure}

\begin{figure}[ht]
	\centering
		\includegraphics[scale=0.8]{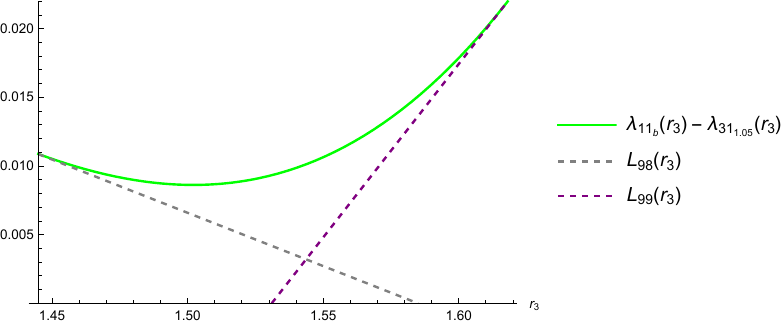}
		\caption{functions $\lambda_{11_{b}}(r_{3})-\lambda_{31_{1.05}}(r_{3})$, $L_{98}(r_{3})$ and $L_{99}(r_{3})$.}
		\label{fig32}
	\end{figure}
\end{proof}

\begin{proposition}\label{proposicionJ15}
For the region $J_{15}$, is true that $\lambda_{11_{\xi}}(r_{3})>\lambda_{52_{\xi}}(r_{3})$, $\xi\in[2.036,\infty)$.
\end{proposition}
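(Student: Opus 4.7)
The plan is to follow the same template used in Proposition \ref{proposicionJ10}, which handled the other unbounded parameter regime of $\xi$. Inspecting the explicit form of $\lambda_{11_\xi}(r_3)$ and $\lambda_{52_\xi}(r_3)$ obtained from \eqref{sistema} under the substitution $r_5=1+\xi$, I would first verify, by direct computation of partial derivatives as in the properties of Subsection~\ref{auxiliar}, that the family $\lambda_{11_\xi}(r_3)$ is strictly decreasing in $\xi$ and that each function in this family is convex and monotonically decreasing in $r_3$ on the admissible interval $(2/b,\tfrac{2}{a}(1+\xi)+1)$. In parallel, the family $\lambda_{52_\xi}(r_3)$ admits a clean asymptotic regime as $\xi\to\infty$: because $r_5=1+\xi$ is large relative to $r_1=1$, the mutual distances $r_{i5}$ are all dominated by $r_5$, and one can read off an upper bound of $\lambda_{52_\xi}(r_3)$ that is uniform in $\xi\in[2.036,\infty)$.

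Second, I would build a $\xi$-parametrized piecewise minorant $\bar{L}_{3_\xi}(r_3)$ of $\lambda_{11_\xi}(r_3)$ in the same spirit as the function $\bar{L}_{2_\xi}$ of Proposition \ref{proposicionJ10}: take tangent lines to $\lambda_{11_\xi}$ at $r_3=2/b$, at one or two intermediate nodes, and at the right endpoint, glueing them at the intersection points $r^*_i(\xi)$. Convexity of $\lambda_{11_\xi}$ in $r_3$ ensures $\bar{L}_{3_\xi}(r_3)\le\lambda_{11_\xi}(r_3)$, so it suffices to show that $\bar{L}_{3_\xi}(r_3)$ exceeds an upper envelope of $\lambda_{52_\xi}(r_3)$. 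Using the strict $\xi$-monotonicity of both families, the comparison reduces to a single inequality evaluated at $\xi=2.036$, which is then treated exactly as in Propositions \ref{proposicionJ10}--\ref{proposicionJ14} by piecewise linear bounds at a few breakpoints of $r_3$.

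The hard part is that the upper endpoint of the $r_3$-interval escapes to infinity with $\xi$, so any finite piecewise linear minorant must eventually be replaced by a genuine asymptotic argument. To control this, I would derive the leading asymptotic expansion of $\lambda_{11_\xi}(r_3)-\lambda_{52_\xi}(r_3)$ jointly in $1/r_3$ and $1/r_5$: the leading contributions are of the same order and positively signed, while the first subleading corrections can be bounded by elementary estimates involving only the explicit constants $a=\sqrt{5}+1$ and $b=\sqrt{5}-1$. Once the asymptotic inequality is established for $r_3\ge R_0$ and $\xi\ge\xi_0$ with suitable explicit $R_0,\xi_0$, the remaining compact rectangle $(2/b,R_0]\times[2.036,\xi_0]$ is handled by the same tangent-line and piecewise constant/linear bounds used throughout Section~\ref{subsec:Jn}, completing the proof.
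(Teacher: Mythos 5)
Your overall template (strict $\xi$-monotonicity, convexity in $r_3$, tangent-line minorants of $\lambda_{11_\xi}$, and a uniform upper bound for $\lambda_{52_\xi}$) matches the paper's, but there is a directional error in your reduction step that breaks the argument as written. Since the family $\lambda_{11_\xi}(r_3)$ is strictly \emph{decreasing} in $\xi$, the slice $\lambda_{11_{2.036}}(r_3)$ is an \emph{upper} envelope of the family, so ``a single inequality evaluated at $\xi=2.036$'' proves nothing for $\xi>2.036$; the lower envelope you actually need is approached as $\xi$ grows, at the opposite end from where the upper envelope of $\lambda_{52_\xi}$ sits. Moreover, the admissible $r_3$-interval at $\xi=2.036$ is only $\left(2/b,\tfrac{2}{a}(3.036)+1\right)\approx(1.236,2.876)$, which does not even cover the $r_3$-range arising for larger $\xi$, so the single-slice reduction cannot work in principle. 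The paper instead splits $[2.036,3)\cup[3,\infty)$: on the first piece it minorizes the whole family by the convex function $\lambda_{11_{3}}(r_3)$ (tangent lines at $r_3=2/b$, $3$, $3.47214$, checked at the two kink points) while majorizing $\lambda_{52_\xi}$ by the constant $0.124054$; on the second piece it minorizes by the limit function $L_{127}(r_3)=\lim_{\xi\to\infty}\lambda_{11_\xi}(r_3)$ and uses monotonicity of the difference $L_{127}-\lambda_{52_\xi}$ in both variables to reduce to one corner evaluation, $0.00093769>0$.

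Your asymptotic-expansion idea for the regime where the right endpoint $r_3=\tfrac{2}{a}(1+\xi)+1$ escapes to infinity is a legitimate alternative to the paper's limit-function device, but it is the least developed part of the proposal and exactly where the margin is thinnest: the paper's corner value $0.00093769$ shows that $\lambda_{11}$ and $\lambda_{52}$ nearly coincide near $(r_3,\xi)=(3.47214,3)$, so asserting that the leading terms are ``positively signed'' and that subleading corrections admit ``elementary estimates'' is not enough --- the sign of the difference is decided precisely by those corrections, and you would have to carry the expansion far enough, with explicit remainder bounds, to certify positivity. Until that computation (or the paper's substitute: monotonicity of $L_{127}-\lambda_{52_\xi}$ plus a single corner evaluation) is actually carried out, the unbounded regime remains open in your proof.
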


\begin{proof}
First, we divide the interval $\xi$ in $[2,036,3)\cup[3,\infty)$.

\begin{itemize}
\item \underline{Part I}: for $\xi\in[2.036,3)$\,. The family of functions $ \lambda_{11_{\xi}}(r_3) $ is strictly decreasing, so we can consider the function $ \lambda_{11_3}(r_3) $, which is convex. We will construct the tangent line functions $ L_{123}(r_3) $, $ L_{124}(r_3) $, and $ L_{125}(r_3) $ at the points $ r_3 = \frac{2}{b} $, $ r_3 = 3 $, and $ r_3 = 3.47214 $, respectively.
The lines $ L_{123}(r_3) $ and $ L_{124}(r_3) $ intersect at $ r_3 = 2.58362 $, where $ L_{123}(2.58362) = 0.131495 $. Additionally, the lines $ L_{124}(r_3) $ and $ L_{125}(r_3) $ intersect at $ r_3 = 3.29344 $, with $ L_{124}(3.29344) = 0.606535 $.
On another note, we know that $ \alpha_5:=0.124054$ serves as an upper bound for the family $ \lambda_{52_{\xi}}(r_3) $. Hence, $ L_{123}(2.58362)>\alpha_{5}$ and  $ L_{124}(2.58362)>\alpha_{5}$. See Fig.~\ref{fig33}.

\begin{figure}[ht]
	\centering
		\includegraphics[scale=0.8]{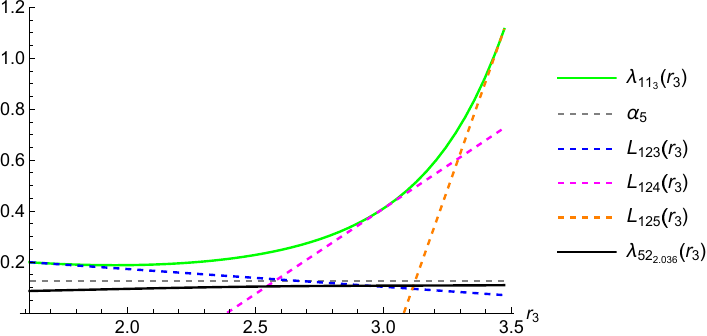}
		\caption{functions $\lambda_{11_{3}}(r_{3})$, $\alpha_{5}$, $L_{123}(r_{3})$, $L_{124}(r_{3})$, $L_{125}(r_{3})$ and $\lambda_{52_{2.036}}(r_{3})$.}
		\label{fig33}
	\end{figure}

\item \underline{Part II}: for $\xi\in[3,\infty)$\,. Since the family of functions $\lambda_{11_{3}}(r_{3})$ is strictly decreasing, we can define the function $ L_{127}(r_3) := \lim_{\xi \to \infty} \lambda_{11_{\xi}}(r_3)$, which is a function monotonically decreasing. Now, we consider the family $ L_{127}(r_3) - \lambda_{52_{\xi}}(r_3) $, which is a family strictly increasing and the functions of the family are monotonically decreasing. Hence, a lower bound for this family is $L_{127}((2/a) (1 + 3) + 1) - \lambda_{52_{3}}((2/a) (1 + 3) + 1)=0.00093769$.
Hence, we have $ \lambda_{11_{\xi}}(r_3) > \lambda_{52_{\xi}}(r_3) $. See Fig.~\ref{fig34}.
\begin{figure}[ht]
	\centering
		\includegraphics[scale=0.8]{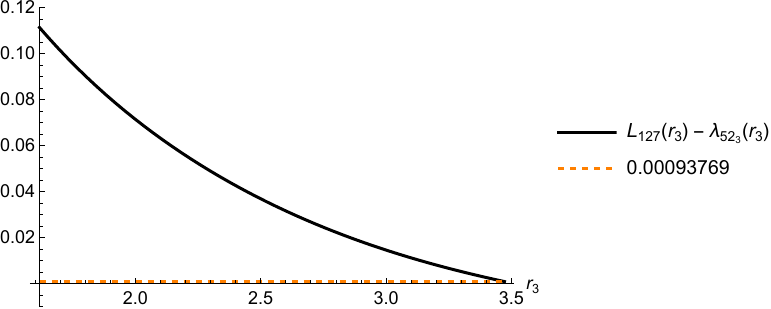}
		\caption{function $ L_{127}(r_3) - \lambda_{52_{3}}(r_3) $ and 0.00093769.}
		\label{fig34}
\end{figure}
\end{itemize}
\end{proof}

\begin{proposition}\label{proposicionJ16}
For the region $J_{16}$, the system \eqref{sistema} is not satisfied.
\end{proposition}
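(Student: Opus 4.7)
The distinguishing feature of $J_{16}$ is that the only critical point of $IU^{2}$ identified so far, namely $(1,1)$ from Theorem~\ref{th:existence}, lies on the closure of $J_{16}$ at its lower-left corner. Every admissible difference $\lambda_{ij_{\xi}}(r_{3})-\lambda_{kl_{\xi}}(r_{3})$ therefore tends to zero as $(r_{3},r_{5})\to(1,1)$, so the template of Propositions~\ref{proposicionJ1}--\ref{proposicionJ15}, which produces a uniform sign for some pair of $\lambda$'s on an entire region, cannot succeed uniformly on $J_{16}$. The plan is to split $J_{16}=J_{16}^{c}\cup(J_{16}\setminus J_{16}^{c})$, where $J_{16}^{c}=\{(r_{3},r_{5})\colon 1<r_{3}\le 1+\delta,\ 1<r_{5}\le 1+\delta\}$ is a corner square of small side $\delta$, and to treat the two pieces by distinct arguments.

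On the corner square $J_{16}^{c}$ the idea is to lift the Hessian argument of Theorem~\ref{th:existence} to the whole square. Both eigenvalues of $H(1,1)$ are strictly positive, with an explicit lower bound available from the quadratic-formula expression applied to the matrix displayed in that theorem. Since the Hessian entries of $IU^{2}$ are $C^{1}$ on $\hat{S}$, an $L^{\infty}$ bound on the third derivatives of $IU^{2}$ combined with a Taylor remainder estimate yields a quantitative $\delta>0$ for which $H(r_{3},r_{5})$ remains positive definite throughout $\overline{J_{16}^{c}}$. Strict convexity of $IU^{2}$ on the convex set $\overline{J_{16}^{c}}$ then forces $(1,1)$ to be its unique critical point there, and since $(1,1)\notin J_{16}^{c}$, no point of $J_{16}^{c}$ solves \eqref{sistema}.

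On the complement $J_{16}\setminus J_{16}^{c}$, which is compact and bounded away from $(1,1)$, I would adapt the $\lambda$-comparison strategy used on the neighbouring regions. Propositions~\ref{proposicionJ111213} and~\ref{proposicionJ14} establish sign-definite comparisons for larger values of $\xi=r_{5}-1$; for the small-$\xi$ range $\xi\in(0,0.4)$ relevant to $J_{16}$, I expect either $\lambda_{11_{\xi}}$ versus $\lambda_{51_{\xi}}$ (matching the $J_{13}$ choice) or $\lambda_{11_{\xi}}$ versus $\lambda_{31_{\xi}}$ (matching the $J_{14}$ choice) to retain a fixed sign away from $(1,1)$. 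Since the L-shaped complement is thin, I would cover it by a finite partition into sub-rectangles and establish the sign in each piece by the piecewise tangent-line technique of Propositions~\ref{proposicionJ111213}--\ref{proposicionJ14}, starting with the sub-rectangles farthest from $(1,1)$, where the bounds are loosest, and refining toward the corner.

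The main obstacle is the calibration of $\delta$. Taking $\delta$ too small forces the complement to contain points arbitrarily close to $(1,1)$, where the $\lambda$-differences vanish in the limit and the tangent-line bounds grow very tight; taking $\delta$ too large makes the third-derivative bound needed to preserve positive-definiteness of $H$ over $\overline{J_{16}^{c}}$ harder to certify. A reasonable compromise is $\delta\approx 0.05$--$0.1$, verified by checking $\mathrm{tr}\,H(r_{3},r_{5})>0$ and $\det H(r_{3},r_{5})>0$ pointwise on a refined mesh through the paper's piecewise-bound scheme, while the complement is subdivided into a few sub-rectangles on which the chosen $\lambda$-difference is shown strictly positive as in the preceding propositions.
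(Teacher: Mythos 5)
Your decomposition is genuinely different from the paper's, and your diagnosis of the difficulty is only half right. The paper does \emph{not} abandon the pairwise-comparison template on $J_{16}$: it slices $J_{16}$ into vertical bands in $r_{3}$, namely $(1,1.13067]$, $(1.13067,1.152781]$, $(1.152781,1.201923]$ and $[1.201923,1.3]$, and on three of these bands it asserts a strict (though necessarily non-uniform, since the difference degenerates at the corner $(1,1)$) inequality between a pair of multipliers --- $\lambda_{21}<\lambda_{41}$, then $\lambda_{21}<\lambda_{11}$, then $\lambda_{31}<\lambda_{11}$. On the remaining middle band it uses a tool your plan does not contemplate at all: the sign of a single multiplier, showing that $\lambda_{21_{r_{3}}}(r_{5})$ never vanishes there (the zero set being the curve $\bar{r}_{5}(r_{3})$), which is incompatible with the system \eqref{sistema}. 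Your replacement for the corner --- extending the Hessian positivity of Theorem~\ref{th:existence} to a small square and invoking strict convexity of $IU^{2}$ to exclude a second critical point --- is logically sound as a skeleton (a solution of \eqref{sistema} is a critical point of the restricted $IU^{2}$, and $(1,1)\notin J_{16}$), and it is arguably a cleaner way to handle the degenerate corner than a non-uniform strict inequality.

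However, as written the proposal has two genuine gaps. First, the positive-definiteness of $H(r_{3},r_{5})$ on $\overline{J_{16}^{c}}$ is never established: no third-derivative bound is computed, no explicit $\delta$ is certified, and the fallback you offer --- ``checking $\operatorname{tr}H>0$ and $\det H>0$ pointwise on a refined mesh'' --- is numerical sampling with no error control, which proves nothing and is precisely the kind of computer-assisted verification the paper's stated aim is to avoid. Second, the complement $J_{16}\setminus J_{16}^{c}$, which for $\delta\approx 0.05$--$0.1$ is almost all of $J_{16}$, is treated entirely conditionally: ``I would adapt,'' ``I expect either $\lambda_{11_{\xi}}$ versus $\lambda_{51_{\xi}}$ \dots or $\lambda_{11_{\xi}}$ versus $\lambda_{31_{\xi}}$ to retain a fixed sign.'' No specific pair is verified on any specific sub-rectangle, and the paper's own case analysis (which needs three different comparison pairs \emph{plus} the non-vanishing of $\lambda_{21}$ on a band where, evidently, no pairwise comparison closes) is strong evidence that a single pair will not work and that the partition cannot be guessed in advance. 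Until the Hessian bound is made quantitative by an actual estimate and the complement is covered by explicitly verified inequalities in the style of Propositions~\ref{proposicionJ111213}--\ref{proposicionJ14}, this is a plausible program rather than a proof.
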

\begin{proof}
There exists a function $ \bar{r}_{5}(r_{3})$ such that $ \lambda_{21}(r_{3}, \bar{r}_{5}(r_{3})) = 0 $. Then, we divide the interval of $r_{3}$ into $(1,1.152781]\cup(1.152781, 1.201923]\cup (1.201923,1.3]$.  See Fig.~\ref{fig35}. 

\begin{figure}[ht]
	\centering
		\includegraphics[scale=0.8]{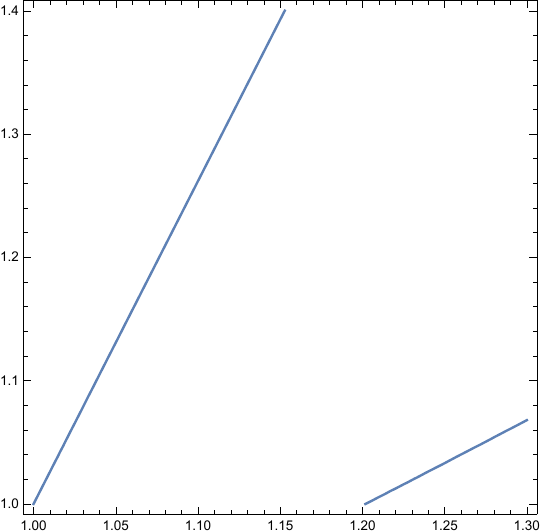}
		\caption{function $\lambda_{21}(r_{3},r_{5})$ for $r_{3}\in(1,1.3]$ and $r_{5}\in(1,1.4]$.}
		\label{fig35}
	\end{figure}
Is true that (See Figs.~\ref{fig36a},~\ref{fig36b} and~\ref{fig36c}),

\begin{equation*}
\begin{split}
\lambda_{21_{r_{3}}}(r_{5})&<\lambda_{41_{r_{3}}}(r_{5})\,, r_{3}\in (1, 1.13067],\\
\lambda_{21_{r_{3}}}(r_{5})&<\lambda_{11_{r_{3}}}(r_{5})\,, r_{3}\in (1.13067, 1.152781], \\
\lambda_{31_{r_{3}}}(r_{5})&<\lambda_{11_{r_{3}}}(r_{5})\,, r_{3}\in [1.201923,1.3]\,.
\end{split}
\end{equation*}

\begin{figure}[ht]
 \centering
  \subfloat[functions $\lambda_{21_{r_{3}}}(r_{5})$ and $\lambda_{41_{r_{3}}}(r_{5})$ for $r_{3}=1.008$. 
  ]
  {
  \label{fig36a}
   \includegraphics[width=0.45\textwidth]{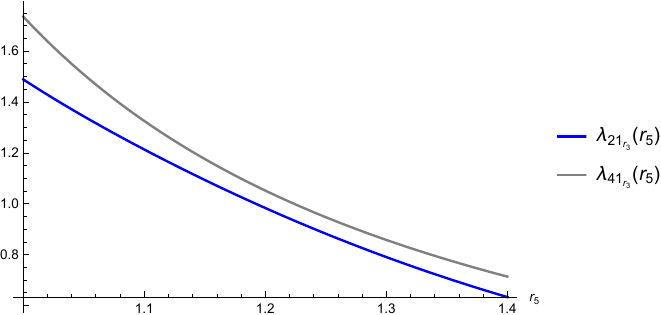}}
  \subfloat[functions $\lambda_{21_{r_{3}}}(r_{5})$ and $\lambda_{11_{r_{3}}}(r_{5})$ for $r_{3}=1.143$.
  ]
  {
   \label{fig36b}
    \ \ \includegraphics[width=0.45\textwidth]{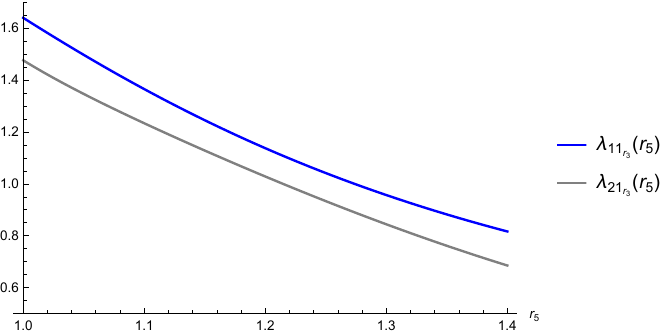}} \\

\subfloat[functions $\lambda_{11_{r_{3}}}(r_{5})$ and $\lambda_{31_{r_{3}}}(r_{5})$ for $r_{3}=1.2919$.
  ]
  {
  \label{fig36c}
   \includegraphics[width=0.45\textwidth]{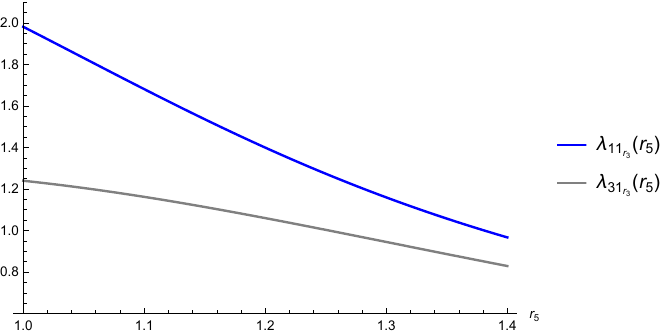}}

   \caption{}
\end{figure}
For the interval $(1.152781, 1.201923]$ the function $\lambda_{21_{r_{3}}}(r_{5})$ never vanishes.
\end{proof}	

\section{Conclusion}

The proof of Theorem \ref{th:existence} follows from the propositions discussed in Subsection \ref{subsec:Jn}. We have demonstrated the existence and uniqueness of star central configurations in the 5-body problem with equal masses. Specifically, we have shown that the only possible star central configuration corresponds to a regular pentagon. This finding confirms that, for $n \leq 5$, star configurations are limited to regular polygons, while for $n \geq 6$, this uniqueness is no longer applicable. These results lay the groundwork for future research on central configurations for larger values of $n$ and other types of symmetries.
%
%
%

\nocite{*}
\bibliography{aipsamp}%
\def\cprime{$'$} \def\lfhook#1{\setbox0=\hbox{#1}{\ooalign{\hidewidth
  \lower1.5ex\hbox{'}\hidewidth\crcr\unhbox0}}} \def\cprime{$'$}
  \def\cprime{$'$} \def\cprime{$'$} \def\cprime{$'$} \def\cprime{$'$}
  \def\cprime{$'$} \def\cprime{$'$}
\providecommand{\bysame}{\leavevmode\hbox to3em{\hrulefill}\thinspace}
\providecommand{\MR}{\relax\ifhmode\unskip\space\fi MR }
\providecommand{\MRhref}[2]{%
  \href{http://www.ams.org/mathscinet-getitem?mr=#1}{#2}
}
\providecommand{\href}[2]{#2}


\normalsize
\end{document}